\documentclass[11pt]{article}

\usepackage{fullpage} 
\usepackage[onehalfspacing]{setspace}
\usepackage{amsfonts}
\usepackage{amssymb}
\usepackage{amsmath}
\usepackage{amsthm}
\newtheorem{theorem}{Theorem}[section]
\newtheorem{proposition}[theorem]{Proposition}
\newtheorem{lemma}{Lemma}

\theoremstyle{definition}
\newtheorem{assumption}[theorem]{Assumption}
\theoremstyle{remark}

\theoremstyle{plain}
\usepackage{float}
\usepackage{adjustbox}
\usepackage{graphicx}
\usepackage{booktabs}
\usepackage{multirow}
\usepackage{xcolor}
\usepackage{enumitem}
\usepackage{makecell}
\usepackage{array}
\usepackage{fullpage} 
\usepackage{rotating}
\usepackage{kotex} 
\usepackage{url}
\usepackage[blocks]{authblk}
\usepackage[round, authoryear]{natbib}   
\usepackage{hyperref}

\begin{document}

\title{Bayesian Donor Set Selection in Synthetic Controls}

\author[1]{Seul Lee}
\author[1]{Johan Lim\thanks{Corresponding author. Email: johanlim@snu.ac.kr}}
\author[2]{Joungyoun Kim}
\author[3]{Xinlei Wang}

\affil[1]{Department of Statistics, Seoul National University, Seoul, Korea}
\affil[2]{Department of Artificial Intelligence, University of Seoul, Seoul, Korea}
\affil[3]{Department of Mathematics, University of Texas at Arlington, Arlington, TX, USA}

\date{}

\maketitle

\begin{abstract}
\noindent  
The Synthetic Control Method (SCM) is a widely used approach for assessing the effects of interventions by constructing a synthetic counterfactual using a donor set of untreated units. However, the effectiveness of SCM heavily relies on the careful selection of an appropriate donor set. In this paper, we propose a Bayesian hierarchical model that performs donor set selection while preserving the standard SCM simplex constraint on donor weights. Unlike approaches that assume a fixed donor set, our model allows for the simultaneous estimation of the synthetic control weights and the active donor set. By using a hierarchical Gamma--Bernoulli construction for the donor weights, the proposed model assigns posterior mass to simplex faces and allows exact zero weights for excluded donors. We establish a posterior donor-set consistency result under a simplified pre-intervention model. Through numerical simulations, we show that our model improves donor recovery and weight estimation when the donor pool contains irrelevant or weakly related units, while remaining competitive in full-donor settings. Finally, we apply our model to the GDP trajectory of West Germany, illustrating its practical applicability. Our findings suggest that incorporating donor set selection offers a more parsimonious and flexible extension of existing Bayesian synthetic control methods.

\medskip 
\noindent{\bf Keywords:} Bayesian hierarchical model, donor set selection, GDP of West Germany, synthetic control method

\end{abstract}

\baselineskip 18pt  

\section{Introduction} 

We often imagine what might have happened if certain events had not occurred. 
This type of imagination goes beyond personal curiosity and serves as an essential tool in various academic fields, including policy analysis, 
economics, and medicine. \citet{abadie:2003} proposed a systematic framework for modeling such counterfactual imagination, which later 
developed into the Synthetic Control Method (SCM) \citep{abadie:2010}.
SCM estimates the counterfactual trajectory of a treated unit by constructing a weighted combination of untreated donor units, 
with the donor weights restricted to the simplex. The resulting synthetic control represents the outcome the treated unit would have experienced in the absence of the intervention.

Beyond this basic setup, SCM has been extended in several directions. Matrix completion approaches have been proposed to impute incomplete panels \citep{Bai:2021, Athey:2021}, and generalized SCM frameworks address staggered adoption and multi-unit settings \citep{Michael:2021, abadie:2021}. 
Bayesian approaches have also been proposed to predict counterfactual outcomes and quantify uncertainty. 
Some approaches use Bayesian latent factor, structural time-series, state-space, or Gaussian-process models to predict the counterfactual outcome \citep{Brodersen:2015, Pang:2022, Klinenberg:2023, Ben-Michael:2023}. 
Others relax the standard simplex constraints by assigning shrinkage or sparsity-inducing priors to donor coefficients \citep{kim:2020}.

These developments share a common goal: to construct reliable and accurate synthetic controls under various data settings. 
The quality of a synthetic control still depends on the donor units used to construct it. 
As discussed by \citet{abadiebastida:2022}, a larger donor pool is not always better. 
Using a more compact set of highly comparable donor units can help reduce overfitting and interpolation biases. 
This motivates the need for donor set selection.

Several donor selection approaches have been proposed to address instability caused by large or heterogeneous donor pools. \citet{abadiebastida:2022} suggest trimming donors based on predictor similarity and using a pre-treatment validation period, a strategy formalized by \citet{cerulli:2024}. Other automated procedures include functional PCA–based screening \citep{Bayani:2021, greathouse:2023}, low-rank SVD decompositions \citep{amjad:2018}, and clustering-based donor grouping that identifies latent similarity structures before estimating synthetic controls \citep{Rho:2025}. In these approaches, the donor pool is first restricted or transformed, and the synthetic control weights are then estimated in a separate step.

In this paper, we propose a Bayesian variable selection framework for synthetic control that preserves the hard simplex structure of the standard SCM weights. Specifically, we represent the donor weights by normalized Gamma variables multiplied by Bernoulli donor-inclusion indicators, so that the selected donors determine the active face of the simplex. This construction keeps the actual synthetic control weights nonnegative and summing to one, while allowing the donor set itself to be inferred within the Bayesian model.

Our approach is related to Bayesian formulations that retain or relax the simplex structure, but differs in how donor selection and weight constraints are combined. 
\citet{Goh:2022} preserve the simplex constraint on donor weights while allowing a parallel shift of the donor convex hull through an intercept term. They exploit the duality between constrained least squares and Bayesian MAP estimation to formulate this optimization problem as a Bayesian synthetic control method. 
However, they do not consider the selection of donor set. 
\citet{martinez:2024} study a Bayesian synthetic control framework and connect Bayesian and frequentist inference through a Bernstein--von Mises style result, but their model does not incorporate donor set selection. 
Their theoretical formulation assigns a normal prior to the actual regression coefficients, centered at a latent mean vector that is constrained to lie on the simplex.\footnote{Their theoretical discussion includes a Gaussian prior formulation in which the regression coefficients are centered at a simplex-constrained mean vector, while their implemented \texttt{bsynth} specification places the synthetic control weights directly on the simplex.}  The simplex constraint is imposed indirectly through the prior mean rather than directly on the actual weights, yielding a soft simplex formulation.
Recently, \citet{Xu:2025} adopt a related soft simplex formulation and extend it to high-dimensional synthetic control 
by adapting a spike-and-slab-type prior to select a sparse set of donors. 
In their model, the selected regression coefficients are allowed to deviate from a simplex-constrained latent mean vector. 
In contrast, our model combines donor set selection with a hard simplex constraint imposed directly on the selected donor weights.

Beyond these Bayesian formulations, our approach also differs from existing donor selection methods such as fPCA-SYNTH \citep{Bayani:2021, greathouse:2023}, SVD-based trimming \citep{amjad:2018}, and ClusterSC \citep{Rho:2025}, which first select or cluster the donor pool and then apply a synthetic control procedure. Our method performs donor set selection, weight estimation, and counterfactual inference jointly within a single posterior framework.

The paper is organized as follows. Section 2 reviews the standard SCM, the Bayesian SCM of \citet{martinez:2024}, and several recent donor-selection procedures, including fPCA-SYNTH and ClusterSC. In Section 3, we introduce our proposed Bayesian SCM, establish a posterior donor set consistency result, and describe the Markov chain Monte Carlo (MCMC) procedure used for posterior computation. In Section 4, we numerically investigate how the proposed donor set selection mechanism improves counterfactual prediction compared to existing approaches. In Section 5, we apply our model and the existing methods to the GDP trajectory data of West Germany. Finally, in Section 6, we conclude the paper with a brief review and a discussion of possible extensions.

\section{Review of existing synthetic control methods}
\subsection{Notation}
In this setup, we consider $J+1$ units observed over time periods $1, 2, \dots, T$. The first unit, $i=1$, is exposed to the intervention of interest starting from period $T_0+1$ through $T$, while the remaining $J$ units, $i=2,\dots,J+1$, serve as a pool of untreated control candidates. 
That is, the $J$ units are not affected by the intervention. 
We use the superscripts $N$
 and $I$ to indicate the no-intervention and intervention states, respectively. Specifically, $Y_{it}^N$ denotes the outcome for unit $i$ at time $t$ in the absence of the intervention, so that $Y_{1t}^N$ corresponds to the potential outcome under the counterfactual scenario where the intervention did not occur. Conversely, let $Y_{it}^I$ denote the outcome if unit $i$ is subjected to the intervention during periods $T_0+1$ to $T$, where $Y_{1t}^I$ represents the observed post-intervention outcome of the treated unit. Since we assume that there is no intervention effect in the pre-intervention period $t \in \{1,\dots,T_0\}$, it follows that $Y_{it}^I=Y_{it}^N$ for all $i$ for $t=1,2,\dots,T_0$.
Therefore, when we aim to analyze the intervention effect on the treated unit for the period after $T_0$, we estimate $d_{1t} = Y_{1t}^I - Y_{1t}^N = Y_{1t} - Y_{1t}^N$ for $t > T_0$.

\begin{table}[htbp]
\centering
\renewcommand{\arraystretch}{1.1}
\begin{tabular}{l|c|c}
\hline
       & pre-intervention ($t \le T_0$)
       & post-intervention ($t > T_0$)  \\
\hline
observed outcome
  & \multirow{2}{*}{$Y_{1t}=Y_{1t}^N$} 
  & $Y_{1t}^I=Y_{1t}^N+d_{1t}$ \\ \cline{1-1} \cline{3-3}                       
counterfactual outcome 
  &   &  $Y_{1t}^N$        \\ \hline
\end{tabular}
\caption{Summary of the notation for the first unit. Prior to the intervention, the observed outcome is identical to the counterfactual outcome. After the intervention, the observed outcome differs from the counterfactual outcome by $d_{1t}$.}
\label{tbl1:term}
\end{table}

\subsection{Standard SCM by \citet{abadie:2003}} 

SCM was first introduced by \citet{abadie:2003} to predict counterfactual outcomes to measure the impact of an intervention in repeated observations. 
Specifically, SCM generates an estimate of what the treated unit's outcome would have been in the absence of the intervention by creating a weighted 
combination of control units in the donor pool. The weights used to form this synthetic control, denoted as $\mathbf{w} = (w_2, \cdots, w_{J+1})'$, represent the relative contribution of each donor unit.

In this framework, $\mathbf{X}_1$ is a $k \times 1$ vector representing the pre-intervention characteristics of the treated unit, while $\mathbf{X}_0$ is a $k \times J$ matrix containing the same characteristics for the control units, where $k$ is the number of predictors used for matching, including both observed covariates and pre-treatment outcome summaries. The optimal set of weights $\mathbf{w}^* = (w_2^*, \cdots, w_{J+1}^*)'$ is 
determined by minimizing the difference between the treated unit's characteristics and those of the synthetic control, expressed as 
\begin{equation*}
\|\mathbf{X}_1-\mathbf{X}_0\mathbf{w}\|' V \|\mathbf{X}_1-\mathbf{X}_0\mathbf{w}\| = \sum ^k _{h=1} v_h (X_{h1} - w_2 X_{h2}- \cdots - w_{J+1} X_{h J+1})^2 ,
\end{equation*}
subject to $w_j \geq 0$ for all $j$ and $\sum_j w_j = 1$.
The matrix $V$ is a $k\times k$ diagonal matrix with nonnegative components $v_1,\dots,v_k$, reflecting the relative importance of each predictor.
The estimation of the intervention effect over the post-intervention period is given by
\begin{equation} \label{eqn:intervention-effect-est} 
\hat{d}_{1t}
=Y_{1t}^I - \sum^{J+1}_{j=2} w_j^* Y_{jt}.
\end{equation}

In \cite{abadie:2010}, the authors demonstrate the validity of the SCM by showing that under appropriate assumptions, the bias bound converges to zero as $T_0$ becomes large. This result implies that the longer the pre-intervention period, the more accurate the SCM estimator becomes. They assume the following factor model for $Y_{it}^N$, representing the outcome that would have been observed for unit $i$ at time $t$ in the absence of the intervention.
\begin{equation}\label{eqn:lin-factor}
Y_{it}^N=\delta_t +\boldsymbol{\theta}_t \mathbf{Z}_i +\boldsymbol{\lambda}_t\boldsymbol{\mu}_i+\epsilon_{it}
\end{equation}
In this specification, $\delta_t$ captures common time-specific shocks that affect all units equally at time $t$; $\boldsymbol{\theta}_t \in \mathbb{R}^{1 \times r}$ is a vector of time-varying coefficients associated with the observed covariates $\mathbf{Z}_i \in \mathbb{R}^{r \times 1}$; $\boldsymbol{\lambda}_t \in \mathbb{R}^{1 \times F}$ and $\boldsymbol{\mu}_i \in \mathbb{R}^{F\times1}$ are, respectively, vectors of unobserved common factors and unit-specific factor loadings, with $\boldsymbol{\lambda}_t \boldsymbol{\mu}_i$ capturing time-varying unobserved heterogeneity across units. The term $\epsilon_{it}$ is an idiosyncratic error, typically assumed to have mean zero and bounded variance.

Under the above factor model \eqref{eqn:lin-factor}, the outcome for a synthetic control unit indexed by a weight vector $\mathbf{w}=(w_2, \cdots, w_{J+1})'$ is expressed as:
\begin{equation*}
\sum^{J+1}_{j=2}w_j Y_{jt} = \delta_t +\boldsymbol{\theta}_t\sum^{J+1}_{j=2}w_j \mathbf{Z}_j+\boldsymbol{\lambda}_t\sum^{J+1}_{j=2}w_j \boldsymbol{\mu}_j +\sum^{J+1}_{j=2}w_j\epsilon_{jt}.
\end{equation*}
Assuming there exists a set of weights $\mathbf{w}^*=(w_2^*, \cdots, w_{J+1}^*)'$ satisfying 
\begin{equation*}
\sum^{J+1}_{j=2}w_j^* Y_{j1}=Y_{11}, \cdots, \sum^{J+1}_{j=2}w_j^* Y_{jT_0}=Y_{1T_0},\;\; \text{and}\;\; \sum^{J+1}_{j=2}w_j^*\mathbf{Z}_j = \mathbf{Z}_1
\end{equation*}
and, further, if we assume that $\sum_{t=1}^{T_0} \lambda_t' \lambda_t$ is non-singular, the expression below holds
\begin{equation*}
Y_{1t}^N-\sum^{J+1}_{j=2}w_j^*Y_{jt}=\sum^{J+1}_{j=2}w_j^*\sum^{T_0}_{s=1}\lambda_t \Big( \sum^{T_0}_{n=1}\lambda_n'\lambda_n \Big)^{-1} \lambda_s'(\epsilon_{js}-\epsilon_{1s})-\sum^{J+1}_{j=2}w_j^*(\epsilon_{jt}-\epsilon_{1t}).
\end{equation*}
In the paper, the authors compute the bias and show that  
\begin{equation*}
\left| E(Y_{1t}^N - \sum\limits_{j=2}^{J+1} w_j^* Y_{jt}^N) \right|, \;\; \forall t > T_0 
\end{equation*}
is bounded and converges to zero as $T_0$ increases. In consequence, the synthetic control estimator 
$\sum_{j=2}^{J+1} w_j^* Y_{jt}$
consistently approximates the counterfactual outcome $Y_{1t}^N$ for post-treatment periods.

\subsection{Bayesian SCM by \citet{martinez:2024}}

\citet{martinez:2024} demonstrate that SCM can be extended within a Bayesian framework and discuss its consistency. When considering the standard SCM under a covariate-free linear factor model, it can be formulated as follows
\begin{equation}
Y^N_{it} = \boldsymbol{\lambda}_t\boldsymbol{\mu}_i+\epsilon_{it}, ~~ 
 Y^I_{it} = d_{it}+Y^N_{it},  ~ \forall t
\label{bsynth:trteff}
\end{equation}
where $d_{it}=0$ for $t \le T_0$ and, for $t>T_0$, $d_{it}$ is an intervention effects that could be nonzero. 
 In the paper, the authors simplify the model by proceeding with the following assumptions: (i) a single factor model where $\mu_i, \lambda_t \in \mathbb{R}$, with $\lambda_t \overset{\text{i.i.d.}}{\sim} N(0, \sigma^2)$, and (ii) the idiosyncratic shocks $\epsilon_{it} \overset{\text{i.i.d.}}{\sim} N(0, 1)$.
 With the assumptions, the conditional distribution of $Y_{1t}$ given $\mathbf{Y}_{Jt} = (Y_{2t}, \cdots, Y_{J+1t})$ is expressed as
\begin{equation*}
Y_{1t}\mid\mathbf{Y}_{Jt}=\mathbf{y}_{Jt} \sim \text{N}(\tilde{m}, \tilde{\Sigma}),
\end{equation*}
where 
\begin{equation*}
\begin{aligned}
\tilde{m}=\sum^{J+1}_{j=2} w_j (\boldsymbol{\mu}, &\sigma) y_{jt}, \;
\tilde{\Sigma} = 1+\mu_1 \sigma^2  \left(1-\sum^{J+1}_{j=2}w_j(\boldsymbol{\mu}, \sigma)\lambda_j \right),\\
&w_j(\boldsymbol{\mu}, \sigma) = \frac{\sigma^2 \mu_1 \mu_j}{1+\sum^{J+1}_{j=2}\mu^2_j \sigma^2}.
\end{aligned}
\end{equation*} 
The distribution of the treated unit $Y_{1t}$ given $\mathbf{Y}_{Jt}$ depends on the weights $w_j (\boldsymbol{\mu}, \sigma)$.

Since this predictive distribution depends on weights that reflect latent structure, \citet{martinez:2024} propose a Bayesian model that retains the key features of the standard SCM by treating these weights as random variables and inferring them via posterior distributions.
To be specific, their model assumes that
\begin{equation*}
\begin{aligned}
y_{1t}\mid \mathbf{y}_{Jt}, \mathbf{w},\sigma_y 
\sim  \text{N}(\mathbf{y}_{Jt}' \mathbf{w}, \sigma^2_y),
\end{aligned}
\end{equation*}
and
\begin{equation}
\begin{aligned}
w_j\mid \mathbf{y}_{Jt} 
&\sim \text{N}(m_j, \tau^2_j),\\
\mathbf{m} = (m_2,m_3,\ldots,m_{J+1}) 
&\sim \text{Dir}(1,1,\ldots,1).
\end{aligned}
\label{martinez:Dir}
\end{equation}
Under the assumption of the linear factor model \eqref{bsynth:trteff} and appropriate regularity conditions, it has been established that the Bayesian estimator, obtained by plugging in the posterior mean estimate of the synthetic control weights $E(\mathbf{w}|\mathbf{y}_{T_0})$ into $\mathbf{y}_{Jt}'\mathbf{w}$, is asymptotically consistent. Specifically, as the number of pre-treatment periods $T_0$ and the size of the donor pool $J$ increase, the Bayesian posterior predictive estimate $\mathbf{y}_{J,{T_0+1}}'E(\mathbf{w}|\mathbf{y}_{T_0})$ converges in probability to the true predictive component of the treated unit's counterfactual outcome: $\boldsymbol{\lambda_{T_0+1}\mu_1}$.

We note that the prior in \eqref{martinez:Dir} imposes the simplex constraint on the mean parameters $\mathbf{m}$, 
rather than directly on the weights $\mathbf{w}$. As a result, the posterior draws of $\mathbf{w}$ need not satisfy 
the standard SCM constraints of nonnegativity and unit-sum weights. However, the same authors also provide 
the R package \texttt{bsynth}, which further extends the Bayesian SCM in the paper to allow for additional covariates and a Gaussian process. 
In the \texttt{bsynth} implementation, the synthetic control weights are placed directly on the simplex by assigning
\begin{equation}
\begin{aligned}
\mathbf{w}\sim \mathrm{Dir}(1,\ldots,1),
\end{aligned}
\label{martinez:Dir_imple}
\end{equation}
rather than assigning the Dirichlet prior to the latent mean vector $\mathbf{m}$ as in \eqref{martinez:Dir}. 
Therefore, the implementation used in \texttt{bsynth} is not identical to the formulation originally presented in 
\citet{martinez:2024}.


\subsection{Two stage procedure}

Some recent extensions of the synthetic control method adopt a two–stage 
framework: they first identify a subset of donors that are most comparable to the treated unit, and then estimate synthetic control weights using only this restricted donor set. We highlight two very recent two-stage donor set selection approaches, fPCA-SYNTH and ClusterSC.

\subsubsection{fPCA-SYNTH}

\citet{Bayani:2021} proposed a robust PCA-based synthetic control method to address several limitations of the conventional SCM, 
including the subjective selection of donor units and covariates, the challenge of screening donors in 
high-dimensional pre-intervention trajectories, and the sensitivity of SCM to outliers, missing data, and measurement errors. 
This algorithmic idea was later discussed and applied by \citet{greathouse:2023} under the name fPCA-SYNTH.

The procedure can be summarized in three steps. First, functional principal component analysis (FPCA) is applied to the pre-intervention outcome trajectories of 
the treated and untreated units, reducing each $T_0$-dimensional trajectory to a low-dimensional vector of FPC scores. 
While \citet{Bayani:2021} describes this step using a local-regression-based FPCA procedure following \citet{Li:2010}, the \texttt{mlsynth} implementation associated with \citet{greathouse:2023} carries out the FPCA step by first smoothing the pre-intervention trajectories with a cubic B-spline basis expansion and then computing the FPC scores. Second, K-means clustering is applied to the FPC scores, and 
they select untreated units in the same cluster 
with the treated unit as the donor pool. Third, robust PCA is applied to the selected donor outcome matrix over the full observation period. 
Let $Y_{D^*} = [Y_{D^*}^{-}, Y_{D^*}^{+}]$ denote the outcome matrix of the selected donor set \(D^*\), partitioned into pre- and post-intervention periods. Following \citet{Bayani:2021}, this matrix is decomposed as
\[
Y_{D^*}=L+S,
\]
where \(L=[L^{-},L^{+}]\) represents the low-rank signal structure of the donor trajectories and \(S\) captures sparse irregular components, such as outlying or contaminated observations.
The low-rank donor structure is then recovered by solving the convex relaxation
\begin{equation*}
\min_{L,S}\; \|L\|_\ast + \lambda \|S\|_1
\quad \text{subject to} \quad Y_{D^*}=L+S,
\end{equation*}
where $\|\cdot\|_\ast$ denotes the nuclear norm and \(\|S\|_1=\sum_{j,t}|S_{jt}|\) denotes the element-wise \(\ell_1\)-norm.
This robust PCA step is intended to recover a low-rank donor structure that is less sensitive to outliers and missing or corrupted observations than 
the standard SVD-based low-rank approximation \citep{amjad:2018}.
Finally, the resulting low-rank component is used to estimate the linear relationship between the treated unit and the selected donors. 
Specifically, the donor coefficients are obtained by solving the nonnegative least-squares problem
\begin{equation*}
\hat{\mathbf{w}}
=
\arg\min_{\mathbf{w}}
\left\|
\mathbf{y}_1^{-} - (L^{-})' \mathbf{w}
\right\|_2^2
\quad \text{subject to} \quad \mathbf{w} \ge 0.
\end{equation*}
The estimated coefficients are then applied to the post-intervention low-rank component \(L^{+}\) to construct the counterfactual outcome. 
Thus, unlike the original SCM of \citet{abadie:2003}, fPCA-SYNTH replaces the simplex-constrained weight estimation with nonnegative least 
squares on the recovered low-rank donor structure.

\subsubsection{ClusterSC}

\citet{Rho:2025} propose ClusterSC, which addresses the instability of synthetic control in settings where donors exhibit heterogeneous temporal patterns. The method assumes that donor trajectories admit a low-rank spectral representation, so that donors with similar spectral embeddings can be grouped together. By restricting estimation to the cluster whose temporal structure is most aligned with that of the treated unit, ClusterSC aims to improve the stability and reliability of the synthetic control estimator.

Let $Y=[Y^-,Y^+]\in\mathbb{R}^{J\times T}$ denote the donor outcome matrix partitioned into the pre- and post-intervention periods.
Following the SVD-based low-rank approximation strategy of \citet{amjad:2018}, ClusterSC decomposes the full donor matrix as ${Y}=U\Sigma V'$. 
Retaining only the top $r$ singular components yields the low-rank approximation of \(Y\), $\tilde{M}_Y=U_r \Sigma_r V_r'$,
where $V_r$ captures the leading temporal directions and each row $(U_r\Sigma_r)_{i\cdot}$ provides an $r$-dimensional spectral representation of donor $i$.

Donor set selection proceeds by clustering these representations via the K-means objective
\[
\min_{\{ D_c, h_c\}}\sum_{c=1}^C\sum_{i\in D_c}\| (U_r\Sigma_r)_{i\cdot}-h_c \|^2_2,
\]
where $C$ is the number of clusters and $h_c$ is the centroid for cluster $c$. The treated unit, with pre-intervention path $y_1^-$, is embedded into the same spectral space as $y_1^*=(V_r^-)'y_1^-$. The selected donor set $D^*$ is the cluster whose centroid is closest to $y_1^*$.

A rank-$r$ approximation is recomputed for this subset to recover its low-rank temporal structure. Finally, synthetic control weights are estimated from $\tilde{M}_{D^*}^{-}$, the pre-intervention part of the rank-\(r\) approximation for the selected donor set, using unconstrained least squares:
\begin{equation} \label{eqn:rho-w} 
\hat{\mathbf{w}}=\arg\min_\mathbf{w} \| \mathbf{y}_1^{-}-{(\tilde{M}_{D^*}^{-}})'\mathbf{w} \|^2_2.
\end{equation}

\section{Bayesian SCM with donor set selection} 
\subsection{Model}\label{sec:model} 

In this paper, we assume the following linear model for the outcome of the treated unit
\begin{equation}
y_{1t} = \sum_{j=2}^{J+1} w_j y_{jt}+f_t+\sum_{m=1}^q \alpha_m D_{mt}\cdot \mathbb{I}(t>T_0) + \epsilon_t, \;\;t=1,\dots,T, \label{eq:y}
\end{equation} 
where \(\mathbf{w}=(w_2,\ldots,w_{J+1})'\) lies on the standard SCM simplex, 
\[
\sum_{j=2}^{J+1}w_j=1,\;\; w_j\ge 0, \,j=2,\ldots,J+1
\]
and \(\epsilon_t\sim N(0,\sigma_\epsilon^2)\). The first term constructs a synthetic control as a simplex-constrained combination of donor outcomes, while \(f_t\) and
\(\sum_{m=1}^q\alpha_mD_{mt}\mathbb{I}(t>T_0)\) account for temporal discrepancy and post-intervention effects, respectively. We first specify the prior on \(\mathbf{w}\), which induces
donor-set selection, and then discuss the remaining two components.

First, to induce donor-set selection on this simplex, 
we construct the constrained weight vector
\(\mathbf{w}\) using normalized Gamma variables combined with Bernoulli donor-inclusion indicators. Specifically, let
\begin{align}
 \mathbf{w} & = \Big(\frac{u_2\gamma_2}{\sum_k u_k \gamma_k}, \cdots, \frac{u_{J+1}\gamma_{J+1}}{\sum_k u_k \gamma_k}\Big)', \; u_j \geq 0 \label{prior:w}  \\
 \boldsymbol{\gamma} &= (\gamma_2, \gamma_3,\cdots , \gamma_{J+1}) \nonumber\\
 \gamma_j &\sim \text{Ber}(\eta),~~ \eta \sim \text{U}(0,1), ~~  \pi(\gamma_j | \eta)= \eta^{\gamma_j} \cdot (1-\eta)^{1- \gamma_j} \label{prior:gamma}\\
 u_j &\sim \text{G}\Big(\alpha_u, \frac{1}{\alpha_u}\Big) \label{prior:u}
\end{align}
The indicator \(\gamma_j\) determines whether donor \(j\) is included in the synthetic control, while \(u_j\) determines its relative contribution among the selected donors.
Thus, $\gamma_j=0$ sets $w_j=0$ exactly, and the remaining positive components are normalized to satisfy the simplex constraint. Since
\(u_j\sim \mathrm{Gamma}(\alpha_u,1/\alpha_u)\) has mean one, the fixed hyperparameter
\(\alpha_u\) controls only the dispersion of the relative donor weights.
This construction follows the general Bayesian sparse-prior perspective of favoring parsimonious models through regularization \citep{Castillo:2015}.

\begin{proposition}[Simplex-face representation]\label{prop:simplex}
Let \(S_\gamma=\{j:\gamma_j=1\}\) be the donor set selected by \(\boldsymbol{\gamma}\).
For any nonempty \(S\subset\{2,\ldots,J+1\}\), define the simplex face
\[
\Delta_S
=
\left\{
w\in\mathbb{R}^J:
w_j\ge 0,\; w_j=0\ \text{for } j\notin S,\;
\sum_{j\in S}w_j=1
\right\}.
\]
When \(S_\gamma\neq\emptyset\), the weight vector defined in \eqref{prior:w} satisfies
\[
\mathbf{w}(\boldsymbol{\gamma},\mathbf{u})\in \Delta_{S_\gamma}.
\]
Moreover, for any nonempty \(S\subset\{2,\ldots,J+1\}\), conditional on \(S_\gamma=S\),
the normalized Gamma construction in \eqref{prior:w}--\eqref{prior:u} implies
\[
\mathbf{w}_S\mid S_\gamma=S
\sim
\mathrm{Dirichlet}(\alpha_u,\ldots,\alpha_u).
\]
The corresponding conditional prior density on \(\Delta_S\) is
\[
\pi_S(w)
=
\frac{\Gamma(|S|\alpha_u)}
{\Gamma(\alpha_u)^{|S|}}
\prod_{j\in S}w_j^{\alpha_u-1},
\qquad w\in\Delta_S.
\]
\end{proposition}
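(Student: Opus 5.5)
The plan is to prove the two claims separately: first the deterministic face-membership statement, then the Dirichlet law via the standard Gamma-normalization argument. A preliminary point concerns the prior \eqref{prior:u}: I read $\mathrm{G}(\alpha_u,1/\alpha_u)$ as shape $\alpha_u$ and scale $1/\alpha_u$. This gives mean one, as stated in the text, and so $u_j=v_j/\alpha_u$ with $v_j\sim\mathrm{Gamma}(\alpha_u,1)$. The $u_j$ are independent of each other and of $(\boldsymbol{\gamma},\eta)$, and they are almost surely strictly positive.

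\emph{Face membership.} Fix $\boldsymbol{\gamma}$ with $S_\gamma\neq\emptyset$. Then $\sum_k u_k\gamma_k=\sum_{k\in S_\gamma}u_k>0$ almost surely, so $\mathbf{w}$ is well defined. Each coordinate $w_j=u_j\gamma_j/\sum_k u_k\gamma_k$ is nonnegative, and $w_j=0$ whenever $\gamma_j=0$, that is, for $j\notin S_\gamma$. Summing over $j\in S_\gamma$, the numerators add up to the denominator, so $\sum_{j\in S_\gamma}w_j=1$. Hence $\mathbf{w}\in\Delta_{S_\gamma}$.

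\emph{Dirichlet law.} Condition on $S_\gamma=S$. Since $\mathbf{u}$ is independent of $\boldsymbol{\gamma}$, conditioning leaves the law of $(u_j)_{j\in S}$ unchanged: its coordinates are still i.i.d. Gamma$(\alpha_u,1/\alpha_u)$. On this event $\mathbf{w}_S=(u_j/\sum_{k\in S}u_k)_{j\in S}$. The common scale $1/\alpha_u$ cancels in this ratio, so $\mathbf{w}_S=(v_j/\sum_{k\in S}v_k)_{j\in S}$ with $v_j$ i.i.d. Gamma$(\alpha_u,1)$. I then invoke the classical fact that normalized independent Gamma$(a_j,1)$ variables are Dirichlet$(a_1,\dots,a_{|S|})$.

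For completeness I would sketch that fact. Let $s=|S|$ and change variables $(v_1,\dots,v_s)\mapsto(w_1,\dots,w_{s-1},r)$ with $r=\sum v_k$ and $v_j=rw_j$. The Jacobian is $r^{s-1}$. The joint density becomes
\[
\frac{1}{\Gamma(\alpha_u)^s}\prod_{j} (rw_j)^{\alpha_u-1}e^{-r}\,r^{s-1}
=\frac{1}{\Gamma(\alpha_u)^s}\prod_j w_j^{\alpha_u-1}\, r^{s\alpha_u-1}e^{-r}.
\]
Integrating out $r$ gives $\Gamma(s\alpha_u)$, which yields the stated density $\pi_S$ on $\Delta_S$ with respect to Lebesgue measure on the first $s-1$ coordinates. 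As a by-product, $r$ is independent of $\mathbf{w}_S$.

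The only step needing care is this Jacobian computation. The rest is bookkeeping: checking that conditioning on $S_\gamma=S$ is legitimate, which requires $P(S_\gamma=S)>0$ and holds since integrating over $\eta\sim\mathrm{U}(0,1)$ gives positive probability, and interpreting the density on the face as a density in $s-1$ free coordinates. When $|S|=1$ the face is a single point and the Dirichlet law degenerates to the point mass $w_j=1$, consistent with the statement.
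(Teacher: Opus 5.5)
Your proof is correct and takes the same route the paper relies on. The paper gives no separate proof of Proposition~\ref{prop:simplex}; it simply invokes the classical fact that independent Gamma variables with common shape $\alpha_u$ and common scale, normalized by their sum, are $\mathrm{Dirichlet}(\alpha_u,\ldots,\alpha_u)$, applied to $(u_j)_{j\in S}$ on the event $S_\gamma=S$. Your write-up supplies the details the paper leaves implicit, and all of them check out: positivity of the normalizer, independence of $\mathbf{u}$ from $(\boldsymbol{\gamma},\eta)$, $P(S_\gamma=S)>0$, and the Jacobian $r^{s-1}$.
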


The Dirichlet prior naturally enforces the SCM constraints of nonnegativity and unit-sum weights, and is therefore a convenient choice for modeling probability vectors \citep{Ferguson:1973}. However, when all Dirichlet parameters are strictly positive, the prior is supported on the interior of the simplex. As a result, donor weights may become arbitrarily small, but the prior does not generate exact zeros. This structural property limits the ability of a full-simplex Dirichlet prior to induce sparsity or perform donor set selection \citep{Tang:2018, Koslovsky:2023}.

In the theoretical formulation of \citet{martinez:2024}, a Dirichlet prior is assigned to the latent mean vector \(\mathbf{m}\), while the implemented \texttt{bsynth} specification places the weights directly on the simplex through a Dirichlet prior. Under this Dirichlet-based specification, the weights satisfy the simplex constraint, but the model does not explicitly distinguish selected donors from excluded ones.

In contrast, the proposed Gamma--Bernoulli construction preserves the simplex constraint while allowing exact zero weights through the inclusion indicators. When all donors are
selected, \(\boldsymbol{\gamma}=\mathbf{1}_{1\times J}\), the selected simplex face becomes the full simplex:
\begin{equation}
\mathbf{w}
=
\left(
\frac{u_2}{\sum_{k=2}^{J+1} u_k},
\ldots,
\frac{u_{J+1}}{\sum_{k=2}^{J+1} u_k}
\right)'
\sim
\mathrm{Dirichlet}(\alpha_u,\ldots,\alpha_u).
\label{eq:full-simplex-dirichlet}
\end{equation}
When \(\alpha_u=1\), the prior on \(\mathbf{w}\) in Equation
\eqref{eq:full-simplex-dirichlet} coincides with the uniform Dirichlet prior in Equation \eqref{martinez:Dir_imple}, which is used in the implementation of \citet{martinez:2024}. Thus, the proposed prior retains the full-simplex Dirichlet construction as a special case, while extending it to lower-dimensional simplex faces indexed by the selected donor set.

Second, we introduce the Gaussian process (GP) term \(f_t\) to capture smooth temporal variation in the treated outcome that is not explained by the simplex-constrained donor combination. 
This component accounts for remaining time-dependent structure in the treated unit, while keeping the donor-based synthetic control as the primary counterfactual component.
In this paper, we assume that \(f=(f_1,\ldots,f_T)'\) follows a zero-mean GP with a squared exponential covariance kernel:
\begin{equation*}
\begin{aligned}
f \sim \mathrm{GP}(0,\Sigma),
\quad \Sigma = \tau^2\!\left[\exp\!\left(-\frac{(i-j)^2}{2\kappa}\right)\right]_{i,j=1}^T.
\end{aligned}
\end{equation*}

The parameter \(\tau^2\) controls the overall magnitude of the GP component, while \(\kappa\) controls the smoothness of the temporal dependence.

Finally, the term
\[
\sum_{m=1}^q \alpha_m D_{mt}\mathbb{I}(t>T_0)
\]
represents the post-intervention effect over time. Here, \(D_{mt}\), \(m=1,\ldots,q\), are prespecified basis functions that determine the shape of the post-intervention effect, and \(\alpha_m\) denotes the corresponding coefficient.
Let \(\boldsymbol{\alpha}=(\alpha_1,\ldots,\alpha_q)'\), and assign
\[
\boldsymbol{\alpha}\sim \mathrm{MVN}(0,\sigma_\alpha^2 I).
\]
Different choices of \(D_{mt}\) allow different forms of post-intervention effects.
For example, \(q=1\) and \(D_{1t}=1\) correspond to a constant post-intervention effect.
Choosing \(D_{1t}=1\) and \(D_{2t}=t-T_0\) allows the effect to vary linearly over the post-intervention period. Higher-order polynomial terms, such as \((t-T_0)^2\), can be included to represent nonlinear changes, and spline basis functions can be used when a more flexible smooth effect is desired.

The role of this basis expansion differs from that of the Gaussian process component \(f_t\). The GP term is used to account for smooth temporal variation in the treated outcome after the donor-based synthetic control component is included. In contrast, \(\sum_{m=1}^q \alpha_mD_{mt}\mathbb{I}(t>T_0)\) enters only in the post-intervention period and represents the component through which the intervention effect is modeled.
In the simplest specification used for comparison with standard SCM, we set \(q=1\) and \(D_{1t}=1\), so that \(\alpha_1\) represents a constant average shift after the intervention.
\color{black}

\subsection{Posterior donor-set consistency} \label{sec:theory}

We provide a theoretical justification for the donor set selection component of the proposed model. 
We consider the simplified pre-intervention model
\[
\mathbf{y}_1^- = Y_0^- w + \boldsymbol{\epsilon},
\qquad
\boldsymbol{\epsilon} \sim N(0,\sigma^2 I_{T_0}),
\]
which excludes the Gaussian process term and the post-intervention effect term. 
The superscript \(^{-}\) denotes restriction to the pre-intervention periods \(t=1,\ldots,T_0\), with \(Y_0^-\in\mathbb R^{T_0\times J}\).
The result below concerns posterior recovery of the active donor set \(S_\gamma\) and should therefore be interpreted as a consistency statement for donor-set selection.
The theorem below establishes posterior concentration on the true active donor set \(S^*\), in the sense that the posterior probability of \(S_\gamma=S^*\) converges to one as \(T_0\to\infty\).
For notational simplicity, throughout this subsection we re-index the \(J\) donor units by \(j=1,\ldots,J\).

\begin{assumption}\label{ass:donor-consistency}
The following assumptions hold.
\begin{enumerate}
\item[(A1)] There exists a true nonempty donor set $S^*\subset\{1,\ldots,J\}$
and an interior weight vector $w^*_{S^*}\in\Delta_{S^*}$ such that
\[
y_1^- = Y_{0,S^*}^- w^*_{S^*}+\epsilon,
\qquad
\epsilon\sim N(0,\sigma^2 I_{T_0}),
\]
where
\[
\min_{j\in S^*} w_j^* > c_w
\]
for some constant $c_w>0$.

\item[(A2)] For every nonempty candidate donor set \(S\subset\{1,\ldots,J\}\),
\[
\frac{1}{T_0}(Y_{0,S}^-)^\top Y_{0,S}^- \rightarrow Q_S,
\]
where \(Q_S\) is positive definite.

\item[(A3)] The true simplex face is separated from all candidate faces that do
not contain \(S^*\). That is, there exists \(c_0>0\) such that
\[
\inf_{S:\,S\nsupseteq S^*}
\inf_{w\in\Delta_S}
\frac{1}{T_0}
\left\|
Y_{0,S}^-w-Y_{0,S^*}^-w^*_{S^*}
\right\|^2
\ge c_0.
\]

\item[(A4)] Let
\[
\pi(S)=P(S_\gamma=S)
\]
denote the prior probability assigned to donor set \(S\).
The model prior assigns positive probability to every nonempty candidate donor set. In particular,
\[
0<\pi(S^*)<1,
\qquad
\frac{\pi(S)}{\pi(S^*)}=O(1)
\]
for all nonempty \(S\subset\{1,\ldots,J\}\).

\end{enumerate}
\end{assumption}

\begin{theorem}[Posterior donor-set consistency under hard simplex] \label{thm:donor-consistency}
Let \(\alpha_u>0\) be fixed, and use the induced density \(\pi_S(w)\) in Proposition \ref{prop:simplex}.
Then, as \(T_0\rightarrow\infty\), the posterior probability satisfies
\[
\Pi(S_\gamma=S^*\mid y_1^-,Y_0^-)
\rightarrow 1
\]
in \(P^*\)-probability, where \(P^*\) denotes the probability measure under the
true data-generating process in (A1).
\end{theorem}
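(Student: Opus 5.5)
The plan is to work with posterior odds against $S^*$. For each nonempty $S$,
\[
\frac{\Pi(S_\gamma=S\mid y_1^-,Y_0^-)}{\Pi(S_\gamma=S^*\mid y_1^-,Y_0^-)}=\frac{\pi(S)}{\pi(S^*)}\cdot\frac{m_S}{m_{S^*}},
\qquad
m_S=\int_{\Delta_S}\exp\Big(-\tfrac{1}{2\sigma^2}\|y_1^--Y_{0,S}^-w\|^2\Big)\pi_S(w)\,dw,
\]
where $\pi_S$ is the Dirichlet density from Proposition \ref{prop:simplex}. There are finitely many $S$, and (A4) bounds the prior ratios by $O(1)$. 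So it suffices to show $m_S/m_{S^*}\to0$ in $P^*$-probability for every $S\neq S^*$. This is equivalent to $\Pi(S_\gamma=S^*\mid\cdot)\to1$.

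\textbf{Step 1: reduce everything to the same quadratic expansion.} Since $y_1^-=Y_{0,S^*}^-w^*_{S^*}+\epsilon$, for any $w\in\Delta_S$ put $r=Y_{0,S}^-w-Y_{0,S^*}^-w^*_{S^*}$. Then
\[
\|y_1^--Y_{0,S}^-w\|^2=\|\epsilon\|^2-2\epsilon^\top r+\|r\|^2 .
\]
The common factor $e^{-\|\epsilon\|^2/2\sigma^2}$ cancels in $m_S/m_{S^*}$. Let $P$ be the projection onto the column span of $[Y_{0,S}^-,Y_{0,S^*}^-]$. Then $r$ lies in that span, and $|\epsilon^\top r|\le\|P\epsilon\|\,\|r\|$ with $\|P\epsilon\|^2\le\sigma^2\chi^2_{2J}=O_p(1)$. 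This gives the uniform bound
\[
-2\epsilon^\top r+\|r\|^2\ge \tfrac12\|r\|^2-O_p(1),
\]
and a matching upper bound $\le \tfrac32\|r\|^2+O_p(1)$. This Cauchy--Schwarz-on-a-projection trick avoids any empirical-process argument.

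\textbf{Step 2: lower bound for $m_{S^*}$.} Since $w^*_{S^*}$ is interior with coordinates $>c_w$, $\pi_{S^*}$ is bounded below by a constant on the ball $\{w\in\Delta_{S^*}:\|w-w^*\|\le T_0^{-1/2}\}$. By (A2), $\|r\|^2\le C T_0\|w-w^*\|^2\le C$ on that ball. Hence $m_{S^*}e^{\|\epsilon\|^2/2\sigma^2}\ge e^{-O_p(1)}\,c\,T_0^{-(|S^*|-1)/2}$.

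\textbf{Step 3: $S\nsupseteq S^*$.} By (A3), $\|r\|^2\ge c_0T_0$ uniformly over $\Delta_S$. Step 1 then gives $m_Se^{\|\epsilon\|^2/2\sigma^2}\le e^{O_p(1)}e^{-c_0T_0/(4\sigma^2)}$, since $\pi_S$ integrates to one. Dividing by Step 2, the ratio is $O_p\big(T_0^{(|S^*|-1)/2}e^{-c_0T_0/4\sigma^2}\big)\to0$.

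\textbf{Step 4: $S\supsetneq S^*$ (the main obstacle).} Here $w^*$, padded with zeros, lies on the boundary of $\Delta_S$, and the likelihood fits equally well. The penalty must therefore come from an Occam factor, not from the fit. Write $w=w^*+\delta$ with $\delta$ in the tangent space of $\Delta_S$. By (A2), $\|r\|^2=\|Y_{0,S}^-\delta\|^2\ge\lambda T_0\|\delta\|^2$ for some $\lambda>0$. Step 1 gives
\[
m_Se^{\|\epsilon\|^2/2\sigma^2}\le e^{O_p(1)}\int_{\Delta_S}e^{-\lambda T_0\|\delta\|^2/(4\sigma^2)}\pi_S(w)\,dw .
\]
Parametrize $\Delta_S$ by the free coordinates, namely $|S^*|-1$ coordinates from $S^*$ and the $k=|S\setminus S^*|$ extra coordinates $w_j\ge0$. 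On a neighbourhood of $w^*$ the $S^*$-factors of $\pi_S$ are bounded. Away from that neighbourhood the Gaussian factor is exponentially small. The extra factors $\prod_{j\in S\setminus S^*}w_j^{\alpha_u-1}$ are integrable at $0$ even when $\alpha_u<1$.

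Substituting $\delta=v/\sqrt{T_0}$, each $S^*$ direction contributes $T_0^{-1/2}$. Each extra coordinate contributes $\int_0^\infty (v_j/\sqrt{T_0})^{\alpha_u-1}e^{-cv_j^2}\,dv_j/\sqrt{T_0}=O(T_0^{-\alpha_u/2})$. Hence $m_Se^{\|\epsilon\|^2/2\sigma^2}=O_p\big(T_0^{-(|S^*|-1)/2-k\alpha_u/2}\big)$, and by Step 2, $m_S/m_{S^*}=O_p(T_0^{-k\alpha_u/2})\to0$.

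The delicate points are all in Step 4:
\begin{enumerate}
\item the $\delta$-directions must be restricted to the face, so that $\sum_j\delta_j=0$ while the extra coordinates are nonnegative;
\item (A2) must be applied to $Q_S$ on that tangent subspace;
\item the Gamma-function constant in $\pi_S$, which depends only on $|S|$ and $\alpha_u$, must be handled explicitly.
\end{enumerate}
Summing the finitely many vanishing ratios completes the argument.
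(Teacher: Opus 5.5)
Your proposal is correct and follows essentially the same route as the paper's proof. Both arguments reduce the claim to marginal-likelihood ratios via (A4), lower-bound $m_{S^*}$ by $T_0^{-(|S^*|-1)/2}$ using a $T_0^{-1/2}$-neighbourhood of $w^*_{S^*}$, get exponential decay from (A3) when $S\nsupseteq S^*$, and when $S\supsetneq S^*$ extract an Occam factor $T_0^{-\alpha_u/2}$ per redundant donor by integrating $w_j^{\alpha_u-1}$ against a Gaussian at the boundary. The differences are only technical: your projection bound $|\epsilon^\top r|\le\|P\epsilon\|\,\|r\|$ controls the cross term uniformly in every case, whereas the paper handles it case by case and uses the $ab\le \tfrac14 a^2+b^2$ step only in the overfitted case. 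Likewise, your direct free-coordinate integration replaces the $(v,\eta)$ reparametrization and Jacobian computation in the paper's Lemma 1.
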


Theorem~\ref{thm:donor-consistency} implies that, under the stated assumptions, the posterior distribution concentrates on the true active donor set as the length of the pre-intervention period increases. 
The proof is provided in Appendix~\ref{app:proof}. 

\subsection{MCMC procedure} \label{sec:mcmc}

The model \eqref{eq:y} in Section \ref{sec:model} with 
\begin{equation*}
\boldsymbol{\vartheta}=(f, \sigma^2_\epsilon, \tau^2, \kappa, \eta,\boldsymbol{\gamma},\mathbf{u},\boldsymbol{\alpha})
\end{equation*} 
and 
\begin{equation*}
\tilde{Y}
= \big[\, \mathbf y_2 \;\cdots\; \mathbf y_{J+1} \,\big]'
\in \mathbb{R}^{J\times T},
~~~
\mathbf{y}_i=(y_{i1},\cdots,y_{iT})' \in \mathbb{R}^{T}.
\end{equation*} 
is summarized as 
\begin{equation}\label{prior}
\begin{aligned}
y_{1t}|\boldsymbol{\vartheta} &\sim \text{N} \bigg(\sum_{j=2}^{J+1} w_j y_{jt}+f_t+\sum_{m=1}^q \alpha_m D_{mt}\cdot \mathbb{I}(t>T_0), \sigma^2_\epsilon \bigg)\\
\gamma_j &\sim \text{Ber}(\eta),\; \eta \sim \text{U}(0,1)\\
u_j &\sim \text{G}(\alpha_u, \frac{1}{\alpha_u})\\
f &\sim \text{GP}(0,\Sigma),\;
\tau^2 \sim \text{IG}(a_\tau,b_\tau),\;
\kappa \sim \text{IG}(a_\kappa,b_\kappa) \\
\boldsymbol{\alpha} &\sim \text{MVN}(0, \sigma^2_\alpha I),\;
\sigma^2_{\alpha} \sim \text{IG}(a_\alpha,b_\alpha)\\
\sigma^2_\epsilon &\sim \text{IG}(a_\epsilon, b_\epsilon).
\end{aligned}
\end{equation}
In this notation, the hyperparameters $\alpha_\bullet$ (with $\bullet$ indicating case-specific values) serve as the inputs to the gamma distributions, 
while $(a_\bullet,b_\bullet)$ are used for the inverse-gamma distributions.

Given the $\mathbf{y}_1=(y_{11},\cdots,y_{1T})'$ and $\tilde{Y}$, the joint posterior distribution takes the form below:
\begin{equation}
\pi(\boldsymbol{\vartheta} \; | \; \mathbf{y}_1, \tilde{Y})=\frac{\pi(\boldsymbol{\vartheta} , \mathbf{y}_1, \tilde{Y})}{\int_{\vartheta}\pi(\boldsymbol{\vartheta},\mathbf{y}_1,\tilde{Y})d\boldsymbol{\vartheta}}, \label{eqn:posterior1}
\end{equation}
where
\begin{equation}
\begin{aligned}
\pi(\boldsymbol{\vartheta} , \mathbf{y}_1, \tilde{Y}) = \Big\{ &\prod _{j=2}^{J+1} \pi(u_j\mid\alpha_u) \; \pi(\gamma_j\mid\eta) \Big\} \pi(\eta) \pi(\alpha\mid\sigma^2_\alpha)\pi(\sigma^2_\alpha) \\& \times \pi(f\mid\tau^2,\kappa) \pi(\tau^2) \pi(\kappa) \pi(\sigma^2_\epsilon)\\
& \times \Bigg\{\prod_{t=1}^T \text{N}\bigg(y_{1t}\,\Big|\,\sum_{j=2}^{J+1} \Big( \frac{u_j\gamma_j}{\sum_k u_k\gamma_k} \Big)y_{jt}+\sum_{m=1}^q \alpha_m D_{mt}+f_t, \sigma^2_\epsilon \bigg) \Bigg\}.
\label{eqn:posterior2}
\end{aligned}
\end{equation}

To obtain samples from the posterior distribution \eqref{eqn:posterior2}, we apply the Gibbs sampling algorithm, where the full conditional distributions for each variable 
are presented below. The detailed computations of the full conditionals are provided in Appendix~\ref{appendix:fc}.

We first present the full conditional distributions for the GP $f$ and its hyperparameter $\tau^2$, which are involved in the covariance function. 
For convenience, $\boldsymbol{\vartheta}_{[-f]}$ denotes the vector $f$ with the components indicated by $f$ excluded. 
\begin{itemize}
\item $f$: \\
\hspace*{2em} $\pi(f\;|\; \boldsymbol{\vartheta}_{[-f]}, \mathbf{y}_1, \tilde{Y})\; \propto \; \text{N}(f\,|\,\xi_f,V)$, where
        \begin{equation*}
         \xi_f = \frac{1}{\sigma^2_\epsilon}V(y-\tilde{Y}\mathbf{w}-D\boldsymbol{\alpha}), ~~
                V^{-1} = \Sigma^{-1}+\sigma_\epsilon^{-2}I.
        \end{equation*}
        
\item $\tau^2$: \\
\hspace*{2em} $\pi(\tau^2\;|\; \boldsymbol{\vartheta}_{[-\tau^2]}, \mathbf{y}_1, \tilde{Y})
\propto \; \text{IG} \Big(\tau^2 \big| a_\tau+\frac{T}{2}, b_\tau+\frac{1}{2}f'\Sigma_{\kappa}^{-1} f \Big)$, where
        \begin{equation*}
~\Sigma_\kappa=\!\left[\exp\!\left(-\frac{(i-j)^2}{2\kappa}\right)\right]_{i,j=1}^T.
        \end{equation*}
\end{itemize}

Second, the conditional distributions for $\boldsymbol{\gamma}$ and its hyperparameter $\eta$ are given as follows. 

\begin{itemize}

\item $\boldsymbol{\gamma}$: \\
For \(b\in\{0,1\}\), let \(\boldsymbol{\gamma}^{[j,b]}\) denote the inclusion vector obtained by setting \(\gamma_j=b\) 
while keeping \(\boldsymbol{\gamma}_{[-j]}\) fixed, and define
\[
\mathbf{w}^{[j,b]}
=
\frac{\mathbf{u}\circ \boldsymbol{\gamma}^{[j,b]}}
{\sum_{k=2}^{J+1}u_k\gamma_k^{[j,b]}},
\]
where \(\circ\) denotes elementwise multiplication.
\begin{equation*}
\begin{aligned}
&\pi(\gamma_j=1\;|\; \boldsymbol{\vartheta}_{[-\boldsymbol{\gamma}]},\boldsymbol{\gamma}_{[-j]}, \mathbf{y}_1,\tilde{Y}) \nonumber\\
 &= \frac{\eta \cdot \text{N} (\mathbf{y}_1\mid \tilde{Y}\mathbf{w}^{[j,1]}+D\boldsymbol{\alpha}+f, \sigma^2_\epsilon I)}
{\eta \cdot \text{N} (\mathbf{y}_1\mid \tilde{Y}\mathbf{w}^{[j,1]}+D\boldsymbol{\alpha}+f, \sigma^2_\epsilon I) + (1-\eta) \cdot \text{N}(\mathbf{y}_1\mid \tilde{Y}\mathbf{w}^{[j,0]}+D\boldsymbol{\alpha}+f, \sigma^2_\epsilon I)}. 
\end{aligned}
\end{equation*} 
\color{black}

\item $\eta$:
\begin{equation*}
\pi(\eta\;|\; \boldsymbol{\vartheta}_{[-\eta]}, \mathbf{y}_1, \tilde{Y})
\propto\;  \text{Beta}\Big(\eta\,|\,\sum_{j=2}^{J+1} \gamma_j+1, \; \sum_{j=2}^{J+1}(1-\gamma_j)+1\Big).
\end{equation*}
\end{itemize}

Third, the conditional posterior distributions of $\boldsymbol{\alpha}$ and $\sigma^2_\epsilon$ follow.
\begin{itemize}
\item $\boldsymbol{\alpha}$:\\
\hspace*{2em} $\pi(\boldsymbol{\alpha}\;|\; \boldsymbol{\vartheta}_{[-\boldsymbol{\alpha}]}, \mathbf{y}_1, \tilde{Y})
\propto \; \text{MVN}(\boldsymbol{\alpha}\,|\,\boldsymbol{\xi}_\alpha,W)$, where
\begin{equation*}
\begin{aligned}
\boldsymbol{\xi}_\alpha=\frac{1}{\sigma^2_\epsilon}WD'(y-\tilde{Y}\mathbf{w}-f),\;\;  W^{-1}=\frac{1}{\sigma^2_\alpha}I + \frac{1}{\sigma^2_\epsilon}D'D.
\end{aligned}
\end{equation*}
        
\item$\sigma^2_\epsilon$:
\begin{equation*}
\begin{aligned}
&\pi(\sigma^2_\epsilon\;|\; \boldsymbol{\vartheta}_{[-\sigma^2_\epsilon]}, \mathbf{y}_1, \tilde{Y})\\
 &\propto \text{IG} \bigg(\sigma^2_\epsilon \Big| a_\epsilon+\frac{T}{2}, b_\epsilon+\frac{1}{2}\sum_{t=1}^T\Big(y_{1t}-\sum_{j=2}^{J+1} w_j y_{jt}-\sum_{m=1}^q \alpha_m D_{mt}-f_t\Big)^2\bigg).
\end{aligned}
\end{equation*} 
\end{itemize}

Fourth, for $\kappa$ and the elements of $\mathbf{u}$, we employed the Metropolis-Hastings algorithm \citep{Hastings:1970} to sample from their target distributions.
\begin{itemize}
\item $\kappa$:
\begin{equation*}
\begin{aligned}
\pi(\kappa \mid \boldsymbol{\vartheta}_{[-\kappa]}, \mathbf{y}_1,\tilde{Y})
&\propto
\kappa^{-a_\kappa-1}
\exp\left(-\frac{b_\kappa}{\kappa}\right)
|\Sigma|^{-1/2}
\exp\left\{
-\frac{1}{2} f' \Sigma^{-1} f
\right\}.
\end{aligned}
\end{equation*}
     
\item $\mathbf{u}$:
\begin{equation*}
\begin{aligned}
\pi&(u_j\;|\; \boldsymbol{\vartheta}_{[-\mathbf{u}]},\mathbf{u}_{[-j]}, \mathbf{y}_1,\tilde{Y}) \nonumber\\
&\propto  ( u_j )^{\alpha_u-1}\exp(- \alpha_u u_j)\\
& \qquad \cdot
\exp\!\Bigg(\!-\sum_{t=1}^T\Big(y_{1t}-\sum_{j=2}^{J+1} \Big(\frac{u_j\gamma_j}{\sum_k u_k\gamma_k}\Big) y_{jt}\!-\!\sum_{m=1}^q \alpha_m D_{mt}-f_t\Big)^2 \frac{1}{2\sigma^2_\epsilon}\!\Bigg).
 \nonumber
 \end{aligned}
 \end{equation*}
\end{itemize}
In the Metropolis-Hastings step for $\kappa$, we use the log-normal distribution as a proposal 
distribution, $\log(\kappa^{new}) \sim \text{N}(\log(\kappa^{old}), \delta^2_\kappa)$ with the acceptance probability
\begin{equation*}
\rho_\kappa=\min \left( \frac{\pi(\kappa^{\text{new}}\;|\; \boldsymbol{\vartheta}_{[-\kappa]},\mathbf{y}_1,\tilde{Y})}{\pi(\kappa^{\text{old}}\;|\; \boldsymbol{\vartheta}_{[-\kappa]},\mathbf{y}_1,\tilde{Y})} \times \frac{\kappa^{\text{new}}}{\kappa^{\text{old}}},1 \right).
\end{equation*}
Using a similar method to that for $\kappa$, a log-normal distribution is employed as the proposal distribution for 
each element of $\mathbf{u}$. Specifically, we have $\log(u_j^{new}) \sim \text{N}(\log(u_j^{old}), \delta^2_{u_j})$. The acceptance probability is
\begin{equation*}
\rho_{u_j}=\min \left( \frac{\pi(u_j^{\text{new}}\;|\; \boldsymbol{\vartheta}_{[-u]},\mathbf{u}_{[-j]},\mathbf{y}_1,\tilde{Y})}{\pi(u_j^{\text{old}}\;|\; \boldsymbol{\vartheta}_{[-u]},\mathbf{u}_{[-j]},\mathbf{y}_1,\tilde{Y})} \times \frac{u_j^{\text{new}}}{u_j^{\text{old}}},1 \right).
\end{equation*}
Further details can be found in Appendix~\ref{appendix:fc}.

The convergence of the MCMC chains was monitored using the Gelman–Rubin diagnostic \citep{Gelman:1992}, trace plots, and inspection of marginal densities. The convergence was declared when all Gelman–Rubin statistics were below $1.01$.

\subsection{Posterior predictive estimation}

We are interested in predicting the counterfactual outcomes of the treated unit after the intervention point $T_0$, based on the posterior samples from MCMC.

The predictive mean of $y_{1t}$ is defined as
\begin{equation*}
\hat{y}_{1t} = \mathbb{E}\left[ y_{1t} \mid \mathbf{y}_1, \tilde{\mathbf{Y}} \right],
\end{equation*}
where $\mathbf{y}_1$ and $\tilde{\mathbf{Y}}$ are the observed values for the treated and donor units, respectively.

We use posterior samples $\{ \boldsymbol{\vartheta}^{(s)} \}_{s=1}^S$, where $S$ is the total number of posterior draws, to estimate the predictive expectation. For each posterior sample, we compute
\begin{equation*}
\hat{y}_{1t}^{(s)} = \sum_{j=2}^{J+1} w_j^{(s)} y_{jt} + f_t^{(s)} + \sum_{m=1}^{q} \alpha_m^{(s)} D_{mt} \cdot \mathbb{I}(t > T_0),
\end{equation*}
with
\begin{equation*}
w_j^{(s)} = \frac{u_j^{(s)} \cdot \gamma_j^{(s)}}{\sum_{k=2}^{J+1} u_k^{(s)} \cdot \gamma_k^{(s)}}.
\end{equation*}

Then, the predictive estimate is given by the average: $\hat{y}_{1t} = \frac{1}{S} \sum_{s=1}^{S} \hat{y}_{1t}^{(s)}.$ 
We also construct 95\% credible intervals using the quantiles of the sampled predictions $\{ \hat{y}_{1t}^{(s)} \}$.

\section{Numerical study} \label{simul}
\subsection{Settings}

We aimed to compare the performance of our proposed Bayesian Adaptive Synthetic Control (BASC) model, which performs donor–set selection jointly with weight estimation, against several existing donor-selection and synthetic control methods: the Bayesian SCM of \citet{martinez:2024} (hereafter B-MV), fPCA-SYNTH of \citet{Bayani:2021, greathouse:2023}, and ClusterSC of \citet{Rho:2025}.

For ClusterSC, since \citet{Rho:2025} note that the synthetic control estimation step can be replaced by other preferred procedures, we fit the synthetic control directly on the selected donor set using the original pre-intervention outcomes.
The number of clusters $K$ is chosen over a candidate grid by minimizing the pre-intervention fitting error between the observed treated outcome and the fitted synthetic-control path for each candidate $K$.

The total time period is set to $T=50$, with the intervention occurring at $T_0=40$. 
For performance comparison with B-MV, we set $m=1$ and $D_{1t}=1$, assigning it the role of an indicator 
for the period after $T_0$. Thus, the treatment effect is represented by $\alpha_1$. 
The donor outcomes \(y_{jt}\), \(j=2,\ldots,J+1\), are generated under two data-generating mechanisms.

\begin{itemize}
    \item Independent outcome setting:
    Donor outcomes are generated independently as
    \begin{equation} \label{eqn:dgp-indep} 
    y_{jt}\overset{\mathrm{i.i.d.}}{\sim}\text{N}(15,25),
    \qquad j=2,\ldots,J+1,\quad t=1,\ldots,T.
    \end{equation} 

    \item Latent factor outcome setting: 
    Donor outcomes are generated from a linear factor model:
    \begin{equation}\label{eqn:dgp-lf}  
    y_{jt}=15+c\boldsymbol{\lambda}_t\boldsymbol{\mu}_j+e_{jt},
    \qquad j=2,\ldots,J+1,\quad t=1,\ldots,T,
    \end{equation} 
    where \(\boldsymbol{\lambda}_t\in\mathbb{R}^{1\times F}\) is a vector of latent common factors,
    \(\boldsymbol{\mu}_j\in\mathbb{R}^{F\times1}\) is a donor-specific loading vector, and
    \(e_{jt}\sim \text{N}(0,\sigma_e^2)\) is an error term. 
    The latent factor process follows a stationary AR(1) model:
    \[
    \boldsymbol{\lambda}_1 \sim \text{N}_F(0,I_F),\qquad
    \boldsymbol{\lambda}_t = \rho \boldsymbol{\lambda}_{t-1} + \boldsymbol{\nu}_t,
    \qquad 
    \boldsymbol{\nu}_t \sim \text{N}_F(0,(1-\rho^2)I_F),
    \]
    for \(t=2,\ldots,T\). 
    The donor-specific loadings are generated as
    \[
    \boldsymbol{\mu}_j\sim \text{N}_F(0,I_F).
    \]
    We set \(F=4\), \(\rho=0.7\), \(c=2.5\), and \(\sigma_e=1\). 
    The intercept $15$ and the scale factor \(c\) in \eqref{eqn:dgp-lf} keep the overall scale comparable to that of the independent setting.
\end{itemize}

Within these outcome settings, the simulations were designed based on three key donor-structure factors.
First, the donor pool size $J$ was varied, with simulations conducted for $J=10$ or $J=30$. Second, the donor set size $J_s$ was adjusted to explore different model settings. When $J=10$, we considered two cases: $J_s=3$ for a reduced model and $J_s=10$ for a full model. For the reduced model, the donor selection was determined using the following $\gamma$ vectors:
\begin{equation*}
\begin{aligned}
\gamma_j =
\begin{cases}
1, & \text{if } j \in \{2,5,8\} \\
0, & \text{otherwise}.
\end{cases}
\end{aligned}
\end{equation*}

When $J=30$, we examined three cases: $J_s=3$ (highly sparse reduced model), $J_s=9$ (moderately sparse reduced model), and $J_s=30$ (full model). For the two reduced models, the donor selection was determined using the following $\boldsymbol{\gamma}$ vectors:
\begin{equation*}
\begin{aligned}
\text{(highly sparse reduced model)}&\;\gamma_j =
\begin{cases}
1, & \text{if } j \in \{2,11,16\} \\
0, & \text{otherwise}.
\end{cases}\\
\text{(moderately sparse reduced model)}&\;\gamma_j =
\begin{cases}
1, & \text{if } j \in \left\{\begin{array}{@{}l@{}}
  2,5,8,11,13,\\
  16,21,23,27
\end{array}\right\} \\
0, & \text{otherwise}.
\end{cases}
\end{aligned}
\end{equation*}

Third, we considered two scenarios based on the influence vector $\mathbf{u}$. One scenario assumed equal influence across all control units ($\mathbf{u}=\mathbf{1}$), while the other allowed for unequal levels. Depending on the value of $J$, unequal $\mathbf{u}$ is set as follows, where the values are drawn once from the gamma distribution with mean 1 and variance $1/3$ and rounded to two decimal places:
\begin{equation*}
\begin{aligned}
\mathbf{u} =
\begin{cases}
(0.3,1.5,1.0,0.5,0.4,2.0,0.7,0.5,1.7,1.9) & \text{if } J = 10 \\
\\
\begin{aligned}
(1.80, 1.10, 0.97, 1.28, 1.05, 0.76, 1.12, 1.36, 0.93, 0.43, \\
 0.38, 0.50, 0.40, 1.46, 0.94, 0.52, 1.25, 0.76, 0.96, 0.59, \\
 3.41, 0.43, 0.49, 1.08, 0.43, 2.01, 0.46, 3.01, 0.96, 0.49)
\end{aligned}
& \text{if } J = 30.
\end{cases}
\end{aligned}
\end{equation*}

In total, these choices resulted in ten different simulation settings, which are summarized in Table \ref{tbl:simulation}. The full model ($J=J_s$) is included to assess whether our model performs competitively even when donor set selection is not required. We chose to compare our method primarily with that of \citet{martinez:2024} because our modeling framework builds directly on theirs, extending it by incorporating a donor set selection mechanism via the $\boldsymbol{\gamma}$ parameters. For each simulation setting, we compare the performance of BASC, which employs $\boldsymbol{\gamma}$ to select the donor set, with that of B-MV, which utilizes outcomes from all control units.

\begin{table}[htbp]
\centering
\small
\setlength{\tabcolsep}{8pt}
\begin{tabular}{lcc cc}
\toprule
& & & \multicolumn{2}{c}{\textbf{Influence}} \\ 
\cmidrule(lr){4-5}
\textbf{Model Type} & \begin{tabular}[c]{@{}c@{}}\textbf{Total Donor}\\\textbf{Pool ($J$)}\end{tabular} & \begin{tabular}[c]{@{}c@{}}\textbf{Effective}\\\textbf{Donor Set ($J_s$)}\end{tabular} & $\mathbf{u}\neq\mathbf{1}$ & $\mathbf{u}=\mathbf{1}$ \\ 
\midrule
Reduced               & 10 & 3  & M10-3-ue   & M10-3-e   \\ \addlinespace[0.3em]
Full                   & 10 & 10 & M10-10-ue  & M10-10-e  \\ \addlinespace[0.5em] 
Highly sparse reduced  & 30 & 3  & M30-3-ue   & M30-3-e   \\ \addlinespace[0.3em]
Moderately sparse reduced & 30 & 9  & M30-9-ue   & M30-9-e   \\ \addlinespace[0.3em]
Full             & 30 & 30 & M30-30-ue  & M30-30-e  \\ 
\bottomrule
\end{tabular}
\caption{Simulation design. The table summarizes ten simulation settings constructed by varying the donor pool size ($J$), the effective donor set size ($J_s$), and whether donor weights represent equal influence ($\mathbf{u}=\mathbf{1}$) or unequal influence ($\mathbf{u}\neq\mathbf{1}$).}
\label{tbl:simulation}
\end{table}

The prior settings required for our model are presented in Equation \eqref{prior}. In this simulation, we set $\alpha_u = 3$, $a_\tau = 1/2$, $b_\tau = 1$, $a_\kappa = 5$, $b_\kappa = 2$, $a_\alpha = 1$, $b_\alpha = 50$, $a_\epsilon = 2$, and $b_\epsilon = 4$.
These values were chosen to give the model enough flexibility to fit the data while keeping the estimates stable. In particular, the prior for $\boldsymbol{\alpha}$ was set with a large scale so that the weights would not be overly shrunk, and the settings for $\tau$ and $\kappa$ were selected to avoid extreme values.
Inference was based on posterior samples obtained from three independent MCMC runs. Each chain consisted of 700,000 iterations, with the first 200,000 discarded as burn-in. In all simulation settings, the Gelman–Rubin statistics were below 1.01, confirming convergence. Trace plots and marginal densities are provided in Appendix~\ref{appendix:conv_numerical}.

Given that all MCMC chains converged stably, we now assess the model's predictive performance and donor set selection accuracy using various metrics. The model's performance is evaluated from two main perspectives. The first is related to prediction, where the difference between the predicted values $\hat{y}_{1t}^I$ and the true values $y_{1t}^I$ is assessed using bias, expected absolute error, and the overall root mean squared error (RMSE) across all time points.
\begin{itemize}
\item $\text{Bias}_t=\mathbb{E}(\hat{y}_{1t}^I-y_{1t}^I),\;\forall t$
\item $\mathbb{E}(\text{Absolute error}_t)=\mathbb{E}\left|\hat{y}_{1t}^I-y_{1t}^I\right|, \;\forall t$
\item $\text{RMSE}(\mathbf{y}_1)=\bigg(\frac{1}{T^*} \sum\limits_{t\in\mathcal{T}}\mathbb{E}_s(\hat{y}_{1t}^{I\;(s)}-y_t)^2 \bigg)^{1/2}$, where $s$ indexes the posterior samples, $\mathcal{T}$ denotes the evaluation period, either pre- or post-treatment, and $T^*=|\mathcal{T}|$ is the number of time points in that window.
\end{itemize}

In the second analysis, we focus on donor set selection. The predictive accuracy of the model is evaluated by analyzing the estimated weights, $\mathbf{w}$, using total absolute error and RMSE. Additionally, the effectiveness of identifying the donor set through the parameter $\boldsymbol{\gamma}$ is assessed using classification metrics—true positives (TP), false positives (FP), false negatives (FN), and true negatives (TN). From these metrics, the true positive rate (TPR=TP/(TP+FN)), which represents sensitivity, is calculated using TP and FN, while the true negative rate (TNR=TN/(TN+FP)), representing specificity, is derived from TN and FP. Overall accuracy (=(TN+TP)/(TP+TN+FP+FN)), defined as the proportion of correctly classified cases, is also reported.
\begin{itemize}
\item $\text{TAE}=\sum\limits^{J}_{j=1}\left|\mathbb{E}(\hat{w}_j)-w_j\right|$
\item $\text{RMSE}(\mathbf{w})=\bigg(\frac{1}{J} \sum\limits^{J+1}_{j=2}\mathbb{E}_s(\hat{w}_j^{(s)}-w_j)^2 \bigg)^{1/2}$, where $s$ indexes the posterior samples.
\end{itemize}

\subsection{Independent outcome setting}

To inspect the donor weight mechanism, we first consider a simplified BASC specification under the independent outcome setting, excluding both the temporal discrepancy term \(f\) and the intervention indicator term \(\boldsymbol\alpha\). The corresponding results are reported in Appendix~\ref{sec:model_simple}.

To assess computational burden, we also recorded the elapsed time of BASC and B-MV under this setting, considering both specifications with and without the GP component. The results, summarized by donor-pool size, are reported in Appendix~\ref{app:computation_time}.

\subsubsection{Results I: Prediction error}

Table~\ref{tbl:y} reports the pre- and post-intervention prediction errors across ten simulation settings. The table shows that 
BASC consistently exhibits the most accurate predictive performance in most of the settings considered. Unlike others, 
BASC jointly selects a donor set and estimates the weight vector in a Bayesian framework, and it benefits from posterior averaging, which contributes to more stable prediction.

B-MV does not perform donor set selection and instead uses all available donors for prediction. 
Therefore, in settings with a full donor pool (M10-10- and M30-30-), it performs reasonably well, 
as its prior structure aligns with the underlying data-generating process. However, in scenarios with 
small true donor sets (M10-3-, M30-3-, M30-9-), it tends to include unnecessary donors, which 
made the prediction worse.  In contrast, the proposed BASC achieves consistently strong performance 
regardless of the donor set size - outperforming B-MV when the true donor set is small, while it 
shows comparable performance to B-MV when the donor pool is full.

Both two-stage donor screening methods, fPCA-SYNTH and ClusterSC, exhibit substantial variability in selecting donor sets, 
and this variability (or error) in donor selection propagates into prediction error. 
We found that fPCA-SYNTH frequently misclusters donors when the signal is weak or when the true donor set is small, 
which in turn increases the RMSE in prediction. The performance of fPCA-SYNTH depends on
 how well the data can be represented using a small number of principal components through functional PCA.
In contrast, ClusterSC selects donor sets by clustering the pre-intervention data, and its performance depends on 
how effectively the clustering procedure groups donors that are truly relevant to the outcome being predicted. 
In our setting - where the donor pool is small and the data consist of temporal trajectories - simple clustering 
methods in a vector space, without additional modeling structure, failed to yield good performance.

Taken together, the results demonstrate that BASC provides the most robust and reliable predictive 
performance across all data-generating environments. 
Both BASC and B-MV benefit from posterior predictive averaging, which yields more stable predictions than deterministic two-stage procedures.

We present the posterior mean and its prediction interval for the M10-3-ue and M10-3-e settings as representative examples. 
Additional figures for all other simulation settings are provided in Appendix~\ref{Apped:results1}.

\begin{figure}[htbp!]
\centering
\includegraphics[width=0.75\textheight]{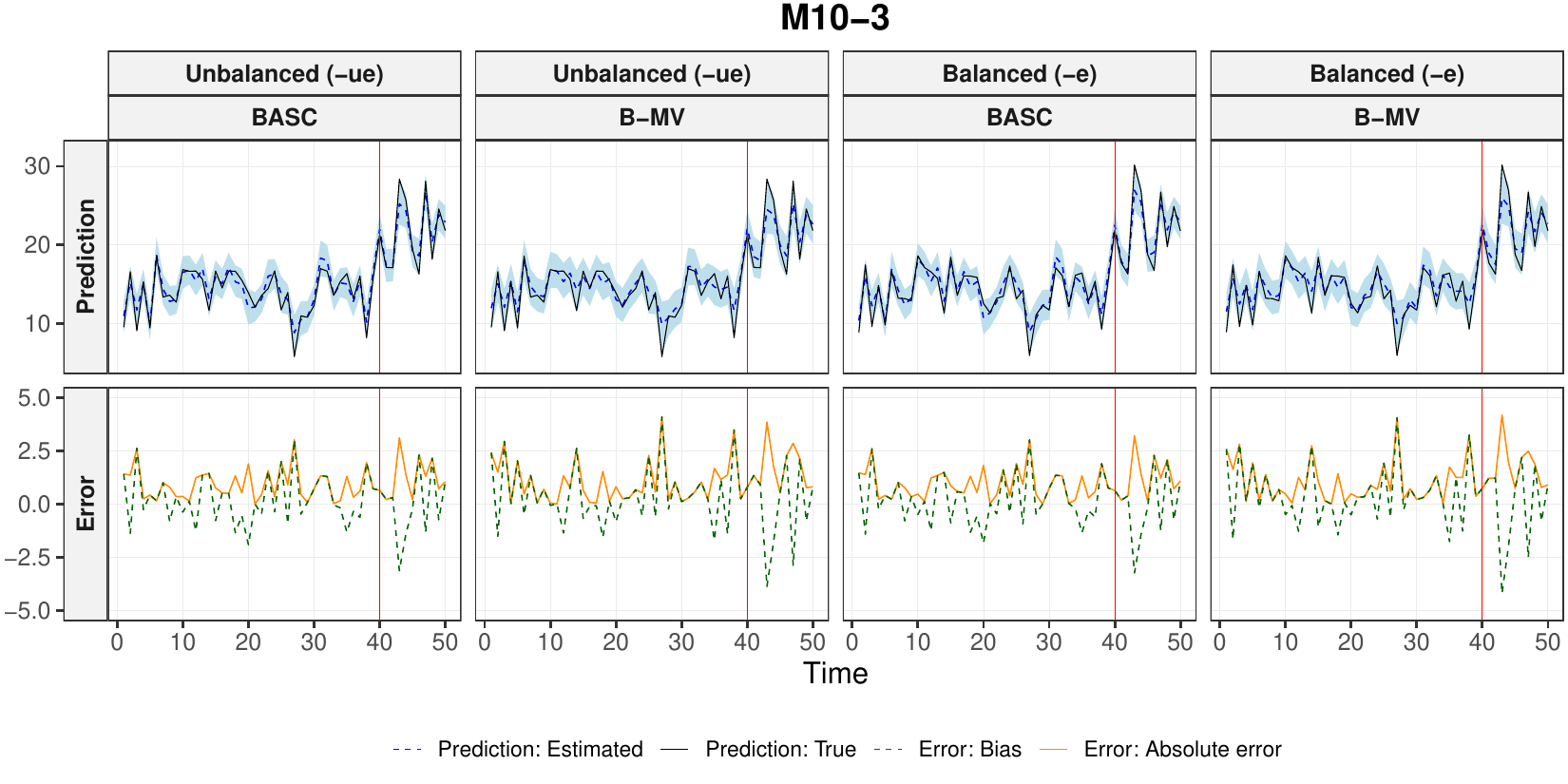}
\caption{M10-3 model: Posterior predictions and error estimates for BASC and B-MV. The top row shows posterior predictive trajectories, 
and the bottom row shows the corresponding bias and absolute error. The shaded band represents the 95\% credible interval, and the red vertical line indicates the intervention time point.}
\label{fig:y_1}
\end{figure}

\begin{table}[htbp!]
\centering
\small
\setlength{\tabcolsep}{5pt}
\begin{tabular}{ll cccc cccc}
\toprule
& & \multicolumn{4}{c}{\textbf{Unbalanced Setting (-ue)}} & \multicolumn{4}{c}{\textbf{Balanced Setting (-e)}} \\
\cmidrule(lr){3-6} \cmidrule(lr){7-10}
& & \multicolumn{2}{c}{Pre-intervention} & \multicolumn{2}{c}{Post-intervention} & \multicolumn{2}{c}{Pre-intervention} & \multicolumn{2}{c}{Post-intervention} \\
\cmidrule(lr){3-4} \cmidrule(lr){5-6} \cmidrule(lr){7-8} \cmidrule(lr){9-10}
\textbf{Setting} & \textbf{Model} & \textbf{Bias} & \textbf{RMSE} & \textbf{Bias} & \textbf{RMSE} & \textbf{Bias} & \textbf{RMSE} & \textbf{Bias} & \textbf{RMSE} \\
\midrule

\multirow{4}{*}{M10-3} 
  & BASC       &  0.257 & 1.557 & -0.067 & \textbf{1.958} &  0.257 & 1.556 & -0.066 & \textbf{1.965} \\
  & B-MV       &  0.473 & 1.870 & -0.125 & 2.434 &  0.470 & 1.857 & -0.119 & 2.448 \\
  & fPCA-SYNTH & -0.922 & 4.128 & -6.163 & 7.813 & -0.823 & 3.794 & -6.359 & 8.030 \\
  & ClusterSC  & -0.000 & 2.299 & -6.155 & 6.959 & -0.000 & 2.299 & -6.155 & 6.959 \\
\addlinespace[0.2em]
\hline
\addlinespace[0.2em]
\multirow{4}{*}{M10-10} 
  & BASC       &  0.218 & 1.536 & -0.063 & 1.896 &  0.233 & 1.487 & -0.066 & \textbf{1.921} \\
  & B-MV       &  0.193 & 1.575 & -0.077 & \textbf{1.857} &  0.230 & 1.487 & -0.089 & 1.926 \\
  & fPCA-SYNTH & -0.945 & 3.907 & -4.065 & 5.559 & -0.984 & 3.836 & -4.433 & 6.069 \\
  & ClusterSC  &  0.000 & 1.935 & -4.864 & 5.605 &  0.000 & 1.785 & -5.193 & 5.916 \\
\addlinespace[0.2em]
\hline
\addlinespace[0.2em]

\multirow{4}{*}{M30-3} 
  & BASC       &  0.323 & 1.597 & -0.084 & \textbf{2.049} &  0.218 & 1.690 &  0.062 &\textbf{1.233} \\
  & B-MV       &  0.413 & 2.588 & -0.202 & 3.604 &  0.063 & 2.806 & -0.006 & 2.355 \\
  & fPCA-SYNTH &  0.027 & 2.986 & -6.264 & 8.187 & -0.608 & 3.996 & -4.446 & 4.908 \\
  & ClusterSC  &  0.000 & 2.630 & -7.763 & 8.846 & -0.000 & 2.394 & -4.684 & 4.927 \\
\addlinespace[0.2em]
\hline
\addlinespace[0.2em]

\multirow{4}{*}{M30-9} 
  & BASC       &  0.338 & 1.600 & -0.090 & \textbf{1.833} &  0.158 & 1.667 & -0.039 & \textbf{1.392} \\
  & B-MV       &  0.342 & 1.654 & -0.125 & 1.876 &  0.173 & 1.717 & -0.097 & 1.435 \\
  & fPCA-SYNTH & -0.176 & 2.511 & -5.244 & 6.327 & -0.133 & 2.655 & -2.766 & 3.367 \\
  & ClusterSC  &  0.000 & 1.828 & -6.606 & 7.110 & -0.000 & 2.201 & -4.318 & 4.834 \\
\addlinespace[0.2em]
\hline
\addlinespace[0.2em]

\multirow{4}{*}{M30-30} 
  & BASC       &  0.354 & 1.485 & -0.035 & \textbf{1.965} &  0.135 & 1.687 &  0.029 & \textbf{1.266} \\
  & B-MV       &  0.372 & 1.544 & -0.057 & 2.172 &  0.142 & 1.724 &  0.017 & 1.268 \\
  & fPCA-SYNTH & -0.247 & 2.757 & -4.417 & 5.780 & -0.163 & 3.015 & -2.858 & 3.179 \\
  & ClusterSC  &  0.000 & 1.790 & -5.672 & 6.350 & -0.000 & 2.403 & -3.581 & 3.988 \\
\bottomrule
\end{tabular}
\caption{Prediction performance: Bias and RMSE in pre- and post-intervention periods. The model's performance was evaluated by calculating bias and RMSE, segmented by the intervention time point. Bold values indicate the lowest post-intervention RMSE among the four methods within each setting and condition.}
\label{tbl:y}
\end{table}
\subsubsection{Results II: Donor set selection}

Table~\ref{tbl:w,gamma} summarizes donor set selection performance using the posterior inclusion indicators $\boldsymbol{\gamma}$. Because B-MV does not perform donor set selection, its results are omitted. We thus compare our BASC with fPCA-SYNTH and ClusterSC.

The table shows that our BASC performs well in identifying true donor units. For true positive rate (TPR), BASC achieves substantially higher values than both fPCA-SYNTH and 
ClusterSC across all scenarios we consider. This indicates that the donor set selected by BASC is less likely to miss true donors, which is 
particularly important for accurate prediction in post-intervention periods.  For true negative rates (TNR), BASC shows higher values than (or at least comparable to) fPCA-SYNTH 
and ClusterSC in most settings except two. The two cases are M30-9-e and M30-9-ue settings, where 
the donor pool is large and the true donor set is moderately sparse. 
In these two cases, the accuracy of BASC is also lower than the others and this 
is simply because the number of non-donor units in the pool is large (21 units) 
and TNR has a greater influence on accuracy than TPR.

\begin{table}[htbp!]
\centering
\scriptsize
\setlength{\tabcolsep}{4pt}
\resizebox{\textwidth}{!}{
\begin{tabular}{ll ccccc ccccc}
\toprule
& & \multicolumn{5}{c}{\textbf{Unbalanced Setting (-ue)}} 
  & \multicolumn{5}{c}{\textbf{Balanced Setting (-e)}} \\
\cmidrule(lr){3-7} \cmidrule(lr){8-12}
& & \multicolumn{2}{c}{$\mathbf{w}$} 
  & \multicolumn{3}{c}{$\boldsymbol{\gamma}$}
  & \multicolumn{2}{c}{$\mathbf{w}$} 
  & \multicolumn{3}{c}{$\boldsymbol{\gamma}$} \\
\cmidrule(lr){3-4} \cmidrule(lr){5-7}
\cmidrule(lr){8-9} \cmidrule(lr){10-12}
\textbf{Setting} & \textbf{Model}
& \textbf{TAE} & \textbf{RMSE} 
& \textbf{TPR} & \textbf{TNR} & \textbf{Accuracy}
& \textbf{TAE} & \textbf{RMSE} 
& \textbf{TPR} & \textbf{TNR} & \textbf{Accuracy} \\
\midrule

\multirow{4}{*}{M10-3}
  & BASC       & 0.153 & 0.043 & 0.999 & 0.920 & 0.944 & 0.130 & 0.041 & 1.000 & 0.918 & 0.943 \\
  & B-MV       & 0.873 & 0.108 & -     & -     & -     & 0.880 & 0.105 & -     & -     & -     \\
  & fPCA-SYNTH & 1.749 & 0.256 & 0.333 & 0.714 & 0.600 & 1.491 & 0.229 & 0.333 & 0.714 & 0.600 \\
  & ClusterSC  & 0.713 & 0.124 & 0.667 & 0.714 & 0.700 & 0.713 & 0.124 & 0.667 & 0.714 & 0.700 \\
\addlinespace[0.2em]
\hline
\addlinespace[0.2em]

\multirow{4}{*}{M10-10}
  & BASC       & 0.234 & 0.057 & 0.779 & -     & 0.779 & 0.222 & 0.052 & 0.940 & -     & 0.940 \\
  & B-MV       & 0.267 & 0.049 & -     & -     & -     & 0.179 & 0.045 & -     & -     & -     \\
  & fPCA-SYNTH & 1.483 & 0.181 & 0.300 & -     & 0.300 & 1.599 & 0.184 & 0.300 & -     & 0.300 \\
  & ClusterSC  & 0.703 & 0.085 & 0.500 & -     & 0.500 & 0.657 & 0.075 & 0.700 & -     & 0.700 \\
\addlinespace[0.2em]
\hline
\addlinespace[0.2em]

\multirow{4}{*}{M30-3}
  & BASC       & 0.281 & 0.039 & 0.797 & 0.958 & 0.941 & 0.306 & 0.041 & 0.966 & 0.989 & 0.987 \\
  & B-MV       & 1.695 & 0.110 & -     & -     & -     & 1.687 & 0.096 & -     & -     & -     \\
  & fPCA-SYNTH & 0.319 & 0.036 & 0.667 & 1.000 & 0.967 & 0.794 & 0.085 & 0.667 & 1.000 & 0.967 \\
  & ClusterSC  & 1.415 & 0.121 & 0.667 & 0.862 & 0.833 & 0.960 & 0.074 & 0.667 & 0.778 & 0.767 \\
\addlinespace[0.2em]
\hline
\addlinespace[0.2em]

\multirow{4}{*}{M30-9}
  & BASC       & 1.254 & 0.062 & 0.713 & 0.379 & 0.479 & 1.208 & 0.056 & 0.774 & 0.340 & 0.470 \\
  & B-MV       & 1.346 & 0.061 & -     & -     & -     & 1.331 & 0.052 & -     & -     & -     \\
  & fPCA-SYNTH & 1.676 & 0.168 & 0.111 & 0.952 & 0.700 & 1.505 & 0.155 & 0.444 & 1.000 & 0.833 \\
  & ClusterSC  & 0.990 & 0.063 & 0.556 & 0.857 & 0.767 & 1.109 & 0.059 & 0.444 & 0.905 & 0.767 \\
\addlinespace[0.2em]
\hline
\addlinespace[0.2em]

\multirow{4}{*}{M30-30}
  & BASC       & 0.501 & 0.040 & 0.645 & -     & 0.645 & 0.257 & 0.029 & 0.774 & -     & 0.774 \\
  & B-MV       & 0.462 & 0.028 & -     & -     & -     & 0.113 & 0.019 & -     & -     & -     \\
  & fPCA-SYNTH & 1.843 & 0.132 & 0.133 & -     & 0.133 & 1.812 & 0.109 & 0.167 & -     & 0.167 \\
  & ClusterSC  & 1.205 & 0.056 & 0.233 & -     & 0.233 & 1.135 & 0.047 & 0.233 & -     & 0.233 \\
\bottomrule
\end{tabular}
}
\caption{Performance of the four methods in weight estimation and donor set selection under unbalanced (-ue) and balanced (-e) settings. For each setting, $\mathbf{w}$ is evaluated using total absolute error (TAE) and root mean squared error (RMSE). Donor set recovery is summarized by true positive rate (TPR), true negative rate (TNR), and overall accuracy.}
\label{tbl:w,gamma}
\end{table}

\subsubsection{Results III: Weight estimation}

Table~\ref{tbl:w,gamma} also reports the error in the estimation of donor weights $\mathbf{w}$. 
The results are largely similar to those observed in the donor set selection in previous section. 
In the reduced-donor settings (M10-3- and M30-3-), BASC achieves the smallest TAE and RMSE, which indicates its ability to correctly identify the true donor set and assign appropriate weights to the selected donor units. 
In the model, by jointly estimating $\boldsymbol{\gamma}$ and $\mathbf{u}$, BASC restricts attention to a small subset of relevant 
donors and concentrates weight accurately on those units.

B-MV, which assigns positive weights to all donors, performs comparably well in the full donor scenarios (M10-10- and M30-30-), where donor set selection is not needed. 
However, when the true donor set is sparse or moderate, its weights become dispersed across many irrelevant units, leading to higher TAE and RMSE.

Figure~\ref{fig:w_s1to4} (for $J=10$) visually illustrates these differences. In the `-ue' scenarios, where the true donor weights differ across units, BASC closely reproduces the heterogeneous weight patterns, whereas B-MV exhibits only modest variations across all donors.
In the `-e' scenarios, where true weights are uniform, 
BASC again recovers the donor set and assigns nearly equal weights to selected units.

The behavior of fPCA-SYNTH differ from that of BASC.
Across all four scenarios, the estimated weights from fPCA-SYNTH appear nearly flat, assigning similar weights to the selected donors regardless of their relevance. 
We conjecture this is  because the fPCA-SYNTH pipeline - functional smoothing, dimension reduction via fPCA, clustering in the score space, and the final OLS regression - tends to homogenize donor trajectories. The resulting low-rank structure stabilizes the pre-intervention fit but prevents the recovery of the true weight `heterogeneity'.  
On the other hand, the weights estimated by ClusterSC are roughly proportional to the true weights within the selected donor set. 
However,  the unconstrained weight solution to the minimization problem \eqref{eqn:rho-w} introduces some discrepancies.

Overall, BASC delivers the most accurate weight recovery when sparsity is present, while adapting well in full donor settings. B-MV performs well when the full model is correctly specified,  whereas fPCA-SYNTH and ClusterSC are more affected by the smoothing and donor set selection mechanisms inherent in their two-stage procedures. 
The results for $J=30$ exhibit similar patterns and are provided in Appendix~\ref{Apped:results3}.

\begin{figure}[htb!]
\centering
\includegraphics[width=0.7\textheight]{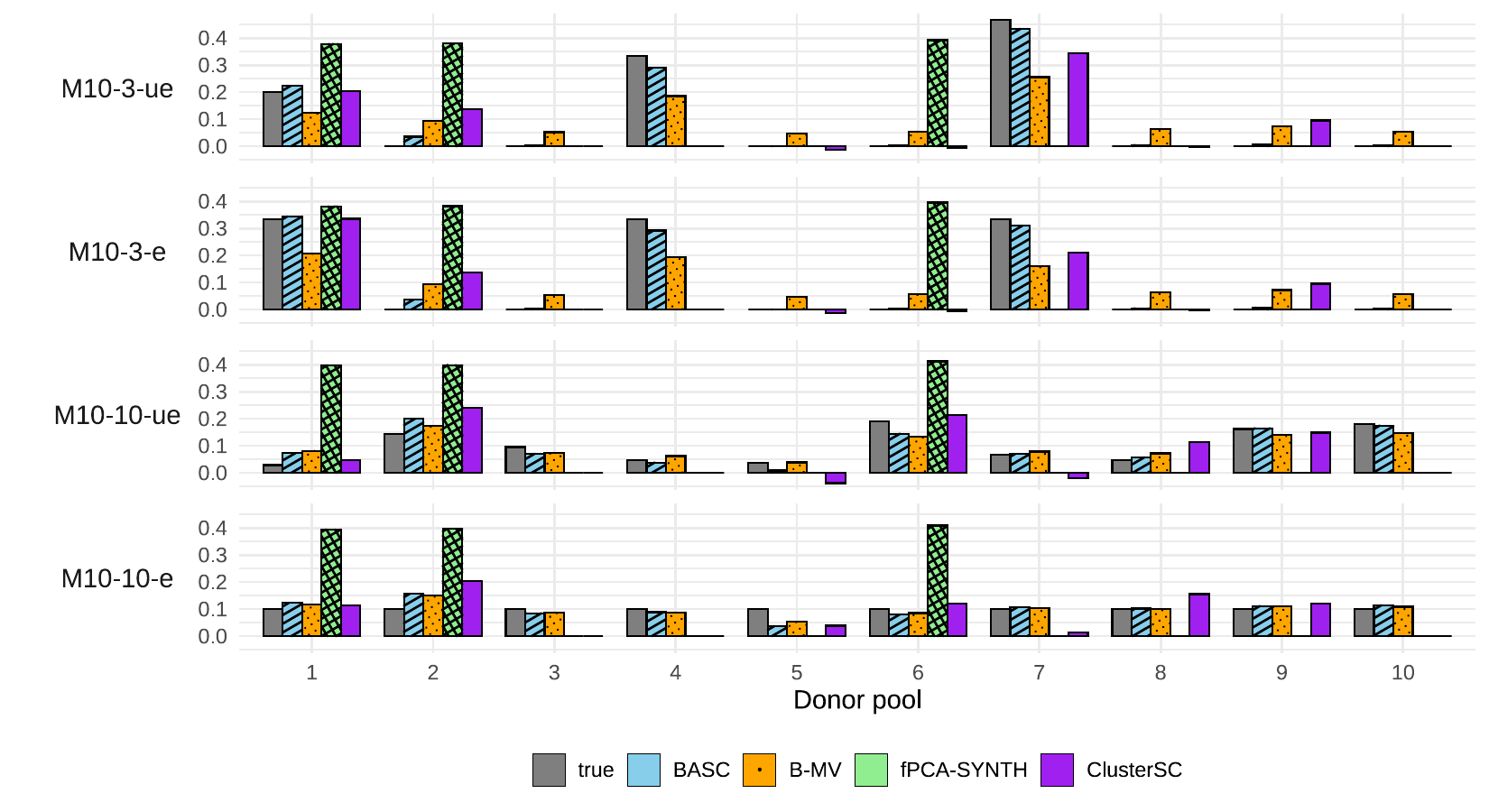}
\caption{Estimated donor weights for the M10- models under the independent outcome setting. 
Each panel corresponds to a different simulation setting and compares the true donor weights with estimates from BASC, B-MV, fPCA-SYNTH, and ClusterSC. 
Gray bars represent the true donor weights; closer alignment with these bars indicates more accurate recovery of the underlying donor-weight structure.}
\label{fig:w_s1to4}
\end{figure}

\subsection{Latent factor outcome setting}

For this setting, we report results only for the ($J=30$) cases, because a sufficiently large donor pool is needed to meaningfully assess 
the dependence structure induced by common latent factors. The prediction error results are reported in Table \ref{tbl:y_LF}, and the donor set selection and weight estimation results are summarized in Table \ref{tbl:w,gamma_LF} and Figure \ref{fig:w_s5to10_LF}.

BASC and B-MV show stable prediction performance under the latent factor setting. Their RMSEs are very similar in most cases. In contrast, fPCA-SYNTH and ClusterSC show relatively larger post-intervention prediction errors, as observed in the independent outcome setting.

Table \ref{tbl:w,gamma_LF} shows that BASC continues to provide relatively stable weight estimates under the latent factor setting. 
However, donor set recovery is weaker than in the independent outcome setting. Under the latent factor model, 
donor trajectories share common latent components and therefore become more similar to one another. 
Consequently, alternative donor sets approximate the treated unit almost as well as the true donor set, 
reducing the separation required by Assumption (A3). This makes the active donor set less identifiable and
 exact donor recovery more difficult, even though prediction accuracy remains largely unaffected.

The behavior of ClusterSC is also consistent with this interpretation. In Figure \ref{fig:w_s5to10_LF}, some estimated 
ClusterSC weights exceed one, and Table \ref{tbl:w,gamma_LF} reports relatively large TAEs for ClusterSC.
 Under the latent factor model, the selected donor matrix can exhibit substantial collinearity because 
 many donors share common latent components. In this setting, unconstrained least-squares estimation 
 would yield unstable coefficient estimates with large positive and negative values. As a result, 
 ClusterSC can achieve a reasonable trajectory fit while producing donor weights that are difficult to
  interpret and far from the true weight vector.

The above behavior may also reflect the fact that ClusterSC and the latent factor setting considered here 
emphasize different forms of dependence. ClusterSC is designed to exploit situations in which donor 
units form well-separated subgroups in a latent unit space. In contrast, our latent factor setting generates 
dependence through common latent factors shared across donors, without necessarily inducing distinct 
donor clusters or requiring the true active donor set to coincide with a particular subgroup. Consequently,
 the clustering step may be less effective in identifying the relevant donors, which helps explain the weaker 
 performance of ClusterSC in this setting.

\begin{table}[htbp!]
\centering
\small
\setlength{\tabcolsep}{5pt}
\begin{tabular}{ll cccc cccc}
\toprule
& & \multicolumn{4}{c}{\textbf{Unbalanced Setting (-ue)}} & \multicolumn{4}{c}{\textbf{Balanced Setting (-e)}} \\
\cmidrule(lr){3-6} \cmidrule(lr){7-10}
& & \multicolumn{2}{c}{Pre-intervention} & \multicolumn{2}{c}{Post-intervention} & \multicolumn{2}{c}{Pre-intervention} & \multicolumn{2}{c}{Post-intervention} \\
\cmidrule(lr){3-4} \cmidrule(lr){5-6} \cmidrule(lr){7-8} \cmidrule(lr){9-10}
\textbf{Setting} & \textbf{Model} & \textbf{Bias} & \textbf{RMSE} & \textbf{Bias} & \textbf{RMSE} & \textbf{Bias} & \textbf{RMSE} & \textbf{Bias} & \textbf{RMSE} \\
\midrule

\multirow{4}{*}{M30-3} 
  & BASC       &  0.010 & 1.714 & 0.091 & 1.636 & 0.007 & 1.703 & 0.088 & 1.565\\
  & B-MV       &  0.159 & 1.787 & 0.064 & 1.651 & 0.087 & 1.762 & 0.061 & 1.557\\
  & fPCA-SYNTH &  -0.385 & 3.403 & -3.565 & 4.202 & -0.457 & 3.318 & -0.807 & 2.918\\
  & ClusterSC  &  -0.000 & 1.374 & -2.067 & 3.472 & -0.000 & 1.463 & -1.903 & 3.221\\

\addlinespace[0.2em]
\hline
\addlinespace[0.2em]

\multirow{4}{*}{M30-9} 
  & BASC       &  -0.008 & 1.743 & 0.096 & 1.527 & 0.000 & 1.689 & 0.091 & 1.505\\
  & B-MV       &  0.070 & 1.761 & 0.073 & 1.514 & 0.042 & 1.699 & 0.041 & 1.504\\
  & fPCA-SYNTH & -0.496 & 3.260 & -0.190 & 2.687 & -0.541 & 3.377 & -0.610 & 2.914\\
  & ClusterSC  & -0.000 & 1.491 & -1.727 & 3.427 & -0.000 & 1.511 & -1.693 & 3.348\\

\addlinespace[0.2em]
\hline
\addlinespace[0.2em]

\multirow{4}{*}{M30-30} 
  & BASC     &  -0.077 & 1.709 & 0.076 & 1.503 & -0.074 & 1.698 & 0.079 & 1.508\\
  & B-MV   &  -0.082 & 1.727 & 0.066 & 1.489 & -0.076 & 1.701 & 0.067 & 1.492\\
  & fPCA-SYNTH & -0.682 & 3.902 & 1.966 & 3.883 & -0.692 & 3.852 & 1.334 & 3.605\\
  & ClusterSC  &  -0.000 & 1.479 & -1.702 & 3.459 & 0.000 & 1.487 & -1.647 & 3.448\\

\bottomrule
\end{tabular}
\caption{Prediction performance: Bias and RMSE in pre- and post-intervention periods. The model's performance was evaluated by calculating bias and RMSE, segmented by the intervention time point. Bold values indicate the lowest post-intervention RMSE among the four methods within each setting and condition.}
\label{tbl:y_LF}
\end{table}

\begin{table}[htbp!]
\centering
\scriptsize
\setlength{\tabcolsep}{4pt}
\resizebox{\textwidth}{!}{
\begin{tabular}{ll ccccc ccccc}
\toprule
& & \multicolumn{5}{c}{\textbf{Unbalanced Setting (-ue)}} 
  & \multicolumn{5}{c}{\textbf{Balanced Setting (-e)}} \\
\cmidrule(lr){3-7} \cmidrule(lr){8-12}
& & \multicolumn{2}{c}{$\mathbf{w}$} 
  & \multicolumn{3}{c}{$\boldsymbol{\gamma}$}
  & \multicolumn{2}{c}{$\mathbf{w}$} 
  & \multicolumn{3}{c}{$\boldsymbol{\gamma}$} \\
\cmidrule(lr){3-4} \cmidrule(lr){5-7}
\cmidrule(lr){8-9} \cmidrule(lr){10-12}
\textbf{Setting} & \textbf{Model}
& \textbf{TAE} & \textbf{RMSE} 
& \textbf{TPR} & \textbf{TNR} & \textbf{Accuracy}
& \textbf{TAE} & \textbf{RMSE} 
& \textbf{TPR} & \textbf{TNR} & \textbf{Accuracy} \\
\midrule

\multirow{4}{*}{M30-3}
  & BASC       & 1.652 & 0.120 & 0.503 & 0.637 & 0.624 & 1.719 & 0.114 & 0.439 & 0.648 & 0.627\\
  & B-MV       & 1.763 & 0.114 & - & - & - & 1.780 & 0.101 & - & - & -\\
  & fPCA-SYNTH & 1.712 & 0.141 & 0.667 & 0.852 & 0.833 & 1.459 & 0.168 & 0.333 & 1.000 & 0.933\\
  & ClusterSC  & 10.954 & 0.515 & 0.000 & 0.667 & 0.600 & 10.976 & 0.520 & 0.000 & 0.667 & 0.600\\
\addlinespace[0.2em]
\hline
\addlinespace[0.2em]

\multirow{4}{*}{M30-9}
  & BASC       & 1.323 & 0.072 & 0.582 & 0.460 & 0.496 & 1.339 & 0.062 & 0.699 & 0.338 & 0.447\\
  & B-MV       & 1.362 & 0.062 & - & - & - & 1.362 & 0.053 & - & - & -\\
  & fPCA-SYNTH & 1.977 & 0.146 & 0.222 & 0.810 & 0.633 & 2.080 & 0.151 & 0.222 & 0.810 & 0.633\\
  & ClusterSC  & 10.127 & 0.481 & 0.111 & 0.619 & 0.467 & 10.100 & 0.485 & 0.111 & 0.619 & 0.467\\
\addlinespace[0.2em]
\hline
\addlinespace[0.2em]

\multirow{4}{*}{M30-30}
  & BASC       & 0.491 & 0.045 & 0.667 & - & 0.667 & 0.151 & 0.034 & 0.731 & - & 0.731\\
  & B-MV       & 0.455 & 0.029 & - & - & - & 0.077 & 0.019 & - & - & -\\
  & fPCA-SYNTH & 1.971 & 0.191 & 0.067 & - & 0.067 & 2.021 & 0.185 & 0.067 & - & 0.067\\
  & ClusterSC  & 9.878 & 0.471 & 0.300 & - & 0.300 & 9.947 & 0.473 & 0.300 & - & 0.300\\
\bottomrule
\end{tabular}
}
\caption{Performance of the four methods in weight estimation and donor set selection under unbalanced (-ue) and balanced (-e) settings. For each setting, $\mathbf{w}$ is evaluated using total absolute error (TAE) and root mean squared error (RMSE). Donor set recovery is summarized by true positive rate (TPR), true negative rate (TNR), and overall accuracy.}
\label{tbl:w,gamma_LF}
\end{table}

\begin{sidewaysfigure}[htbp!]
\centering
\includegraphics[width=\textheight]{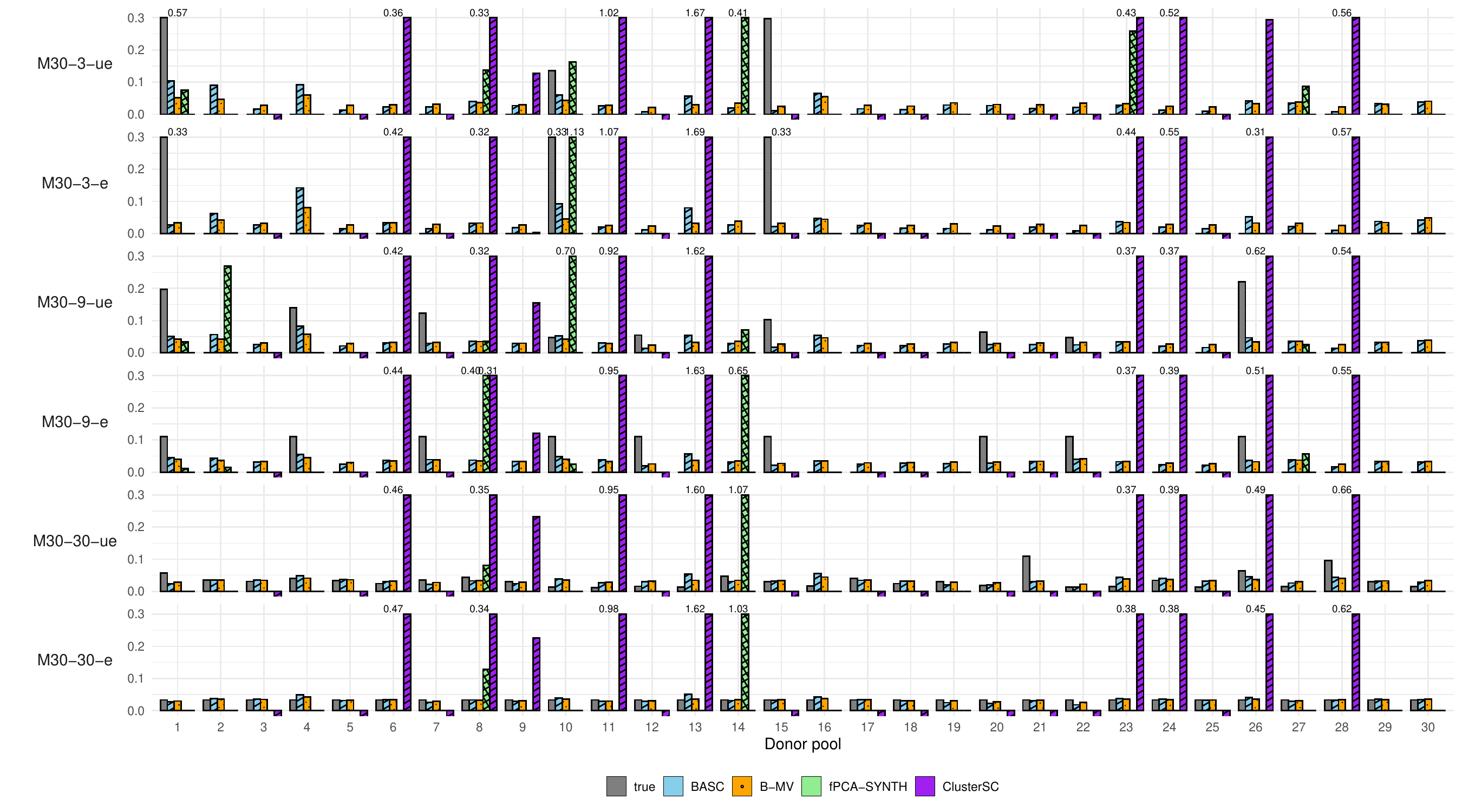}
\caption{Estimated donor weights for the M30- models under the latent factor settings. 
Each panel corresponds to a different simulation setting and compares the true donor weights with estimates from BASC, B-MV, fPCA-SYNTH, and ClusterSC. 
Gray bars represent the true donor weights; values exceeding 0.3 are annotated above the corresponding bars.
Closer alignment with the gray bars indicates more accurate recovery of the underlying donor-weight structure.}
\label{fig:w_s5to10_LF}
\end{sidewaysfigure}

Overall, the simulation results suggest that the advantages of BASC are most pronounced 
when the donor pool contains irrelevant or weakly related units. In such settings, the donor-inclusion 
indicators allow the posterior distribution to concentrate on a smaller set of relevant donors, leading 
to improved donor recovery, weight estimation, and counterfactual prediction. 
When the full donor pool is relevant, BASC remains competitive with B-MV, indicating 
that the proposed donor-selection mechanism does not substantially degrade performance 
even when donor exclusion is unnecessary. These findings suggest that BASC provides 
a robust and flexible framework for synthetic control analysis, particularly in applications 
with large donor pools where only a subset of donors is expected to provide 
a credible counterfactual for the treated unit.

\section{Data example} \label{realdata}

We illustrate the BASC method using the canonical West Germany GDP dataset of \citet{abadie:2015}, which contains 
annual per-capita GDP for \(17\) OECD countries from 1960 to 2003, with the 1990 German reunification treated as the intervention. 
The pre-intervention period is defined as 1960--1990, and the post-intervention period as 1991--2003.

For BASC, the treated unit is modeled as
\begin{equation*}
y_{1t}
= \sum_{j=2}^{17} w_j y_{jt}
  + f_t
  + \alpha_1\,\mathbb{I}(t>1990)
  + \epsilon_t,
\qquad 1960\le t\le 2003.
\end{equation*}
The hyperparameters are set to
\[
\alpha_u=2.5,\quad
a_\tau=3,\quad b_\tau=20000,\quad
a_\kappa=3,\quad b_\kappa=1000,\quad
a_\epsilon=10,\quad b_\epsilon=5000.
\]
We run three MCMC chains of 500,000 iterations after 500,000 burn-in, achieving convergence with Gelman--Rubin 
statistics below $1.01$. Additional diagnostics and sensitivity analyses are provided in Appendix~\ref{appendix:conv_data}.

Figure~\ref{fig:realdata_gamma} displays the posterior means of the donor inclusion indicators \(\boldsymbol{\gamma}\) under BASC. 
Switzerland and Japan are selected most strongly, each
exceeding the posterior inclusion threshold of \(0.5\). Italy and Portugal are also selected under this threshold, but with 
smaller posterior weights and larger relative uncertainty.

The resulting donor weights in Table~\ref{tab:weights} show that BASC places most of the synthetic-control weight on Switzerland and Japan. 
Switzerland provides a natural benchmark because of its geographic proximity and economic similarity to Germany, while Japan captures 
a large industrial economy with strong exposure to international manufacturing and export markets. Italy and Portugal enter as secondary 
donors with smaller posterior contributions that they may help refine 
the synthetic control rather than dominate its construction.

\begin{figure}[htbp]
\centering
\includegraphics[width=0.75\textheight]{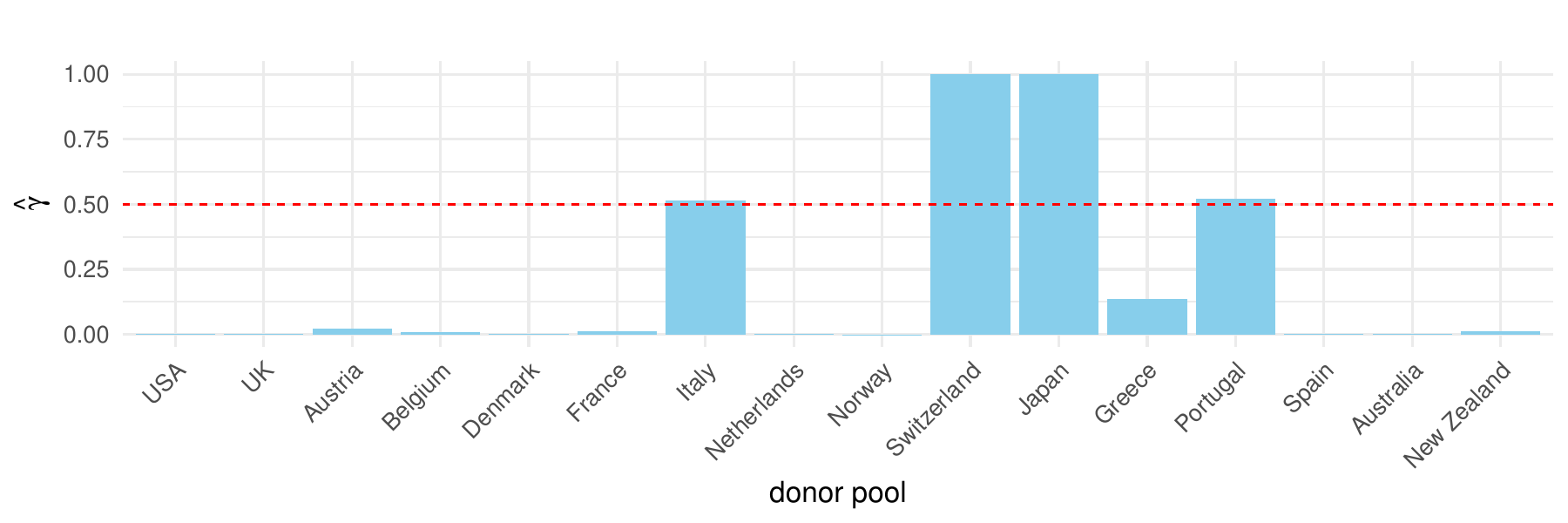}
\caption{Posterior mean of donor inclusion indicators ($\boldsymbol{\gamma}$) in the West Germany application. 
This bar plot represents the mean values computed from the posterior samples of $\boldsymbol{\gamma}$; the horizontal 
red dashed line indicates a value of 0.5.}
\label{fig:realdata_gamma}
\end{figure}

We compare BASC with B-MV, standard SCM, fPCA-SYNTH, and ClusterSC.
Among these, fPCA-SYNTH is implemented using the \texttt{mlsynth} package in Python.
The number of functional principal components is selected by the elbow rule to explain at least 90\% of the
 pre-intervention variation, and the number of clusters is chosen by maximizing the average silhouette coefficient. 
ClusterSC is implemented using the Python package \texttt{syclib}.

The resulting donor weights are reported in Table~\ref{tab:weights}. The table shows that the donor sets selected by fPCA-SYNTH and ClusterSC are quite different from 
those selected by BASC. In addition, unlike our BASC, their procedures are two-stage and 
do not account for the uncertainty of the selection step when estimating the weights.

\begin{table}[htb!]
\centering
\begin{tabular}{lccccc}
\hline
\textbf{Country} & \textbf{BASC} & \textbf{B-MV} & \textbf{s-SCM} & \textbf{fPCA} & \textbf{Cl-SC} \\
\hline
USA         & 0.00$\pm$0.00 & 0.02$\pm$0.01 & 0.05 & - & - \\
UK          & 0.00$\pm$0.00 & 0.02$\pm$0.01 & 0.05 & - & - \\
Austria     & 0.00$\pm$0.01 & 0.04$\pm$0.02 & 0.05 & 0.02 & - \\
Belgium     & 0.00$\pm$0.01 & 0.03$\pm$0.02 & 0.05 & - & - \\
Denmark     & 0.00$\pm$0.00 & 0.02$\pm$0.01 & 0.05 & - & - \\
France      & 0.00$\pm$0.01 & 0.03$\pm$0.02 & 0.04 & 0.35 & - \\
\textbf{Italy}       & \textbf{0.12$\pm$0.14} & 0.06$\pm$0.04 & 0.04 & - & - \\
Netherlands & 0.00$\pm$0.00 & 0.02$\pm$0.01 & 0.04 & - & - \\
Norway      & 0.00$\pm$0.00 & 0.01$\pm$0.01 & 0.05 & 0.49 & - \\
\textbf{Switzerland} & \textbf{0.44$\pm$0.05} & 0.40$\pm$0.03 & 0.04 &- & - \\
\textbf{Japan}       & \textbf{0.37$\pm$0.08} & 0.21$\pm$0.05 & 0.04 &- & - \\
Greece      & 0.01$\pm$0.03 & 0.03$\pm$0.02 & 0.03 & - & 0.09 \\
\textbf{Portugal}    & \textbf{0.05$\pm$0.05} & 0.04$\pm$0.02 & 0.04 & - & 1.09 \\
Spain       & 0.00$\pm$0.00 & 0.02$\pm$0.01 & 0.34 & - & -0.09 \\
Australia   & 0.00$\pm$0.00 & 0.02$\pm$0.01 & 0.04 & - & - \\
New Zealand & 0.00$\pm$0.01 & 0.03$\pm$0.02 & 0.05 & - & 0.61 \\
\hline
\end{tabular}
\caption{Donor weights for West Germany across five methods.
For BASC and B-MV, the values are posterior means with standard deviations in parentheses, while 
standard SCM (s-SCM), fPCA-SYNTH (fPCA), and ClusterSC (Cl-SC)  provide point estimates. 
Countries with posterior inclusion probability $\bar{\gamma}_j \ge 0.5$ under BASC are shown in bold. 
For BASC, B-MV, and standard SCM, the weights satisfy a unit-sum constraint, although the reported 
values may not sum to exactly 1 due to rounding.
In contrast, fPCA-SYNTH and ClusterSC employ relaxed weight constraints.}
\label{tab:weights}
\end{table}

Figure~\ref{fig:realdata} compares the counterfactual GDP trajectories.
B-MV closely tracks the standard SCM, and BASC yields a slightly higher counterfactual level after the early 1990s. 
The trajectories of fPCA-SYNTH and ClusterSC again depart from the other three methods.  
This may be because their predictions are heavily influenced by the clustering results used to select the donor set. 

\begin{figure}[htb!] 
\centering
\includegraphics[width=0.7\textheight]{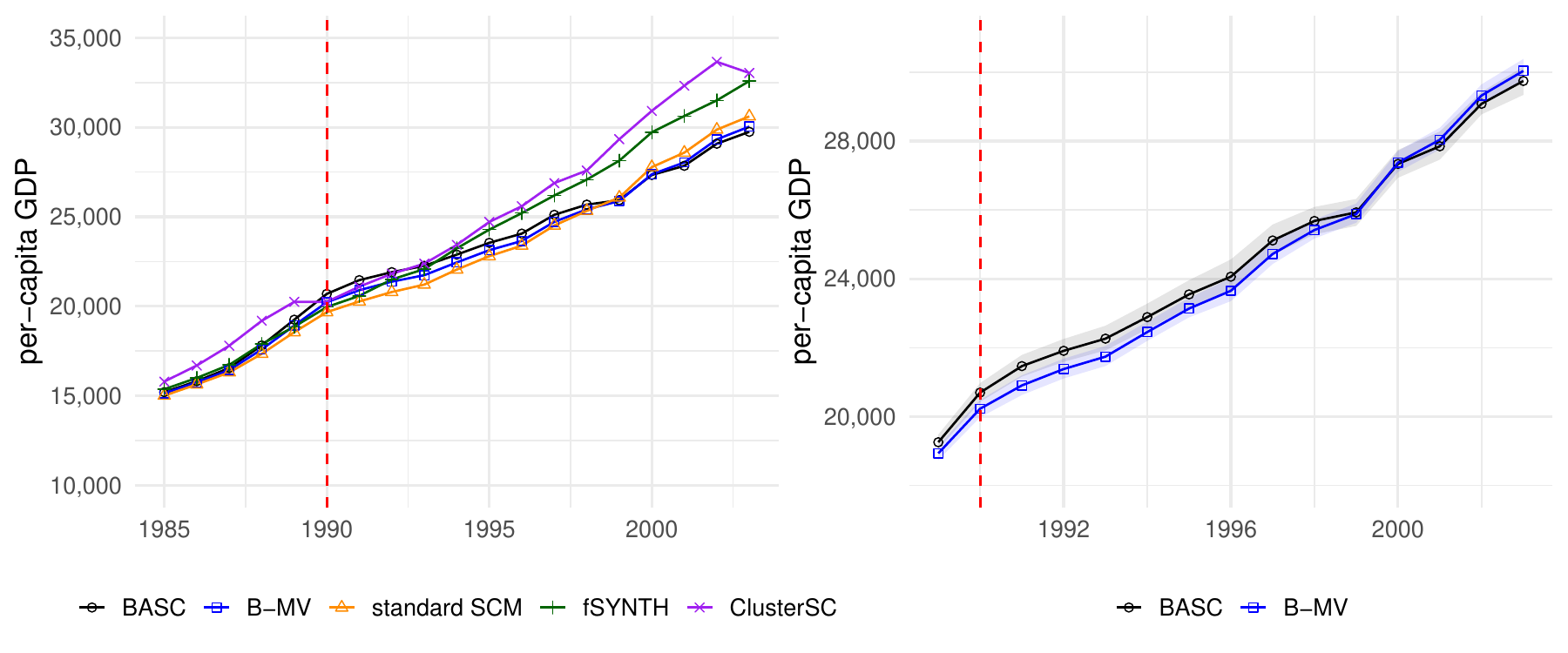}
\caption{Comparison of counterfactual GDP trajectories for West Germany. 
The left panel compares five synthetic control estimators—BASC, B-MV, standard SCM, fPCA-SYNTH, and ClusterSC.
The right panel presents the two Bayesian methods (BASC and B-MV) with 95\% credible bands. The red dashed line marks the 1990 intervention.}
\label{fig:realdata}
\end{figure}

Finally, to summarize the post-intervention effect, we compute the average treatment effect on the treated over the post-intervention period,
\[
\mathrm{ATT} = \frac{1}{T - T_0} \sum_{t>T_0}   \hat{d}_{1t},
\]
where $\hat{d}_{1t}$ is the estimator defined in \eqref{eqn:intervention-effect-est}. 
The posterior mean \(\mathrm{ATT}\) estimates are
$-434$ for BASC, $-218$ for B-MV, $-159$ for standard SCM, $-1{,}655$ for fPCA-SYNTH, and $-2{,}427$ for ClusterSC (per-capita units). The average post-intervention band width over 1991--2003 is $773$ for BASC and $594$ for B-MV.

To illustrate a richer interpretation of the intervention effect, we also considered \(q=2\), with \(D_{1t}=1\) for all post-intervention periods and \(D_{2t}=1\) only for the first three post-intervention periods. Under this specification, \(\alpha_1\) represents the persistent post-intervention shift, while \(\alpha_2\) captures an additional short-run effect. The posterior 
means of  \(\alpha_1\)  and    \(\alpha_2\)   were \(-431\) and \(-33\), respectively, suggesting that the estimated negative effect is driven mainly by the persistent post-intervention component rather than by the additional short-run deviation during the first three post-intervention periods.

\section{Conclusion}

In this paper, we propose a Bayesian hierarchical model for synthetic control that incorporates donor set selection while preserving the standard SCM simplex constraint. The proposed Gamma--Bernoulli construction places posterior mass on simplex faces, where each face corresponds to a selected donor set, thereby allowing exact zero weights without a separate donor screening step. The model further includes a Gaussian process component and a basis representation of the post-intervention effect, so that temporal discrepancy and intervention effects can be modeled within a unified Bayesian framework. We also establish posterior donor-set consistency under a simplified pre-intervention model. Numerical studies under independent and latent factor outcome settings show that BASC performs particularly well when the donor pool contains irrelevant or weakly related units, while remaining competitive in full-donor settings. Finally, the West Germany GDP application illustrates how the proposed model produces sparse and interpretable donor weights in a canonical synthetic control example.

Our model in this paper may further be extended to more complex cases. In particular, spillover effects, situations in which untreated units are indirectly affected by the intervention, have been discussed in the synthetic control literature, as they can invalidate the key assumption that donor units remain unaffected by the 
treatment \citep{Menchetti:2022,Cao:2019}. Relatedly, \citet{stefano:2024} note that spillover effects often affect only a subset of control 
units and emphasizes that treating such units as valid donors may undermine the reliability of synthetic control estimators. This observation
 suggests that spillover exposure is a relevant consideration when constructing donor sets in synthetic control methods, rather than being 
 a purely incidental source of noise.

 From a Bayesian perspective, \citet{morales:2025} introduce a framework that incorporates spillover effects
  by defining an exposure mapping based on external covariates to characterize potential spillover exposure among control units.
   Spillover effects are then included as an explicit component of the outcome model and 
regularized using spike-and-slab priors.

Our framework differs in emphasis by focusing on donor set selection and donor-weight uncertainty within a Bayesian synthetic control setting. 
From this perspective, spillover effects are relevant because they affect not only outcome modeling but also the validity of donor units. 
A control unit exposed to spillovers may no longer provide a credible counterfactual for the treated unit, 
even when its pre-intervention trajectory closely matches that of the treated unit. A natural extension of the proposed framework would be to incorporate spillover information directly into the donor-selection mechanism or the outcome model. Such an extension would allow donor comparability,  donor validity, and spillover adjustment to be addressed jointly,  and thereby provide a more comprehensive Bayesian framework for synthetic control analysis.

\section{Code availability}
The code used in this study is available on GitHub: \url{https://github.com/sll-lee/paper-BASC}.

\section{Acknowledgement}
We greatly appreciate the Associate Editor and the two anonymous reviewers for their constructive comments and suggestions, 
which have significantly improved the quality of this paper.
This work was supported by the National Research Foundation of Korea (NRF) grant funded by the Ministry of Education (RS-2025-25420879 (S. Lee), No. RS-2025-00520739 (J. Lim), No. RS-2026-25480893 (J. Kim)).

\newpage 
\appendix

\section{Proof of Theorem 3.3}\label{app:proof}

The proof of Theorem~\ref{thm:donor-consistency} requires a local
integral bound near the embedded true weight vector on an overfitted
simplex face, that is, a face corresponding to \(S\supsetneq S^*\).
We first establish this bound.

\begin{lemma}[Local integral bound near the boundary of an overfitted simplex face]
\label{lem:overfitted-local-bound}
Let \(S\supsetneq S^*\), and define
\[
A=S\setminus S^*,
\qquad
r=|A|,
\qquad
d^*=|S^*|-1.
\]
Embed \(w_{S^*}^*\) into \(\Delta_S\) as the boundary point
\[
w_{S,j}^0=
\begin{cases}
w_j^*, & j\in S^*,\\
0, & j\in A.
\end{cases}
\]
Suppose that \(w_{S^*}^*\) is an interior point of \(\Delta_{S^*}\), and let
\[
\pi_S(w)
=
\frac{\Gamma(|S|\alpha_u)}
     {\Gamma(\alpha_u)^{|S|}}
\prod_{j\in S}w_j^{\alpha_u-1},
\qquad
\alpha_u>0,
\]
be the Dirichlet density on \(\Delta_S\).

For a fixed constant \(C_0>0\), define
\[
R
=
\left\{
w\in\Delta_S:
\|w-w_S^0\|\le \rho
\right\}
\]
and
\[
I_{S,T_0}(\rho)
=
\int_{R}
\exp\left\{
-C_0T_0\|w-w_S^0\|^2
\right\}
\pi_S(w)\,dw.
\]
Then, for sufficiently small fixed \(\rho>0\), there exists a constant
\(C>0\), independent of \(T_0\), such that
\[
I_{S,T_0}(\rho)
\le
C
T_0^{-d^*/2}
T_0^{-\alpha_u r/2}.
\]
\end{lemma}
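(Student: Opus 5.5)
The plan is to parametrize \(\Delta_S\) in local coordinates around \(w_S^0\) and split the integral into a ``tangential'' part along the true face and a ``normal'' part in the directions \(j\in A\). The two parts contribute \(T_0^{-d^*/2}\) and \(T_0^{-\alpha_u r/2}\), respectively.

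Since \(|S|=d^*+1+r\), the face \(\Delta_S\) has dimension \(d^*+r\). We drop one coordinate \(j_0\in S^*\) and use the free coordinates \(w_j\) for \(j\in S^*\setminus\{j_0\}\) (call this \(v\in\mathbb R^{d^*}\)) and \(w_A=(w_j)_{j\in A}\in[0,\infty)^r\). The remaining coordinate is \(w_{j_0}=1-\sum v-\sum w_A\). In these coordinates the Lebesgue measure on \(\Delta_S\) is the standard one up to a constant factor.

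The first step is to bound the prior density on \(R\). Because \(w^*_{S^*}\) is interior with \(\min_{j\in S^*}w^*_j>c_w\), we choose \(\rho<c_w/2\). Then every \(w\in R\) has \(w_j\in[c_w/2,\,1]\) for \(j\in S^*\) (including \(j_0\)). Hence \(\prod_{j\in S^*}w_j^{\alpha_u-1}\) is bounded above by a constant depending only on \(c_w,\alpha_u,|S^*|\), whether \(\alpha_u\ge1\) or \(\alpha_u<1\). This yields
\[
\pi_S(w)\le C_1\prod_{j\in A}w_j^{\alpha_u-1}\qquad\text{on } R.
\]

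The second step bounds the Gaussian factor. We have \(\|w-w_S^0\|^2\ge \|v-v^*\|^2+\|w_A\|^2\), since the dropped coordinate only adds a nonnegative term. Enlarging \(R\) to the product region \(\{\|v-v^*\|\le\rho\}\times\{w_A\ge0,\ \|w_A\|\le\rho\}\) is permitted because the integrand is nonnegative. The integral then factorizes as
\[
I_{S,T_0}(\rho)\le C_1\int_{\mathbb R^{d^*}}e^{-C_0T_0\|v-v^*\|^2}\,dv\;\prod_{j\in A}\int_0^\infty e^{-C_0T_0 x^2}x^{\alpha_u-1}\,dx .
\]
The first factor equals \((\pi/(C_0T_0))^{d^*/2}\). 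For each \(j\in A\), the substitution \(x=y/\sqrt{T_0}\) gives \(T_0^{-\alpha_u/2}\int_0^\infty e^{-C_0y^2}y^{\alpha_u-1}dy\). This last integral equals \(\tfrac12 C_0^{-\alpha_u/2}\Gamma(\alpha_u/2)\), which is finite precisely because \(\alpha_u>0\). Multiplying the \(r\) such factors gives \(T_0^{-\alpha_u r/2}\), and the claimed bound follows with \(C\) independent of \(T_0\).

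The main obstacle is the boundary singularity of \(\prod_{j\in A}w_j^{\alpha_u-1}\) when \(\alpha_u<1\). It must be handled by exact one-dimensional integration rather than by a sup bound, which the scaling argument above does. A second point of care is that the coordinate change keeps the Jacobian constant and that dropping \(w_{j_0}-w^*_{j_0}\) only weakens the exponent. If the case \(d^*=0\) arises, the \(v\)-factor is simply absent, and the same computation applies.
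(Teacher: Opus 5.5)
Your proof is correct. Its skeleton matches the paper's, which has four steps:
\begin{enumerate}
\item bound the $S^*$-part of the Dirichlet density by a constant on $R$, using $\min_{j\in S^*}w_j^*>c_w$;
\item dominate the Gaussian factor by a product form;
\item enlarge the domain to $\mathbb{R}^{d^*}\times[0,\infty)^r$;
\item evaluate a $d^*$-dimensional Gaussian integral times $r$ copies of $\int_0^\infty e^{-cT_0x^2}x^{\alpha_u-1}\,dx=\tfrac12(cT_0)^{-\alpha_u/2}\Gamma(\alpha_u/2)$.
\end{enumerate}

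You differ from the paper in the choice of chart. The paper writes $w_j=(1-\eta_+)v_j$ for $j\in S^*$ and $w_j=\eta_j$ for $j\in A$, with $v\in\Delta_{S^*}$. That choice forces two extra steps. First, the paper must prove the norm comparison $\|w-w_S^0\|^2\ge c_2(\|v-w^*_{S^*}\|^2+\|\eta\|^2)$, using $v-w^*_{S^*}=\{(1-\eta_+)v-w^*_{S^*}\}+\eta_+v$ and $\eta_+\le\sqrt r\,\|\eta\|$. Second, it must compute and control the Jacobian $(1-\eta_+)^{|S^*|-1}$, which requires $\rho$ small enough that $1-\eta_+\ge 1/2$, and bound $(1-\eta_+)v_j\ge c_w/4$.

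Your affine chart, which drops one coordinate $j_0\in S^*$, makes both steps trivial. The Jacobian is constant. The inequality $\|w-w_S^0\|^2\ge\|v-v^*\|^2+\|w_A\|^2$ holds just by discarding the $j_0$ term, with constant $1$ instead of $c_2$. The only smallness condition you need is $\rho<c_w/2$.

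The paper's product chart buys a more geometric picture: $v$ moves within the true face while $\eta$ records the mass leaked to the redundant donors. For this bound, though, your version is shorter and equally rigorous.

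One small point: your exact one-dimensional integration is needed for all $\alpha_u$, not only $\alpha_u<1$. For $\alpha_u>1$, a sup bound on $\prod_{j\in A}w_j^{\alpha_u-1}$ would give only $T_0^{-r/2}$ rather than the claimed $T_0^{-\alpha_u r/2}$.
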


\begin{proof}
For \(w\in R\), write
\[
w_j=(1-\eta_+)v_j,
\quad j\in S^*,
\qquad
w_j=\eta_j,
\quad j\in A,
\]
where
\[
v=(v_j)_{j\in S^*}\in\Delta_{S^*},
\qquad
\eta_j\ge0,
\qquad
\eta_+=\sum_{j\in A}\eta_j.
\]
This parameterization preserves the simplex constraint, since
\[
\sum_{j\in S}w_j
=
(1-\eta_+)\sum_{j\in S^*}v_j
+
\sum_{j\in A}\eta_j
=
1.
\]
The boundary point \(w_S^0\) corresponds to
\[
v=w_{S^*}^*,
\qquad
\eta=0.
\]

Since
\[
w-w_S^0
=
\left(
(1-\eta_+)v-w_{S^*}^*,
\eta
\right),
\]
we have
\[
\|w-w_S^0\|^2
=
\|(1-\eta_+)v-w_{S^*}^*\|^2
+
\|\eta\|^2.
\]
Moreover,
\[
v-w_{S^*}^*
=
\{(1-\eta_+)v-w_{S^*}^*\}
+
\eta_+v.
\]
Because \(v\in\Delta_{S^*}\), \(\|v\|\) is uniformly bounded. Also,
\[
\eta_+
=
\sum_{j\in A}\eta_j
\le
\sqrt{r}\,\|\eta\|.
\]
Therefore, there exists a constant \(K>0\) such that
\[
\|v-w_{S^*}^*\|
\le
\|(1-\eta_+)v-w_{S^*}^*\|
+
K\|\eta\|.
\]
Hence, for some constant \(C_1>0\),
\[
\|v-w_{S^*}^*\|^2
\le
C_1\|w-w_S^0\|^2.
\]
Since
\[
\|\eta\|^2
\le
\|w-w_S^0\|^2,
\]
there exists a constant \(C_2>0\) such that
\[
\|v-w_{S^*}^*\|^2+\|\eta\|^2
\le
C_2\|w-w_S^0\|^2.
\]
Equivalently, with \(c_2=C_2^{-1}>0\),
\begin{equation}
\label{eq:local-coordinate-bound}
\|w-w_S^0\|^2
\ge
c_2
\left(
\|v-w_{S^*}^*\|^2+\|\eta\|^2
\right).
\end{equation}

It follows from Equation \eqref{eq:local-coordinate-bound} that, for some
constant \(C_3>0\),
\[
\exp\left\{
-C_0T_0\|w-w_S^0\|^2
\right\}
\le
\exp\left\{
-C_3T_0
\left(
\|v-w_{S^*}^*\|^2+\|\eta\|^2
\right)
\right\}.
\]

We next bound the prior measure in these coordinates. Under the
Dirichlet density,
\[
\pi_S(w)
\propto
\prod_{j\in S}w_j^{\alpha_u-1}.
\]
Using the parameterization above,
\[
\pi_S(w(v,\eta))
\propto
\left[
\prod_{j\in S^*}
\{(1-\eta_+)v_j\}^{\alpha_u-1}
\right]
\left[
\prod_{j\in A}
\eta_j^{\alpha_u-1}
\right].
\]

Since \(w_{S^*}^*\) is an interior point of \(\Delta_{S^*}\), there
exists \(c_w>0\) such that
\[
\min_{j\in S^*}w_j^*>c_w.
\]
By choosing \(\rho>0\) sufficiently small, we may ensure that, for
all \(w\in R\),
\[
v_j\ge \frac{c_w}{2},
\qquad
1-\eta_+\ge\frac12.
\]
Therefore,
\[
(1-\eta_+)v_j
\ge
\frac{c_w}{4},
\qquad
j\in S^*.
\]
It follows that there exist constants \(0<C_4<C_5<\infty\) such that
\[
C_4
\le
\prod_{j\in S^*}
\{(1-\eta_+)v_j\}^{\alpha_u-1}
\le
C_5
\]
on \(R\).

Using \(v_1,\ldots,v_{|S^*|-1}\) as free coordinates on
\(\Delta_{S^*}\), the Jacobian of the transformation
\((v,\eta)\mapsto w\) is
\[
\left|
\frac{\partial w}{\partial(v,\eta)}
\right|
=
(1-\eta_+)^{|S^*|-1}.
\]
On \(R\), since \(1-\eta_+\ge1/2\),
\[
2^{-(|S^*|-1)}
\le
\left|
\frac{\partial w}{\partial(v,\eta)}
\right|
\le
1.
\]
Hence, for some constant \(C_6>0\),
\begin{equation}
\label{eq:local-prior-bound}
\pi_S(w)\,dw
\le
C_6
\left[
\prod_{j\in A}
\eta_j^{\alpha_u-1}
\right]
dv\,d\eta
\end{equation}
on \(R\).

Combining Equation \eqref{eq:local-coordinate-bound} and
Equation \eqref{eq:local-prior-bound}, we obtain
\[
\begin{aligned}
I_{S,T_0}(\rho)
&\le
C_6
\int
\exp\left\{
-C_3T_0
\left(
\|v-w_{S^*}^*\|^2+\|\eta\|^2
\right)
\right\}
\left[
\prod_{j\in A}
\eta_j^{\alpha_u-1}
\right]
dv\,d\eta.
\end{aligned}
\]
Since the integrand is nonnegative, enlarging the integration region
gives
\[
\begin{aligned}
I_{S,T_0}(\rho)
&\le
C_6
\int_{\mathbb R^{d^*}}
\exp\left\{
-C_3T_0\|v-w_{S^*}^*\|^2
\right\}
dv \\
&\qquad\times
\prod_{j\in A}
\int_0^\infty
\exp\left\{
-C_3T_0\eta_j^2
\right\}
\eta_j^{\alpha_u-1}
d\eta_j.
\end{aligned}
\]
The first integral is
\[
\int_{\mathbb R^{d^*}}
\exp\left\{
-C_3T_0\|v-w_{S^*}^*\|^2
\right\}
dv
=
\left(
\frac{\pi}{C_3T_0}
\right)^{d^*/2}.
\]
For each \(j\in A\),
\[
\int_0^\infty
\exp\left\{
-C_3T_0\eta_j^2
\right\}
\eta_j^{\alpha_u-1}
d\eta_j
=
\frac12
(C_3T_0)^{-\alpha_u/2}
\Gamma\left(\frac{\alpha_u}{2}\right).
\]
Therefore, for some constant \(C>0\),
\[
I_{S,T_0}(\rho)
\le
C
T_0^{-d^*/2}
T_0^{-\alpha_u r/2}.
\]
\end{proof}

\begin{proof}[Proof of Theorem \ref{thm:donor-consistency}]
For each nonempty candidate donor set
\(S\subset\{1,\ldots,J\}\), define the marginal likelihood
\begin{equation}\label{eq:marginal-likelihood}
\begin{aligned}
m_S(y_1^-)
&=\int_{\Delta_S}
p(y_1^-\mid w,S)\pi_S(w)\,dw,\\
p(y_1^-\mid w,S)
&=
(2\pi\sigma^2)^{-T_0/2}
\exp\left\{
-\frac{1}{2\sigma^2}
\|y_1^- - Y_{0,S}^-w\|^2
\right\}.
\end{aligned}
\end{equation}

The posterior probability of model \(S\) is
\[
\Pi(S_\gamma=S\mid y_1^-,Y_0^-)
=
\frac{\pi(S)m_S(y_1^-)}
{\sum_{S\subset\{1,\ldots,J\},\,S\neq\emptyset}\pi(S)m_S(y_1^-)}.
\]

Thus, it is enough to show that, for every \(S\neq S^*\),
\[
\frac{\pi(S)m_S(y_1^-)}
{\pi(S^*)m_{S^*}(y_1^-)}
\rightarrow 0
\]
in \(P^*\)-probability. By Assumption~(A4),
\[
\frac{\pi(S)}{\pi(S^*)}=O(1)
\]
for every nonempty candidate donor set \(S\). Therefore, it suffices to prove that
\begin{equation}
\label{eq:marginal-likelihood-ratio}
\frac{m_S(y_1^-)}
{m_{S^*}(y_1^-)}
\rightarrow 0
\end{equation}
in \(P^*\)-probability for every \(S\neq S^*\).

For the remainder of the proof, let $\mu^*=Y_{0,S^*}^-w_{S^*}^*.$ 
Then, by Assumption~(A1),  $y_1^-=\mu^*+\epsilon$. 
We establish Equation \eqref{eq:marginal-likelihood-ratio} by considering two cases: \(S\nsupseteq S^*\) and \(S\supsetneq S^*\).

\medskip
\noindent\underline{\text{(i) The case \(S\nsupseteq S^*\).}}\\
For
\(w\in\Delta_S\), define
\[
d_S(w)=\mu^* - Y_{0,S}^-w.
\]
Then
\[
\|y_1^- - Y_{0,S}^-w\|^2
=
\|\epsilon+d_S(w)\|^2
=
\|\epsilon\|^2+\|d_S(w)\|^2+2\epsilon^\top d_S(w).
\]
By Assumption (A3),
\[
\inf_{w\in\Delta_S}\|d_S(w)\|^2\ge T_0c_0.
\]
Moreover, by Assumption (A2), compactness of \(\Delta_S\), and fixed \(J\),
\[
\frac{1}{T_0}
\sup_{w\in\Delta_S}|\epsilon^\top d_S(w)|
=
o_{P^*}(1).
\]
Therefore,
\[
\inf_{w\in\Delta_S}
\|y_1^- - Y_{0,S}^-w\|^2
\ge
\|\epsilon\|^2+T_0\{c_0-o_{P^*}(1)\}.
\]
It follows that
\begin{equation}
\label{case1_numerator}
    m_S(y_1^-)
\le
(2\pi\sigma^2)^{-T_0/2}
\exp\left\{
-\frac{1}{2\sigma^2}
\left[
\|\epsilon\|^2+T_0\{c_0-o_{P^*}(1)\}
\right]
\right\}.
\end{equation}

We now lower bound \(m_{S^*}(y_1^-)\). Let \(d^*=|S^*|-1\). Since \(w_{S^*}^*\) is an interior point of
\(\Delta_{S^*}\), there exists \(r>0\) such that
\[
B_{T_0}
=
\left\{
w\in\Delta_{S^*}:
w=w_{S^*}^*+\frac{h}{\sqrt{T_0}},
\quad
\mathbf 1^\top h=0,\quad
\|h\|\le r
\right\}
\subset\Delta_{S^*}
\]
for all sufficiently large \(T_0\). For \(w\in B_{T_0}\),
\[
y_1^- - Y_{0,S^*}^-w
=
\epsilon - Y_{0,S^*}^-\frac{h}{\sqrt{T_0}}.
\]
Hence
\[
\|y_1^- - Y_{0,S^*}^-w\|^2
=
\|\epsilon\|^2
-
\frac{2}{\sqrt{T_0}}\epsilon^\top Y_{0,S^*}^-h
+
\frac{1}{T_0}
h^\top (Y_{0,S^*}^-)^\top Y_{0,S^*}^-h.
\]
By Assumption (A2), the last term is \(O_{P^*}(1)\) uniformly over \(\|h\|\le r\). Also,
\[
\frac{1}{\sqrt{T_0}}\epsilon^\top Y_{0,S^*}^-h
=
O_{P^*}(1)
\]
uniformly over \(\|h\|\le r\). Therefore,
\[
\sup_{w\in B_{T_0}}
\left|
\|y_1^- - Y_{0,S^*}^-w\|^2-\|\epsilon\|^2
\right|
=
O_{P^*}(1).
\]
Since the Dirichlet density \(\pi_{S^*}(w)\) is positive and continuous in a
neighborhood of the interior point \(w_{S^*}^*\), there exists a constant \(a>0\)
such that \(\pi_{S^*}(w)\ge a\) on \(B_{T_0}\) for all sufficiently large \(T_0\).
Thus,
\[
m_{S^*}(y_1^-)
\ge
\int_{B_{T_0}}
p(y_1^-\mid w,S^*)\pi_{S^*}(w)\,dw.
\]
Since \(\operatorname{Vol}(B_{T_0})=C_r\cdot  T_0^{-d^*/2}\), we obtain
\begin{equation}
\label{case1_denominator}
    m_{S^*}(y_1^-)
\ge
a C_r
T_0^{-d^*/2}
(2\pi\sigma^2)^{-T_0/2}
\exp\left\{
-\frac{\|\epsilon\|^2}{2\sigma^2}
\right\}
\exp\{-O_{P^*}(1)\}.
\end{equation}

Using the two bounds above, for every \(S\nsupseteq S^*\),
\begin{equation} \label{case1_final}
    \frac{m_S(y_1^-)}{m_{S^*}(y_1^-)}
\le
C^{-1}T_0^{d^*/2}
\exp\{O_{P^*}(1)\}
\exp\left[
-\frac{T_0}{2\sigma^2}\{c_0-o_{P^*}(1)\}
\right]
\rightarrow 0.
\end{equation}


\medskip
\noindent\underline{\text{(ii) The case \(S\supsetneq S^*\).}}\\
Let
\[
A=S\setminus S^*,\qquad r=|A|=|S|-|S^*|>0.
\]
Here \(S\) contains all true donors and also \(r\) redundant donors.

For the larger model \(S\), define the embedded vector \(w_S^0\in\Delta_S\) by
\[
w_j^0=w_j^*,\quad j\in S^*,
\qquad
w_j^0=0,\quad j\in A.
\]
Then
\[
Y_{0,S}^-w_S^0
=
Y_{0,S^*}^-w_{S^*}^*
=
\mu^*.
\]

By Equation \eqref{eq:marginal-likelihood} and
\(y_1^-=\mu^*+\epsilon\),
\[
\begin{aligned}
p(y_1^-\mid w,S)
&=
(2\pi\sigma^2)^{-T_0/2}
\exp\left\{
-\frac{1}{2\sigma^2}
\|\epsilon+d_S(w)\|^2
\right\},
\end{aligned}
\]
where
\[
d_S(w)=\mu^*-Y_{0,S}^-w.
\]

To bound the likelihood, expand the quadratic term as
\[
\|\epsilon+d_S(w)\|^2
=
\|\epsilon\|^2
+
2\epsilon^\top d_S(w)
+
\|d_S(w)\|^2.
\]

We first obtain a quadratic lower bound for \(\|d_S(w)\|^2\).
Let
\[
Q_{S,T_0}
=
\frac{1}{T_0}(Y_{0,S}^-)^\top Y_{0,S}^-.
\]
By Assumption (A2), \(Q_{S,T_0}\to Q_S\), where \(Q_S\) is positive definite. Hence there
exists \(\lambda_S>0\) such that, for all sufficiently large \(T_0\),
\[
\lambda_{\min}(Q_{S,T_0})\ge \lambda_S.
\]
Therefore,
\[
\begin{aligned}
\|d_S(w)\|^2
&=
\|Y_{0,S}^-(w_S^0-w)\|^2\\
&=
T_0(w-w_S^0)^\top Q_{S,T_0}(w-w_S^0)\\
&\ge
T_0\lambda_S\|w-w_S^0\|^2.
\end{aligned}
\]

We next control the cross term \(\epsilon^\top d_S(w)\).
By Assumption (A2), the Gaussianity of \(\epsilon\), and fixed \(J\),
\[
T_0^{-1/2}(Y_{0,S}^-)^\top \epsilon=O_{P^*}(1).
\]
Therefore,
\[
\begin{aligned}
-\epsilon^\top d_S(w)
&\le
\left|
\{(Y_{0,S}^-)^\top\epsilon\}^\top(w_S^0-w)
\right|\\
&\le
\sqrt{T_0}\,O_{P^*}(1)\,
\|w_S^0-w\|.
\end{aligned}
\]

By the preceding lower bound, $\| d_S(w)\|\geq\sqrt{T_0 \lambda_S}\cdot\|w_S^0-w\|$, so
\[
\begin{aligned}
    |\epsilon^\top d_S(w)|
&\le O_{P^*}(1) \|d_S(w)\|\\
& \le O_{P^*}(1) +\frac{1}{4}\|d_S(w)\|^2.
\end{aligned}
\]

Combining the above bounds, there exists a constant \(C_0>0\) such that
\[
\begin{aligned}
p(y_1^-\mid w,S)
&=
(2\pi\sigma^2)^{-T_0/2}
\exp\left[
-\frac{\|\epsilon\|^2}{2\sigma^2}
-\frac{1}{\sigma^2}\epsilon^\top d_S(w)
-\frac{\|d_S(w)\|^2}{2\sigma^2}
\right] \\
&\le
L_0\exp\{O_{P^*}(1)\}
\exp\left\{
-C_0T_0\|w-w_S^0\|^2
\right\},
\end{aligned}
\]
where
\[
L_0=
(2\pi\sigma^2)^{-T_0/2}
\exp\left\{
-\frac{\|\epsilon\|^2}{2\sigma^2}
\right\}.
\]

Hence,
\begin{equation}
\label{eq:overfitted-marginal-bound}
\begin{gathered}
m_S(y_1^-)
\le
L_0\exp\{O_{P^*}(1)\} I_{S,T_0},\\
I_{S,T_0}
=
\int_{\Delta_S}
\exp\left\{
-C_0T_0\|w-w_S^0\|^2
\right\}
\pi_S(w)\,dw.
\end{gathered}
\end{equation}

We now bound \(I_{S,T_0}\).
Fix a sufficiently small constant \(\rho>0\), and define
\[
R
=
\left\{
w\in\Delta_S:
\|w-w_S^0\|\le\rho
\right\}.
\]
Decompose
\[
I_{S,T_0}=I_1+I_2,
\]
where
\[
I_1
=
\int_{R}
\exp\left\{
-C_0T_0\|w-w_S^0\|^2
\right\}
\pi_S(w)\,dw
\]
and
\[
I_2
=
\int_{\Delta_S\setminus R}
\exp\left\{
-C_0T_0\|w-w_S^0\|^2
\right\}
\pi_S(w)\,dw.
\]

By Lemma~\ref{lem:overfitted-local-bound},
\[
I_1
\le
C
T_0^{-d^*/2}
T_0^{-\alpha_u r/2}.
\]
On \(\Delta_S\setminus R\),
\[
\|w-w_S^0\|>\rho,
\]
and therefore
\[
I_2
\le
\exp\left\{
-C_0T_0\rho^2
\right\}
\int_{\Delta_S}\pi_S(w)\,dw
=
\exp\left\{
-C_0T_0\rho^2
\right\}.
\]

Combining the bounds for \(I_1\) and \(I_2\), we obtain
\[
I_{S,T_0}
\le
C_7T_0^{-d^*/2}T_0^{-\alpha_u r/2}
+
\exp\{-C_0T_0\rho^2\}.
\]
Since the exponential term is smaller than the polynomial term for large \(T_0\),
there exists a constant \(C_8>0\) such that
\[
I_{S,T_0}
\le
C_8T_0^{-d^*/2}T_0^{-\alpha_u r/2}.
\]

Consequently,
\begin{equation}\label{case2_numerator}
m_S(y_1^-)
\le
C_8
L_0
\exp\{O_{P^*}(1)\}
T_0^{-d^*/2}
T_0^{-\alpha_u r/2}.
\end{equation}

Combining Equation \eqref{case2_numerator} with the lower bound
in Equation \eqref{case1_denominator}, we obtain
\begin{equation}\label{case2_final}
    \frac{m_S(y_1^-)}{m_{S^*}(y_1^-)}
=
O_{P^*}\left(T_0^{-\alpha_u r/2}\right)
\to0.
\end{equation}

Combining the two cases, Equation \eqref{eq:marginal-likelihood-ratio}
holds for every \(S\neq S^*\). By Assumption~(A4),
\[
\frac{\pi(S)m_S(y_1^-)}
     {\pi(S^*)m_{S^*}(y_1^-)}
\longrightarrow 0
\]
for every \(S\neq S^*\).

Since \(J\) is fixed, the number of nonempty candidate donor sets is finite. Therefore,
\[
\sum_{S\neq S^*}
\frac{\pi(S)m_S(y_1^-)}
{\pi(S^*)m_{S^*}(y_1^-)}
\rightarrow 0.
\]
Finally,
\[
\Pi(S_\gamma=S^*\mid y_1^-,Y_0^-)
=
\frac{1}
{1+
\sum_{S\neq S^*}
\frac{\pi(S)m_S(y_1^-)}
{\pi(S^*)m_{S^*}(y_1^-)}
}
\rightarrow 1.
\]
\end{proof}

\section{Full conditional distributions}\label{appendix:fc}
As mentioned in Section \ref{sec:mcmc}, we derive the full conditional distributions under our model. Let $\boldsymbol{\vartheta}=(f, \sigma^2_\epsilon, \tau^2, \kappa, \eta,\boldsymbol{\gamma},\mathbf{u},\boldsymbol{\alpha})$, and define $\tilde{Y}
= \big[\, \mathbf y_2 \;\cdots\; \mathbf y_{J+1} \,\big]'
\in \mathbb{R}^{J\times T},
~~~
\mathbf{y}_i=(y_{i1},\cdots,y_{iT})' \in \mathbb{R}^{T}$.

By combining \eqref{eqn:posterior1} and \eqref{eqn:posterior2}, we provide a more detailed representation of the posterior distribution, which serves as the foundation for deriving the full conditional distributions of each parameter. 
\begin{equation*}
\begin{aligned}
\pi(\boldsymbol{\vartheta} \; | \; \mathbf{y}_1, \tilde{Y})
\propto &\Big\{\prod _{j=2}^{J+1}\pi(u_j|\alpha_u) \; \pi(\gamma_j|\eta) \Big\} \pi(\eta) \,\pi(\alpha|\sigma^2_\alpha)\,\pi(\sigma^2_\alpha)\,\pi(f|\tau^2,\kappa)\, \pi(\tau^2)\, \pi(\kappa)\, \pi(\sigma^2_\epsilon)\\
& \qquad \times \Bigg\{\prod_{t=1}^T \text{N}\bigg(y_{1t}\,\Big|\,\sum_{j=2}^{J+1} \Big( \frac{u_j\gamma_j}{\sum_k u_k\gamma_k} \Big)y_{jt}+\sum_{m=1}^q \alpha_m D_{mt}+f_t, \sigma^2_\epsilon \bigg) \Bigg\}\\
\propto &\Big\{\prod _{j=2}^{J+1} \mathcal{G}\big(u_j\,|\,\alpha_u, \frac{1}{\alpha_u}\big)\, \text{Ber}(\gamma_j|\eta)\Big\}\cdot \text{U}(\eta\,|\,0,1)\cdot\text{MVN}(\boldsymbol{\alpha}|0, \sigma^2_\alpha I) \cdot \text{IG}(\sigma^2_\alpha|a_\alpha, b_\alpha)\\
&\qquad \times \text{N}(f|0,\Sigma)\cdot \text{IG}(\tau^2|a_\tau, b_\tau)\cdot \text{IG}(\kappa|a_\kappa,b_\kappa)\cdot \text{IG}(\sigma^2|a_\epsilon, b_\epsilon)\\
&\qquad \times \Bigg\{\prod_{t=1}^T \text{N}\bigg(y_{1t}\,\Big|\,\sum_{j=2}^{J+1} \Big( \frac{u_j\gamma_j}{\sum_k u_k\gamma_k} \Big)y_{jt}+\sum_{m=1}^q \alpha_m D_{mt}+f_t, \sigma^2_\epsilon \bigg) \Bigg\}
\end{aligned}
\end{equation*}

Using this posterior representation, we present the derivation of the full conditional distributions. When the posterior distribution has a closed-form expression, posterior sampling can be directly performed from the corresponding distribution. The parameters whose full conditional distributions have closed-form expressions are summarized below.
\begin{itemize}[label=\large\textbullet, leftmargin=1em]
\item Full conditional distribution for $f$
\begingroup
\allowdisplaybreaks[2]
\begin{align*}
\pi(f\;|\; \boldsymbol{\vartheta}_{[-f]},&\mathbf{y}_1, \tilde{Y})\\
&\propto\text{N}(f\,|\,0,\Sigma) \Bigg\{\prod_{t=1}^T \text{N} \bigg(y_{1t}\,\Big|\,\sum_{j=2}^{J+1} \Big(\frac{u_j\gamma_j}{\sum_k u_k\gamma_k}\Big)y_{jt}+\sum_{m=1}^q \alpha_m D_{mt}+f_t, \sigma^2_\epsilon \bigg) \Bigg\}\\
&=\frac{1}{\sqrt{2\pi}^T|\Sigma|^\frac{1}{2}} \exp\Big(-\frac{1}{2}f'\Sigma^{-1}f \Big)
\Big(\frac{1}{\sqrt{2\pi}\sigma_\epsilon}\Big)^T\\
&\qquad \times\exp\Big(-\sum_{t=1}^T \big(y_{1t}-\sum_{j=2}^{J+1} w_j y_{jt}-\sum_{m=1}^q \alpha_m D_{mt}-f_t\big)^2 \frac{1}{2\sigma^2_\epsilon}\Big)\\
&\propto\exp\!\Big(\!\!-\frac{1}{2}f'\Sigma^{-1}f \Big)
\exp\!\Big(\!\!-\frac{1}{2\sigma^2_\epsilon}(y-\tilde{Y}\mathbf{w}-D\boldsymbol{\alpha}-f)'(y-\tilde{Y}\mathbf{w}-D\boldsymbol{\alpha}-f)\Big)\\
&\propto \exp\left\{
-\frac{1}{2}
\left(f-\frac{1}{\sigma_\epsilon^2}V(y-\tilde Y w-D\alpha)\right)'
V^{-1}
\left(f-\frac{1}{\sigma_\epsilon^2}V(y-\tilde Y w-D\alpha)\right)
\right\}\\
&\propto \text{N}(f\,|\,\xi_f,V),\\
&\qquad\qquad \xi_f=\frac{1}{\sigma^2_\epsilon}V(y-\tilde{Y}\mathbf{w}-D\boldsymbol{\alpha}),\\
&\qquad\qquad V^{-1}=\Sigma^{-1}+\sigma_\epsilon^{-2}I,\;\; \mathbf{w}=\Big(\frac{u_2\gamma_2}{\sum_k u_k \gamma_k}, \cdots, \frac{u_{J+1}\gamma_{J+1}}{\sum_k u_k \gamma_k}\Big)'
\end{align*}
\endgroup

\[
.
\]

\item Full conditional distribution for $\tau^2$
\begin{equation*}
\begin{aligned}
\pi(\tau^2\;|\; \boldsymbol{\vartheta}_{[-\tau^2]}, \mathbf{y}_1, \tilde{Y})
\propto\; &\text{N}(f\,|\,0,\Sigma) \text{IG}(\tau^2|a_\tau, b_\tau)\\
\propto\; &(\tau^2)^{-\frac{T}{2}}\exp\!\Big(\!-\!\frac{1}{2\tau^2}f'\Sigma_\kappa^{-1} f\Big)
(\tau^2)^{-a_\tau-1} \exp\!\Big(\!-\frac{b_\tau}{\tau^2}\!\Big)\\
\propto\; &(\tau^2)^{-a_\tau-\frac{T}{2}-1}
\exp\Big(-\frac{1}{\tau^2} \big(b_\tau + \frac{1}{2}f'\Sigma_\kappa^{-1} f\big) \Big)\\
\propto \; &\text{IG} \Big(\tau^2 \big| a_\tau+\frac{T}{2}, b_\tau+\frac{1}{2}f'\Sigma_\kappa^{-1} f \Big),\\
&\qquad\qquad\Sigma_\kappa=\left[\exp\!\left(-\frac{(i-j)^2}{2\kappa}\right)\right]_{i,j=1}^T.
\end{aligned}
\end{equation*}

\item Full conditional distribution for $\gamma_j$
For \(b\in\{0,1\}\), let \(\boldsymbol{\gamma}^{[j,b]}\) denote the inclusion vector obtained by setting \(\gamma_j=b\) while keeping \(\boldsymbol{\gamma}_{[-j]}\) fixed, and define
\[
\mathbf{w}^{[j,b]}
=
\frac{\mathbf{u}\circ \boldsymbol{\gamma}^{[j,b]}}
{\sum_{\ell=2}^{J+1}u_\ell\gamma_\ell^{[j,b]}},
\]
where \(\circ\) denotes elementwise multiplication.

\begin{itemize}[label=$\cdot$, leftmargin=0.5em]
\item $\gamma_j=1$ :
\begingroup
\allowdisplaybreaks[2]
\begin{align*}
\pi(\gamma_j=1&\mid \boldsymbol{\vartheta}_{[-\boldsymbol{\gamma}]}, \boldsymbol{\gamma}_{[-j]}, \mathbf{y}_1, \tilde{Y})\\
\propto\; & \mathrm{Ber}(\gamma_j=1\mid\eta)
\cdot \pi(\mathbf{y}_1\mid\gamma_j=1, \boldsymbol{\vartheta}_{[-\boldsymbol{\gamma}]}, \boldsymbol{\gamma}_{[-j]}, \tilde{Y}) \\
\propto\; & \eta\cdot
\Bigg\{\prod_{t=1}^T \text{N}\bigg(y_{1t}\,\Big|\,\sum_{k=2}^{J+1} w_k^{[j,1]}y_{kt}+\sum_{m=1}^q \alpha_m D_{mt}+f_t, \sigma^2_\epsilon \bigg) \Bigg\}\\
\propto\; & \eta \cdot
\left(\frac{1}{\sqrt{2\pi}\sigma_\epsilon}\right)^T
\exp\left\{
-\frac{1}{2\sigma^2_\epsilon}
\sum_{t=1}^T
\left( y_{1t}
- \sum_{k=2}^{J+1} w_k^{[j,1]}y_{kt}
- \sum_{m=1}^q \alpha_m D_{mt}
- f_t
\right)^2
\right\}\\
\propto\; & \eta \cdot
\mathrm{N}\!\left(
\mathbf{y}_1
\mid
\tilde{Y}\mathbf{w}^{[j,1]}+D\boldsymbol{\alpha}+f,
\sigma^2_\epsilon I
\right).
\end{align*}
\endgroup

\item $\gamma_j=0$ :
\begingroup
\allowdisplaybreaks[2]
\begin{equation*}
\begin{aligned}
\pi(\gamma_j=0&\mid \boldsymbol{\vartheta}_{[-\boldsymbol{\gamma}]}, \boldsymbol{\gamma}_{[-j]}, \mathbf{y}_1, \tilde{Y})\\
\propto\; & \mathrm{Ber}(\gamma_j=0\mid\eta)
\cdot \pi(\mathbf{y}_1\mid\gamma_j=0, \boldsymbol{\vartheta}_{[-\boldsymbol{\gamma}]}, \boldsymbol{\gamma}_{[-j]}, \tilde{Y}) \\
\propto\; & (1-\eta)\cdot
\Bigg\{\prod_{t=1}^T \text{N}\bigg(y_{1t}\,\Big|\,\sum_{k=2}^{J+1} w_k^{[j,0]}y_{kt}+\sum_{m=1}^q \alpha_m D_{mt}+f_t, \sigma^2_\epsilon \bigg) \Bigg\}\\
\propto\; & (1-\eta) \cdot
\left(\frac{1}{\sqrt{2\pi}\sigma_\epsilon}\right)^T
\exp\left\{
-\frac{1}{2\sigma^2_\epsilon}
\sum_{t=1}^T
\left(y_{1t}
-\sum_{k=2}^{J+1} w_k^{[j,0]}y_{kt}
-\sum_{m=1}^q \alpha_m D_{mt}
-f_t \right)^2
\right\}\\
\propto\; & (1-\eta) \cdot
\mathrm{N}\!\left(
\mathbf{y}_1
\mid
\tilde{Y}\mathbf{w}^{[j,0]}+D\boldsymbol{\alpha}+f,
\sigma^2_\epsilon I
\right).
\end{aligned}
\end{equation*}
\endgroup

\item $\pi(\gamma_j=1)$ :
\begin{equation*}
\begin{aligned}
\pi(\gamma_j=1\mid &\boldsymbol{\vartheta}_{[-\boldsymbol{\gamma}]}, \boldsymbol{\gamma}_{[-j]}, \mathbf{y}_1, \tilde{Y})\\
&= \frac{\pi(\gamma_j=1 \mid \boldsymbol{\vartheta}_{[-\boldsymbol{\gamma}]}, \boldsymbol{\gamma}_{[-j]}, \mathbf{y}_1, \tilde{Y})}
{\pi(\gamma_j=1\mid\boldsymbol{\vartheta}_{[-\boldsymbol{\gamma}]}, \boldsymbol{\gamma}_{[-j]}, \mathbf{y}_1, \tilde{Y}) + \pi(\gamma_j=0 \mid \boldsymbol{\vartheta}_{[-\boldsymbol{\gamma}]}, \boldsymbol{\gamma}_{[-j]}, \mathbf{y}_1, \tilde{Y})}\\
&=
\frac{
\eta \cdot
\mathrm{N}\!\left(
\mathbf{y}_1\mid
\tilde{Y}\mathbf{w}^{[j,1]}+D\boldsymbol{\alpha}+f,
\sigma^2_\epsilon I
\right)
}{
\eta \cdot
\mathrm{N}\!\left(
\mathbf{y}_1\mid
\tilde{Y}\mathbf{w}^{[j,1]}+D\boldsymbol{\alpha}+f,
\sigma^2_\epsilon I
\right)
+
(1-\eta) \cdot
\mathrm{N}\!\left(
\mathbf{y}_1\mid
\tilde{Y}\mathbf{w}^{[j,0]}+D\boldsymbol{\alpha}+f,
\sigma^2_\epsilon I
\right)
}.
\end{aligned}
\end{equation*}
\end{itemize}

\item Full conditional distribution for $\eta$
\begingroup
\allowdisplaybreaks[2]
\begin{align*}
\pi(\eta\;|\; \boldsymbol{\vartheta}_{[-\eta]}, \mathbf{y}_1, \tilde{Y})
\propto\; &\Big\{\prod_{j=2}^{J+1} \text{Ber}(\gamma_j|\eta)\Big\} \,\text{U}(\eta\,|\,0,1)\\
\propto\; & \prod_{j=2}^{J+1} \eta^{\gamma_j}\;(1-\eta)^{1-\gamma_j}\\
= \,& \eta^{\sum_{j=2}^{J+1} \gamma_j}\cdot (1-\eta)^{\sum_{j=2}^{J+1}(1-\gamma_j)}\\
\propto \; & \text{Beta}\Big(\eta\,|\,\sum_{j=2}^{J+1} \gamma_j+1, \; \sum_{j=2}^{J+1}(1-\gamma_j)+1\Big)
\end{align*}
\endgroup

\item Full conditional distribution for $\alpha$
\begin{equation*}
\begin{aligned}
\pi(\boldsymbol{\alpha} \;|\;  \boldsymbol{\vartheta}&_{[-\boldsymbol{\alpha}]}, \mathbf{y}_1, \tilde{Y})\\
\propto\; &\text{MVN}(\boldsymbol{\alpha} \,|\,0,\sigma^2_\alpha I)
\Bigg\{\prod_{t=1}^T \text{N}\bigg(y_{1t}\,\Big|\,\sum_{j=2}^{J+1} \Big(\frac{u_j\gamma_j}{\sum_k u_k\gamma_k}\Big)y_{jt}+\sum_{m=1}^q \alpha_m D_{mt}+f_t, \sigma^2_\epsilon \bigg) \Bigg\}\\
\
=\; &\frac{1}{\sqrt{2\pi}^q(\sigma^2_\alpha)^\frac{q}{2}} \exp\Big(-\frac{1}{2}\boldsymbol{\alpha}' (\sigma^2_\alpha I)^{-1} \boldsymbol{\alpha} \Big)\\
&\qquad\times 
\Big(\frac{1}{\sqrt{2\pi}\sigma_\epsilon}\Big)^T\!
\exp\!\Bigg(\!-\!\sum_{t=1}^T \big(y_{1t}-\sum_{j=2}^{J+1} w_j y_{jt}\!-\!\sum_{m=1}^q \alpha_m D_{mt}-f_t\big)^2 \!\frac{1}{2\sigma^2_\epsilon}\!\Bigg)\\
\
\propto\; &\exp\!\Big(\!-\frac{1}{2\sigma^2_\alpha}\boldsymbol{\alpha}'\boldsymbol{\alpha} \Big)
\exp\!\Big(\!-\frac{1}{2\sigma^2_\epsilon}(y-\tilde{Y}\mathbf{w}-D\boldsymbol{\alpha}-f)'(y-\tilde{Y}\mathbf{w}-D\boldsymbol{\alpha}-f)\Big)\\
\propto\;& \exp\left\{
-\frac12
\left(\boldsymbol{\alpha}-\frac{1}{\sigma_\epsilon^2}WD'(y-\tilde Yw-f)\right)'
W^{-1}
\left(\boldsymbol{\alpha}-\frac{1}{\sigma_\epsilon^2}WD'(y-\tilde Yw-f)\right)
\right\}\\
\propto \; &\text{MVN}(\boldsymbol{\alpha}\,|\, \boldsymbol{\xi}_\alpha,W),\\
&\qquad\qquad \boldsymbol{\xi}_\alpha=\frac{1}{\sigma_\epsilon^2}WD'(y-\tilde{Y}\mathbf{w}-f),\\
&\qquad\qquad W^{-1}=\frac{1}{\sigma^2_\alpha}I + \frac{1}{\sigma^2_\epsilon}D'D,\; \mathbf{w}=\Big(\frac{u_2\gamma_2}{\sum_k u_k \gamma_k}, \cdots, \frac{u_{J+1}\gamma_{J+1}}{\sum_k u_k \gamma_k}\Big)'
\end{aligned}
\end{equation*}

\item Full conditional distribution for $\sigma^2_\epsilon$
\begingroup
\allowdisplaybreaks[2]
\begin{align*}
\pi(\sigma^2_\epsilon\;|\; \boldsymbol{\vartheta}_{[-\sigma^2_\epsilon]}, &\mathbf{y}_1, \tilde{Y})\\
\propto\; &\text{IG}(\sigma^2_\epsilon|a_\epsilon, b_\epsilon) 
\Bigg\{\prod_{t=1}^T \text{N}\bigg(y_{1t}\,\Big|\,\sum_{j=2}^{J+1} \Big(\frac{u_j\gamma_j}{\sum_k u_k\gamma_k}\Big)y_{jt}+\sum_{m=1}^q \alpha_m D_{mt}+f_t, \sigma^2_\epsilon \bigg) \Bigg\}\\
\propto\; &(\sigma^2_\epsilon)^{-a_\epsilon -1}\exp \Big(-\frac{b_\epsilon}{\sigma^2_\epsilon} \Big)\cdot\Big(\frac{1}{\sqrt{2\pi}\sigma_\epsilon}\Big)^T\\
&\qquad\times
\exp\bigg(-\sum_{t=1}^T\Big(y_{1t}-\sum_{j=2}^{J+1} \Big(\frac{u_j\gamma_j}{\sum_k u_k\gamma_k}\Big) y_{jt}-\sum_{m=1}^q \alpha_m D_{mt}-f_t\Big)^2 \frac{1}{2\sigma^2_\epsilon}\bigg)\\
\propto\; &(\sigma^2_\epsilon)^{-a_\epsilon-\frac{T}{2}-1}\\
&\qquad\times\!\exp\!\bigg(\!-\!\frac{1}{\sigma^2_\epsilon} \Big(b_\epsilon\! +\! \frac{1}{2} \!\sum_{t=1}^T\Big(y_{1t}\!-\!\sum_{j=2}^{J+1}\! \Big(\frac{u_j\gamma_j}{\sum\limits_k u_k\gamma_k}\!\Big) y_{jt}\!-\!\sum_{m=1}^q \alpha_m D_{mt}\!-\!f_t\!\Big)^2\Big)\! \bigg)\\
\propto \; &\text{IG} \bigg(\!\sigma^2_\epsilon \Big| a_\epsilon\!+\!\frac{T}{2}, b_\epsilon\!+\!\frac{1}{2}\sum_{t=1}^T\Big(y_{1t}\!-\!\sum_{j=2}^{J+1} \!\Big(\frac{u_j\gamma_j}{\sum_k u_k\gamma_k}\!\Big) y_{jt}\!-\!\sum_{m=1}^q \!\alpha_m D_{mt}\!-\!f_t\!\Big)^2\bigg)
\end{align*}
\end{itemize}
\endgroup

For the parameters whose posterior distributions do not have closed-form solutions, we employ the Metropolis-Hastings algorithm for sampling.
\begin{itemize}[label=\large\textbullet, leftmargin=1em]
\item Full conditional distribution for $\kappa$
\begin{equation*}
\begin{aligned}
\pi(\kappa\;|\;  \boldsymbol{\vartheta}_{[-\kappa]}, \mathbf{y}_1, \tilde{Y})
\propto\; &\text{N}(f\,|\,0,\Sigma)\cdot \text{IG}(\kappa|a_\kappa, b_\kappa)\\
\propto\; &
(\kappa)^{-a_\kappa-1}\exp(-\frac{b_\kappa}{\kappa}) \cdot |\Sigma|^{-\frac{1}{2}} \exp\Big(-\frac{1}{2}f'\Sigma^{-1} f\Big)
\end{aligned}
\end{equation*}
\end{itemize}
To generate posterior samples of $\kappa$, we use a random walk proposal on the log scale. Specifically, the proposed value $\kappa^{new}$ is generated as
\begin{equation*}
\begin{aligned}
\log(\kappa^{new})=\log(\kappa^{old})+\delta_{\kappa}z,\;\; z\sim \text{N}(0,1)
\end{aligned}
\end{equation*}
This formulation ensures that $\kappa$ remains positive while allowing for flexible updates. Given the proposed value, the acceptance probability is computed as
\begin{equation}
\begin{aligned}
\rho_\kappa&=\min\Bigg( \frac{\pi(\kappa^{new}\;|\; \boldsymbol{\vartheta}_{[-\kappa]}, \mathbf{y}_1, \tilde{Y})}{\pi(\kappa^{old}\;|\;  \boldsymbol{\vartheta}_{[-\kappa]}, \mathbf{y}_1, \tilde{Y})} \times \frac{q(\kappa^{old}\;|\;\kappa^{new})}{q(\kappa^{new}\;|\;\kappa^{old})},1 \Bigg) \\
&=\min\Bigg( \frac{\pi(\kappa^{new}\;|\; \boldsymbol{\vartheta}_{[-\kappa]}, \mathbf{y}_1, \tilde{Y})}{\pi(\kappa^{old}\;|\;  \boldsymbol{\vartheta}_{[-\kappa]}, \mathbf{y}_1, \tilde{Y})} \times \frac{\kappa^{new}}{\kappa^{old}},1 \Bigg) \label{kappa:acceptance rate}
\end{aligned}
\end{equation}
where $q(\cdot)$ denotes the proposal density function. Since $\kappa^{new}$ follows a lognormal distribution, $q(\kappa^{new}\;|\;\kappa^{old})=\frac{1}{\kappa^{new}}\exp\Big(-\frac{1}{2\delta_{\kappa}^2}\big(\ln\kappa^{new}-\ln\kappa^{old}\big)^2\Big)$, which gives the acceptance probability in Equation \eqref{kappa:acceptance rate}. The step size of the random walk is controlled by $\delta_{\kappa}$. A new sample is accepted if $U \sim \text{U}(0,1)$ satisfies $U < \rho_\kappa$; otherwise, the previous value is retained. This process is repeated iteratively, and after a burn-in period, the collected samples are used to approximate the posterior distribution of $\kappa$.

\begin{itemize}[label=\large\textbullet, leftmargin=1em]
\item Full conditional distribution for $u_j$
\begin{equation*}
\begin{aligned}
\pi(u_j\;|\;  \boldsymbol{\vartheta}_{[-u]},& \mathbf{u}_{[-j]}, \mathbf{y}_1, \tilde{Y})\\
\propto\; & \Big\{ \text{G}(u_j\,|\,\alpha_u, \frac{1}{\alpha_u}) \Big\}
\Bigg\{\prod_{t=1}^T \text{N}\bigg(y_{1t}\,\Big|\,\sum_{j=2}^{J+1} \Big(\frac{u_j\gamma_j}{\sum_k u_k\gamma_k}\Big)y_{jt}+\sum_{m=1}^q \alpha_m D_{mt}+f_t, \sigma^2_\epsilon \bigg) \Bigg\}\\
\propto\; &
\big\{(u_j)^{\alpha_u-1} \exp(-\alpha_u u_j)\big\} \cdot
\Big(\frac{1}{\sqrt{2\pi}\sigma}\Big)^T\\
&\qquad\times 
\exp\Bigg(-\sum_{t=1}^T\Big(y_{1t}-\sum_{j=2}^{J+1} \Big(\frac{u_j\gamma_j}{\sum_k u_k\gamma_k}\Big) y_{jt}-\sum_{m=1}^q \alpha_m D_{mt}-f_t\Big)^2 \frac{1}{2\sigma^2_\epsilon}\Bigg)\\
\propto\; & ( u_j )^{\alpha_u-1}\exp(-\alpha_u u_j)\\
&\qquad\times
\exp\Bigg(-\sum_{t=1}^T\Big(y_{1t}-\sum_{j=2}^{J+1} \Big(\frac{u_j\gamma_j}{\sum_k u_k\gamma_k}\Big) y_{jt}-\sum_{m=1}^q \alpha_m D_{mt}-f_t\Big)^2 \frac{1}{2\sigma^2_\epsilon}\Bigg)
\end{aligned}
\end{equation*}
\end{itemize}

Each component $u_j$ is updated sequentially using a Metropolis-Hastings step, where the proposal follows a log-scale random walk. Once all components have been updated, the process yields an updated vector $\mathbf{u}$ at each iteration. Specifically, the proposed value $u_j^{new}$ is generated as
\begin{equation*}
\begin{aligned}
\log(u_j^{new})=\log(u_j^{old})+\delta_{u_j}z,\;\; z\sim \text{N}(0,1)
\end{aligned}
\end{equation*}
This formulation ensures that $u_j$ remains positive while allowing for flexible updates. Given the proposed value, the acceptance probability is computed as
\begin{equation}
\begin{aligned}
\rho_{u_j}&=\min\Bigg( \frac{\pi(u_j^{new}\;|\; \boldsymbol{\vartheta}_{[-u]}, \mathbf{u}_{[-j]}, \mathbf{y}_1, \tilde{Y})}{\pi(u_j^{old}\;|\; \boldsymbol{\vartheta}_{[-u]}, \mathbf{u}_{[-j]}, \mathbf{y}_1, \tilde{Y})} \times \frac{q(u_j^{old}\;|\;u_j^{new})}{q(u_j^{new}\;|\;u_j^{old})},1 \Bigg)\\
&=\min\Bigg( \frac{\pi(u_j^{new}\;|\; \boldsymbol{\vartheta}_{[-u]}, \mathbf{u}_{[-j]}, \mathbf{y}_1, \tilde{Y})}{\pi(u_j^{old}\;|\; \boldsymbol{\vartheta}_{[-u]}, \mathbf{u}_{[-j]}, \mathbf{y}_1, \tilde{Y})} \times \frac{u_j^{new}}{u_j^{old}},1 \Bigg) \label{u_j:acceptance rate}
\end{aligned}
\end{equation}
where $q(\cdot)$ denotes the proposal density function. Since $u_j^{new}$ follows a lognormal distribution, $q(u_j^{new}\;|\;u_j^{old})=\frac{1}{u_j^{new}}\exp\Big(-\frac{1}{2\delta_{u_j}^2}\big(\ln u_j^{new}-\ln u_j^{old}\big)^2\Big)$, which gives the acceptance probability in Equation \eqref{u_j:acceptance rate}. The step size of the random walk is controlled by $\delta_{u_j}$.
A new sample is accepted if $U \sim \text{U}(0,1)$ satisfies $U < \rho_{u_j}$; otherwise, the previous value is retained. This process is repeated iteratively, and after a burn-in period, the collected samples are used to approximate the posterior distribution of $u_j$.

\section{Convergence Diagnostics}\label{appendix:conv_numerical}

We assess MCMC convergence using trace plots and marginal posterior densities across
multiple chains. Convergence is indicated when the chains exhibit similar behavior, show no
systematic trends, and have comparable marginal distributions. For visual clarity, only every
100th iteration is displayed in the trace plots, while the full posterior samples are used for
inference.

\subsection{Numerical study}\label{appendix:conv_numerical}

For the numerical study, we report the convergence diagnostics for the representative M10-3-ue setting. Since \(T=50\), displaying trace plots for all components of \(f=(f_1,\ldots,f_T)'\) would be impractical. We therefore show selected time points.
Similarly, since \(\mathbf{u}\) is generated for all donor units, only representative components are displayed.

\begin{figure}[H]
\centering
\includegraphics[width=0.5\textheight]{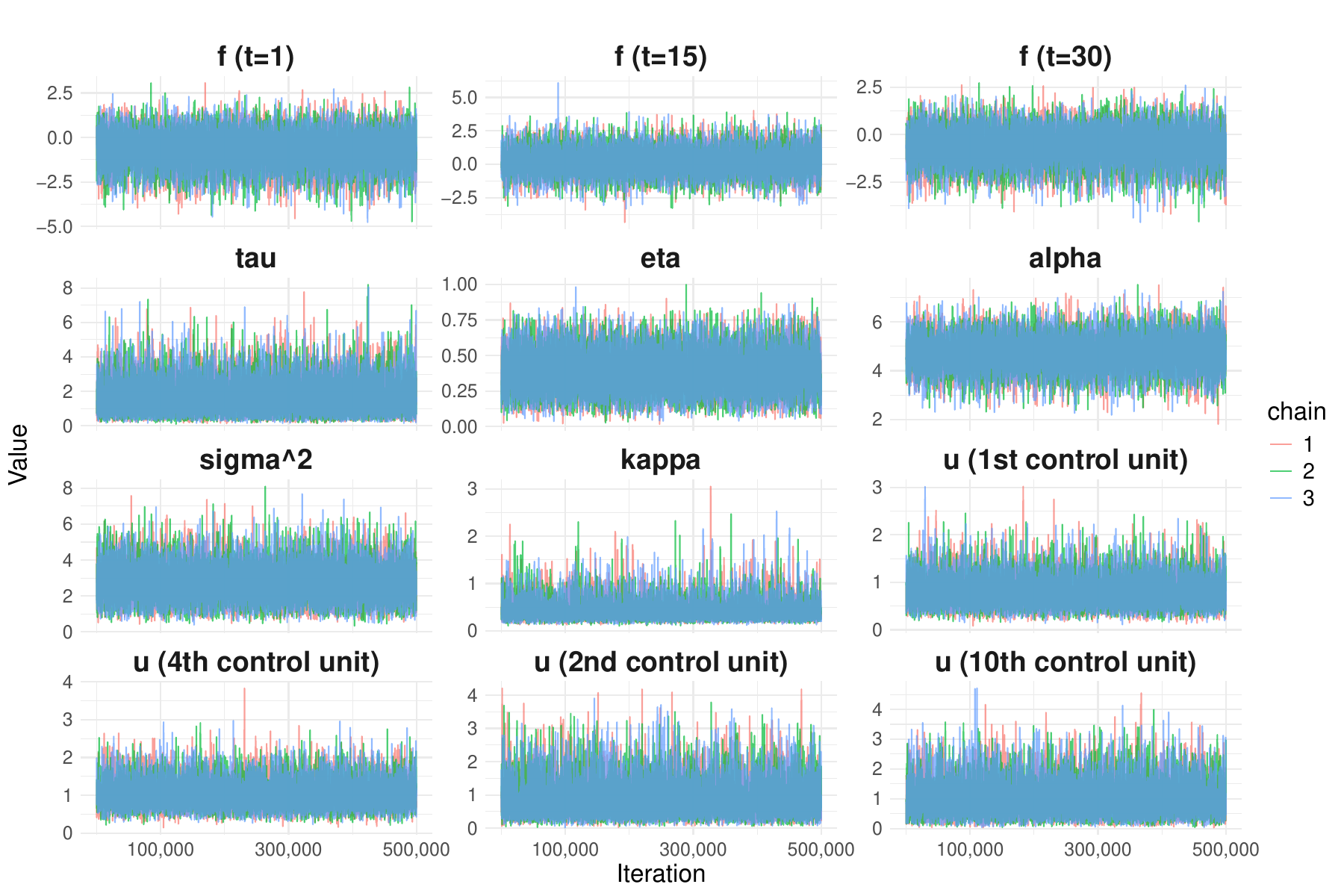}
\caption{M10-3-ue model: Trace plots of selected parameters.}
\label{numerical:traceplot}
\end{figure}

The trace plots in Figure~\ref{numerical:traceplot} show stable mixing across chains for the selected parameters.
Figure~\ref{numerical:marginal} further shows that the marginal posterior densities from different chains are overlapping, indicating that the chains produce similar posterior distributions in the numerical study.

\begin{figure}[H]
\centering
\includegraphics[width=0.5\textheight]{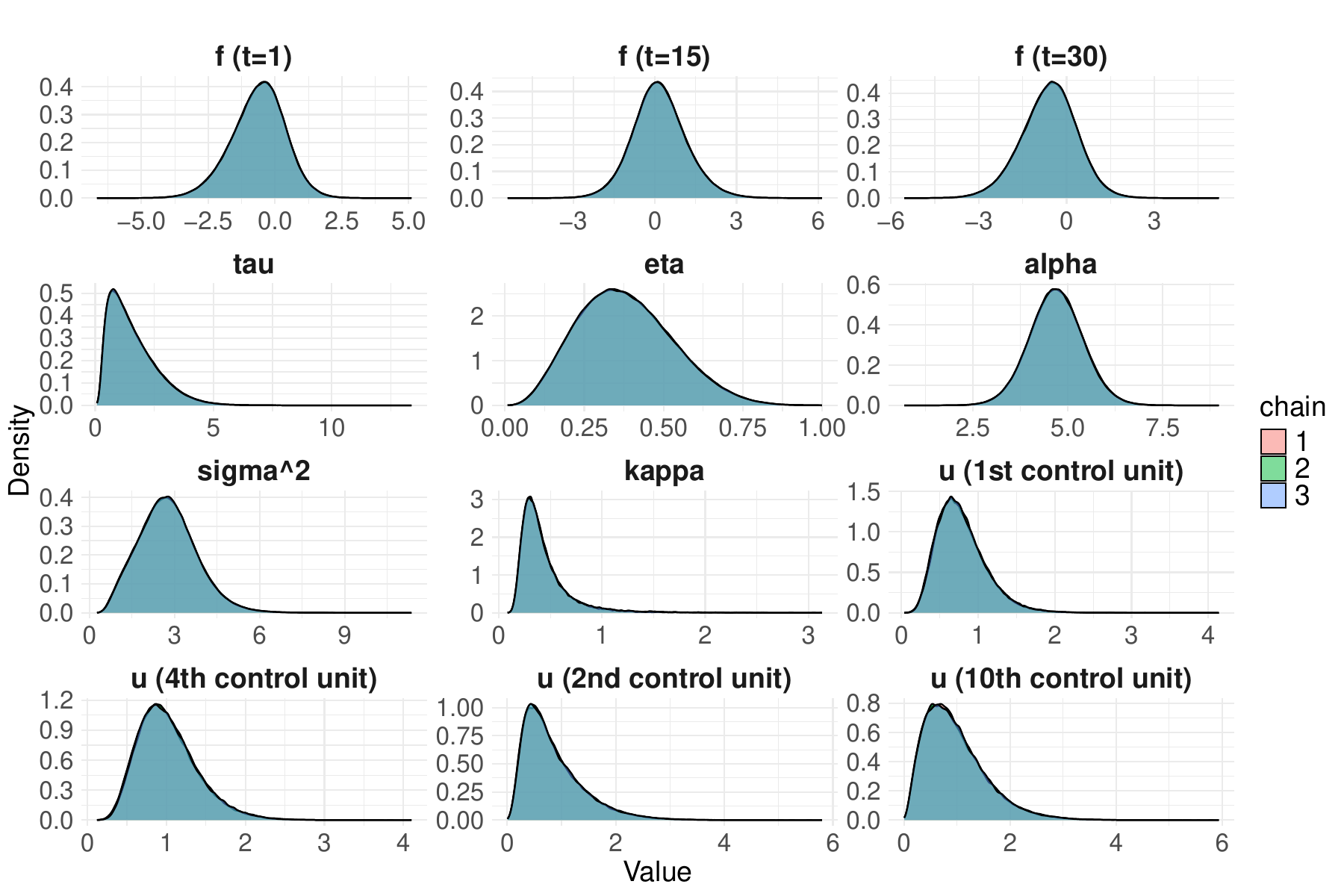}
\caption{M10-3-ue model: Marginal posterior densities of selected parameters.}
\label{numerical:marginal}
\end{figure}

\subsection{Data example}\label{appendix:conv_data}

For the West Germany application, we use the same diagnostic strategy. Each chain is run for 500,000 post burn-in iterations, and every 100th iteration is displayed for visual clarity. Since \(\mathbf{u}\) is generated for every donor unit, only a subset of its components is shown.

\begin{figure}[H]
\centering
\includegraphics[width=0.5\textheight]{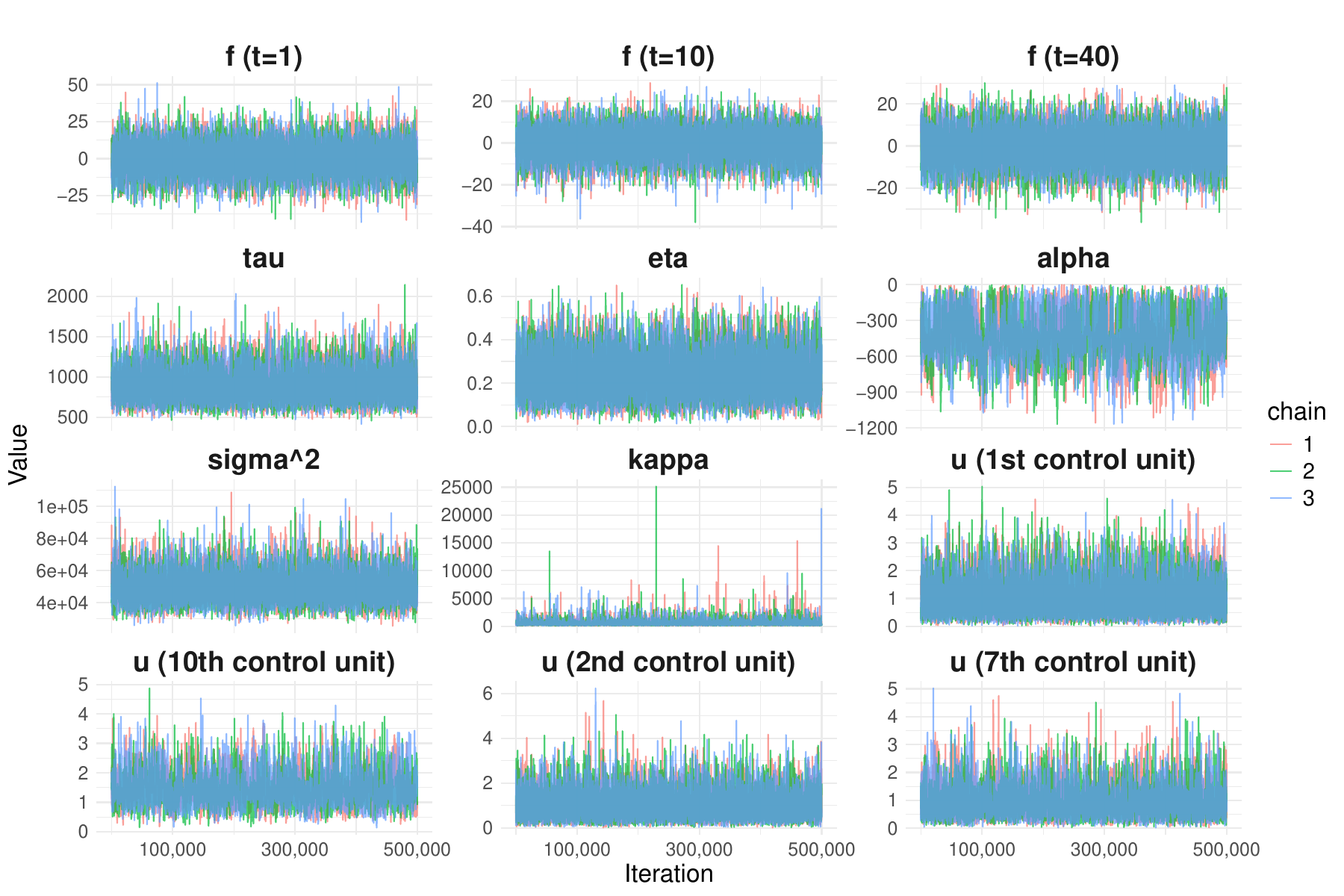}
\caption{Trace plots of selected parameters in the West Germany application.}
\label{example:traceplot}
\end{figure}

The trace plots in Figure~\ref{example:traceplot} show stable mixing across chains for the selected parameters. Figure~\ref{example:marginal} further shows that the marginal posterior densities from the three chains are nearly indistinguishable due to overlap, supporting the consistency of the posterior samples across chains.

\begin{figure}[H]
\centering
\includegraphics[width=0.5\textheight]{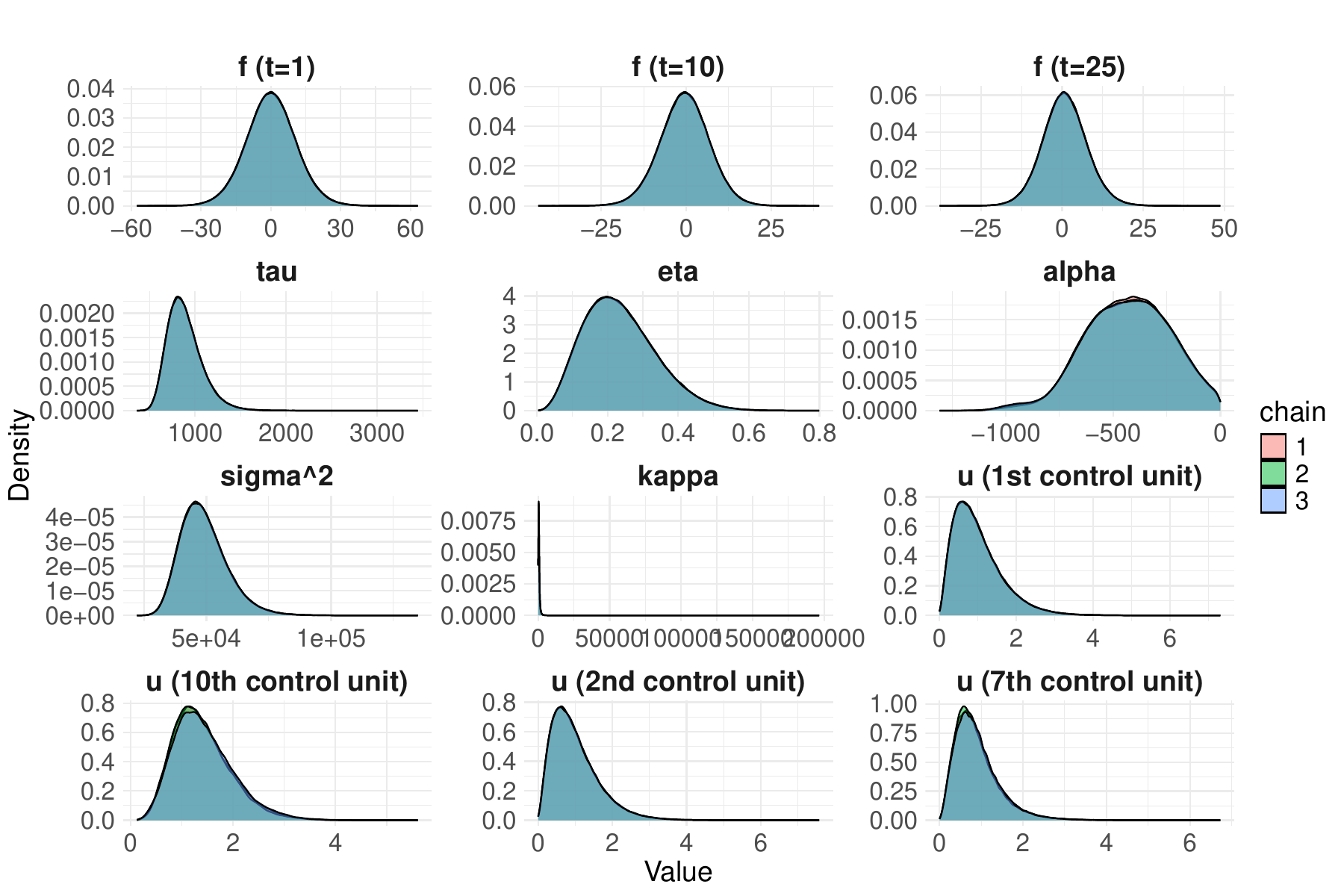}
\caption{Marginal posterior densities of selected parameters in the West Germany application.}
\label{example:marginal}
\end{figure}

\section{Additional results under the independent outcome setting}

\subsection{Simplified model without temporal and intervention components} \label{sec:model_simple}
In this section, we consider a simplified specification under the independent outcome setting. Specifically, we remove the temporal discrepancy term \(f_t\) and the intervention indicator term \(\alpha\) from the proposed BASC model. This simplified specification is used to inspect the donor weight and donor selection mechanisms more directly.

\[
y_{1t} = \sum_{j=2}^{J+1} w_j y_{jt} + \epsilon_t,
\qquad \epsilon_t \sim N(0,\sigma^2).
\]

The donor weights $w_j$ satisfy the same nonnegative simplex constraint as in the original BASC model. They are constructed using Bernoulli donor-inclusion variables and normalized Gamma variables. Thus, the simplified model preserves the donor set selection mechanism, while excluding both $f_t$ and $\alpha$.

The estimation procedure for this simplified model differs from that of the full BASC model used in the main simulation study. In the full BASC model, the intervention effect term $\alpha$ is included, and the posterior is estimated using the entire time period $t=1,\ldots,T$. In contrast, the simplified model does not directly model the intervention effect. Therefore, following the standard synthetic control framework, donor weights are estimated using only the pre-intervention period $t=1,\ldots,T_0$. The estimated posterior samples of the weights are then applied to the donor outcomes in the post-intervention period to construct the counterfactual trajectory.

Table~\ref{tbl:y_simple} reports the prediction bias and RMSE in the pre- and post-intervention periods. Overall, BASC and B-MV show comparable prediction performance in several settings. In particular, when the donor pool is small, such as $J=10$, B-MV sometimes yields slightly smaller post-intervention RMSE. This is not unexpected, since the simplified model does not include the temporal discrepancy term or the intervention effect term, and the post-intervention evaluation period is relatively short. In such cases, dense averaging over donors, as in B-MV, may reduce prediction variance.

However, as shown in Table~\ref{tbl:w,gamma_simple}, the advantage of BASC is clearer in terms of donor set selection and weight recovery. In sparse settings, BASC yields much smaller weight estimation error than B-MV. It also shows stable donor recovery performance in terms of true positive rate, true negative rate, and overall accuracy. These results indicate that BASC is not merely a prediction model; it also performs donor set selection within posterior inference, which is a key objective in synthetic control analysis.

Overall, the simplified model results show that the donor set selection mechanism of BASC works reliably even after removing $f_t$ and $\alpha$. The results also suggest that the benefit of BASC becomes more visible when the donor pool is large relative to the true active donor set, where excluding irrelevant donors is more important.

\begin{table}[H]
\centering
\small
\setlength{\tabcolsep}{5pt}
\begin{tabular}{ll cccc cccc}
\toprule
& & \multicolumn{4}{c}{\textbf{Unbalanced Setting (-ue)}} & \multicolumn{4}{c}{\textbf{Balanced Setting (-e)}} \\
\cmidrule(lr){3-6} \cmidrule(lr){7-10}
& & \multicolumn{2}{c}{Pre-intervention} & \multicolumn{2}{c}{Post-intervention} & \multicolumn{2}{c}{Pre-intervention} & \multicolumn{2}{c}{Post-intervention} \\
\cmidrule(lr){3-4} \cmidrule(lr){5-6} \cmidrule(lr){7-8} \cmidrule(lr){9-10}
\textbf{Setting} & \textbf{Model} & \textbf{Bias} & \textbf{RMSE} & \textbf{Bias} & \textbf{RMSE} & \textbf{Bias} & \textbf{RMSE} & \textbf{Bias} & \textbf{RMSE} \\
\midrule

\multirow{4}{*}{M10-3} 
  & BASC       & -0.722 & 1.796 & 0.089 & 2.152 & -0.723 & 1.759 & 0.071 & 2.121\\
  & B-MV       &  -1.124 & 2.466 & 0.478 & 1.797 & -1.176 & 2.446 & 0.658 & 1.963\\
  & fPCA-SYNTH & -0.030 & 3.383 & 1.874 & 2.517 & -0.030 & 2.820 & 2.268 & 2.916\\
  & ClusterSC  & -0.000 & 2.397 & 2.346 & 2.913 & -0.000 & 2.484 & 2.665 & 3.225\\
\addlinespace[0.2em]
\hline
\addlinespace[0.2em]
\multirow{4}{*}{M10-10} 
  & BASC       & -0.408 & 1.544 & -0.235 & 2.914 & -0.491 & 1.574 & -0.181 & 2.682\\
  & B-MV       &  -0.409 & 1.541 & -0.225 & 2.881 & -0.540 & 1.603 & -0.153 & 2.481\\
  & fPCA-SYNTH & -0.205 & 2.643 & -0.774 & 3.838 & -0.168 & 2.444 & -0.344 & 3.182\\
  & ClusterSC  &  0.000 & 1.913 & 0.626 & 2.391 & -0.000 & 1.954 & 0.529 & 2.333\\
\addlinespace[0.2em]
\hline
\addlinespace[0.2em]

\multirow{4}{*}{M30-3} 
    & BASC       &  -0.331 & 2.103 & 0.851 & 2.245 & -0.364 & 2.063 & 0.969 & 2.097\\
  & B-MV       &  -0.714 & 3.167 & 1.276 & 2.875 & -0.479 & 2.947 & 0.506 & 2.912\\
  & fPCA-SYNTH & -0.345 & 3.525 & 1.482 & 2.659 & -1.323 & 5.041 & 0.910 & 5.000\\
  & ClusterSC  &  -0.000 & 2.054 & 1.575 & 2.518 & -0.000 & 2.155 & 1.548 & 2.618\\
\addlinespace[0.2em]
\hline
\addlinespace[0.2em]

\multirow{4}{*}{M30-9} 
  & BASC       &  -0.569 & 2.404 & 1.125 & 2.564 & -0.425 & 2.334 & 1.063 & 2.409\\
  & B-MV       & -0.708 & 2.555 & 1.189 & 2.652 & -0.485 & 2.429 & 1.060 & 2.338\\
  & fPCA-SYNTH & -0.217 & 3.175 & 1.116 & 3.331 & -0.221 & 3.047 & 0.819 & 3.112\\
  & ClusterSC  &  0.000 & 1.872 & 2.476 & 2.882 & -0.000 & 1.909 & 2.524 & 2.927\\
\addlinespace[0.2em]
\hline
\addlinespace[0.2em]

\multirow{4}{*}{M30-30} 
  & BASC       &  -0.333 & 2.001 & 0.988 & 1.928 & -0.332 & 1.947 & 0.951 & 2.000 \\
  & B-MV       & -0.336 & 2.030 & 1.001 & 1.853 & -0.337 & 1.958 & 0.952 & 1.937\\
  & fPCA-SYNTH & -0.245 & 2.931 & 0.483 & 2.911 & -0.239 & 2.865 & 0.483 & 3.179\\
  & ClusterSC  &  0.000 & 1.662 & 1.768 & 2.294 & -0.000 & 1.619 & 1.894 & 2.456\\
\bottomrule
\end{tabular}
\caption{Prediction performance of the simplified model and competing methods. Bias and RMSE are reported separately for the pre- and post-intervention periods. The simplified BASC model is fitted using only the pre-intervention observations, and post-intervention predictions are obtained by applying the posterior weight samples to the donor outcomes in the post-intervention period.}
\label{tbl:y_simple}
\end{table}

\begin{table}[H]
\centering
\scriptsize
\setlength{\tabcolsep}{4pt}
\resizebox{\textwidth}{!}{
\begin{tabular}{ll ccccc ccccc}
\toprule
& & \multicolumn{5}{c}{\textbf{Unbalanced Setting (-ue)}} 
  & \multicolumn{5}{c}{\textbf{Balanced Setting (-e)}} \\
\cmidrule(lr){3-7} \cmidrule(lr){8-12}
& & \multicolumn{2}{c}{$\mathbf{w}$} 
  & \multicolumn{3}{c}{$\boldsymbol{\gamma}$}
  & \multicolumn{2}{c}{$\mathbf{w}$} 
  & \multicolumn{3}{c}{$\boldsymbol{\gamma}$} \\
\cmidrule(lr){3-4} \cmidrule(lr){5-7}
\cmidrule(lr){8-9} \cmidrule(lr){10-12}
\textbf{Setting} & \textbf{Model}
& \textbf{TAE} & \textbf{RMSE} 
& \textbf{TPR} & \textbf{TNR} & \textbf{Accuracy}
& \textbf{TAE} & \textbf{RMSE} 
& \textbf{TPR} & \textbf{TNR} & \textbf{Accuracy} \\
\midrule

\multirow{4}{*}{M10-3}
  & BASC       & 0.166 & 0.046 & 0.925 & 0.989 & 0.970 & 0.132 & 0.035 & 0.999 & 0.990 & 0.993\\
  & B-MV       & 0.762 & 0.092 & - & - & - & 0.844 & 0.106 & - & - & -\\
  & fPCA-SYNTH & 1.309 & 0.229 & 0.333 & 0.857 & 0.700 & 1.609 & 0.265 & 0.333 & 0.857 & 0.700 \\
  & ClusterSC  & 0.628 & 0.125 & 0.333 & 0.857 & 0.700 & 0.776 & 0.151 & 0.333 & 0.857 & 0.700 \\
\addlinespace[0.2em]
\hline
\addlinespace[0.2em]

\multirow{4}{*}{M10-10}
& BASC       & 0.593 & 0.078 & 0.938 & - & 0.938 & 0.408 & 0.067 & 0.868 & - & 0.868\\
  & B-MV       & 0.582 & 0.074 & - & - & - & 0.273 & 0.051 & - & - & -\\
  & fPCA-SYNTH & 1.692 & 0.252 & 0.200 & - & 0.200 & 1.820 & 0.246 & 0.200 & - & 0.200\\
  & ClusterSC  & 0.655 & 0.075 & 0.800 & - & 0.800 & 0.597 & 0.068 & 0.800 & - & 0.800\\
\addlinespace[0.2em]
\hline
\addlinespace[0.2em]

\multirow{4}{*}{M30-3}
  & BASC       & 0.239 & 0.036 & 0.738 & 0.973 & 0.950 & 0.176 & 0.031 & 0.995 & 0.962 & 0.965\\
  & B-MV       & 1.737 & 0.112 & - & - & - & 1.731 & 0.098 & - & - & -\\
  & fPCA-SYNTH & 2.217 & 0.222 & 0.000 & 0.889 & 0.800 & 1.494 & 0.144 & 0.333 & 0.963 & 0.900\\
  & ClusterSC  & 1.790 & 0.098 & 0.667 & 0.704 & 0.700 & 1.909 & 0.105 & 0.667 & 0.704 & 0.700\\
\addlinespace[0.2em]
\hline
\addlinespace[0.2em]

\multirow{4}{*}{M30-9}
& BASC       & 1.002 & 0.058 & 0.730 & 0.471 & 0.548 & 1.097 & 0.053 & 0.798 & 0.390 & 0.512\\
  & B-MV       & 1.328 & 0.061 & - & - & - & 1.315 & 0.051 & - & - & -\\
  & fPCA-SYNTH & 1.898 & 0.146 & 0.111 & 0.952 & 0.700 & 1.797 & 0.145 & 0.222 & 0.905 & 0.700\\
  & ClusterSC  & 1.724 & 0.077 & 0.556 & 0.762 & 0.700 & 1.74 & 0.078 & 0.556 & 0.762 & 0.700\\
\addlinespace[0.2em]
\hline
\addlinespace[0.2em]

\multirow{4}{*}{M30-30}
  & BASC       & 0.477 & 0.034 & 0.779 & - & 0.779 & 0.214 & 0.027 & 0.812 & - & 0.812\\
  & B-MV       & 0.461 & 0.028 & - & - & - & 0.114 & 0.018 & - & - & -\\
  & fPCA-SYNTH & 1.826 & 0.116 & 0.133 & - & 0.133 & 1.845 & 0.101 & 0.133 & - & 0.133\\
  & ClusterSC  & 1.755 & 0.078 & 0.367 & - & 0.367 & 1.623 & 0.071 & 0.367 & - & 0.367\\
\bottomrule
\end{tabular}
}
\caption{Weight estimation and donor set recovery under the simplified model. Weight estimation is evaluated by total absolute error (TAE) and RMSE. Donor set recovery is summarized by true positive rate (TPR), true negative rate (TNR), and overall accuracy. Since B-MV does not perform explicit donor selection, donor recovery metrics are not applicable.}
\label{tbl:w,gamma_simple}
\end{table}

\subsection{Computation time comparison}
\label{app:computation_time}
We compare the computation time of the two Bayesian methods, BASC and B-MV.
We recorded the elapsed time for each simulation scenario under the independent outcome setting, considering both methods with and without the Gaussian process component. For each method, we used one MCMC chain with 200,000 burn-in iterations and 500,000 posterior iterations.
All computations were performed on the same desktop computer with an Intel Core i9-13900K processor and 32.0GB RAM, and elapsed time was measured using the \texttt{system.time()} function in R.
Table \ref{tab:computation_time} summarizes the results by donor-pool size, corresponding to the M10-- and M30-- settings.

\begin{table}[htbp]
\centering
\begin{tabular}{cccc}
\hline
Setting & Method & With GP & Without GP \\
\hline\hline
M10--  & BASC & 23.63 min (1.30) & 2.97 min (0.03)\\
              & B-MV & 15.38 min (0.93) & 1.80 min (0.02) \\
\hline
M30--  & BASC & 34.37 min (1.89) & 17.60 min (0.07) \\
              & B-MV & 30.49 min (9.32) & 8.94 min (0.03) \\
\hline
\end{tabular}
\caption{Average computation time for BASC and B-MV under the independent outcome setting, summarized by donor-pool size. Standard deviations are reported in parentheses.}\label{tab:computation_time}
\end{table}

The results show that the Gaussian process component increases the computational burden for both methods, mainly due to repeated covariance matrix evaluations and Metropolis--Hastings updates for the GP length-scale parameter. When the Gaussian process component is excluded, both methods become substantially faster. BASC is somewhat more computationally demanding than B-MV because it additionally updates the donor-inclusion indicators and donor weights. Nevertheless, the computation time remains within a practical range in the simulation settings considered.

\section{Sensitivity to prior hyperparameters}\label{apdx:sens}
We assess the robustness of BASC to prior choices in the West Germany application by considering alternative hyperparameter settings around the baseline specification. The sensitivity settings are summarized in Table~\ref{tab:sens-hyper}. Starting from the baseline setting, we vary one component at a time: the noise prior, the Gaussian-process hyperparameter priors, the donor-weight dispersion parameter \(\alpha_u\), and the scale parameter
\(b_\alpha\) in the prior for \(\sigma_\alpha^2\). We also consider an `all' setting in which the noise prior, the Gaussian-process prior, \(\alpha_u\), and \(b_\alpha\) are varied
simultaneously.

Each model is fitted using the same MCMC configuration as in the main empirical analysis.
We evaluate robustness by comparing the posterior donor weights across sensitivity settings.
Specifically, Table~\ref{apdx:weights} reports the posterior mean and standard deviation of \(w_j\) for each donor. Donors selected under the posterior inclusion criterion \(\bar{\gamma}_j\ge 0.5\) are indicated in bold.

\begin{table}[htbp]
\centering
\vspace{2mm}
\renewcommand{\arraystretch}{1.15}
\small
\begin{tabular}{lcccccccc}
\toprule
\textbf{Setting}
& \multicolumn{2}{c}{\(\sigma_\epsilon^2\) prior}
& \multicolumn{2}{c}{\(\tau^2\) prior}
& \multicolumn{2}{c}{\(\kappa\) prior}
& \(\sigma_\alpha^2\) prior
& \(\alpha_u\) \\
\cmidrule(lr){2-3}
\cmidrule(lr){4-5}
\cmidrule(lr){6-7}
& \(a_\epsilon\)
& \(b_\epsilon\)
& \(a_\tau\)
& \(b_\tau\)
& \(a_\kappa\)
& \(b_\kappa\)
& \(b_\alpha\)
&  \\
\midrule
Baseline
& 10 & 5,000
& 3 & \(2\times 10^4\)
& 3 & 1,000
& \(2\times 1000^2\)
& 2.5 \\

Noise only
& 3 & 20,000
& 3 & \(2\times 10^4\)
& 3 & 1,000
& \(2\times 1000^2\)
& 2.5 \\

GP only
& 10 & 5,000
& 3 & \(3\times 10^4\)
& 3 & 850
& \(2\times 1000^2\)
& 2.5 \\

\(\alpha_u\) only
& 10 & 5,000
& 3 & \(2\times 10^4\)
& 3 & 1,000
& \(2\times 1000^2\)
& 2.0 \\

\(\sigma_\alpha^2\) only
& 10 & 5,000
& 3 & \(2\times 10^4\)
& 3 & 1,000
& \(2\times 750^2\)
& 2.5 \\

All
& 3 & 20,000
& 3 & \(3\times 10^4\)
& 3 & 850
& \(2\times 750^2\)
& 2.0 \\
\bottomrule
\end{tabular}
\caption{Hyperparameter configurations for the sensitivity analysis. Each row corresponds
to one prior setting. The columns report the hyperparameters for the noise variance
\(\sigma_\epsilon^2\), the Gaussian-process scale \(\tau^2\), the Gaussian-process
smoothness \(\kappa\), the scale parameter \(b_\alpha\) in the prior for
\(\sigma_\alpha^2\), and the donor-weight dispersion parameter \(\alpha_u\).}
\label{tab:sens-hyper}
\end{table}

\begin{table}[H]
\centering
\vspace{2mm}
\renewcommand{\arraystretch}{1.15}
\small
\begin{tabular}{lcccccc}
\toprule
\textbf{Country}
& \textbf{Baseline}
& \textbf{\makecell{Noise\\only}}
& \textbf{\makecell{GP\\only}}
& \(\boldsymbol{\alpha_u}\) \textbf{only}
& \(\boldsymbol{b_\alpha}\) \textbf{only}
& \textbf{All} \\
\midrule
USA         & 0.00$\pm$0.00 & 0.00$\pm$0.00 & 0.00$\pm$0.00 & 0.00$\pm$0.00 & 0.00$\pm$0.00 & 0.00$\pm$0.00 \\
UK          & 0.00$\pm$0.00 & 0.00$\pm$0.00 & 0.00$\pm$0.00 & 0.00$\pm$0.00 & 0.00$\pm$0.00 & 0.00$\pm$0.00 \\
Austria     & 0.00$\pm$0.01 & 0.00$\pm$0.02 & 0.00$\pm$0.01 & 0.00$\pm$0.01 & 0.00$\pm$0.01 & 0.00$\pm$0.02 \\
Belgium     & 0.00$\pm$0.01 & 0.00$\pm$0.01 & 0.00$\pm$0.01 & 0.00$\pm$0.01 & 0.00$\pm$0.01 & 0.00$\pm$0.01 \\
Denmark     & 0.00$\pm$0.00 & 0.00$\pm$0.00 & 0.00$\pm$0.00 & 0.00$\pm$0.00 & 0.00$\pm$0.00 & 0.00$\pm$0.00 \\
France      & 0.00$\pm$0.01 & 0.00$\pm$0.01 & 0.00$\pm$0.01 & 0.00$\pm$0.01 & 0.00$\pm$0.01 & 0.05$\pm$0.01 \\
\textbf{Italy}       & \textbf{0.12$\pm$0.14} & \textbf{0.14$\pm$0.15} & 0.12$\pm$0.14 & 0.10$\pm$0.13 & \textbf{0.13$\pm$0.14} & 0.11$\pm$0.15 \\
Netherlands & 0.00$\pm$0.00 & 0.00$\pm$0.00 & 0.00$\pm$0.00 & 0.00$\pm$0.00 & 0.00$\pm$0.00 & 0.00$\pm$0.00 \\
Norway      & 0.00$\pm$0.00 & 0.00$\pm$0.00 & 0.00$\pm$0.00 & 0.00$\pm$0.00 & 0.00$\pm$0.00 & 0.00$\pm$0.00 \\
\textbf{Switzerland}
            & \textbf{0.44$\pm$0.05} & \textbf{0.43$\pm$0.05} & \textbf{0.44$\pm$0.05} & \textbf{0.44$\pm$0.05} & \textbf{0.43$\pm$0.05} & \textbf{0.44$\pm$0.05} \\
\textbf{Japan}
            & \textbf{0.37$\pm$0.08} & \textbf{0.37$\pm$0.10} & \textbf{0.37$\pm$0.08} & \textbf{0.38$\pm$0.08} & \textbf{0.37$\pm$0.08} & \textbf{0.37$\pm$0.09} \\
Greece      & 0.01$\pm$0.03 & 0.01$\pm$0.03 & 0.02$\pm$0.04 & 0.02$\pm$0.04 & 0.03$\pm$0.04 & 0.02$\pm$0.03 \\
\textbf{Portugal}    & \textbf{0.05$\pm$0.05} & 0.04$\pm$0.05 & 0.05$\pm$0.05 & 0.05$\pm$0.05 & 0.04$\pm$0.05 & \textbf{0.05$\pm$0.05} \\
Spain       & 0.00$\pm$0.00 & 0.00$\pm$0.00 & 0.00$\pm$0.00 & 0.00$\pm$0.01 & 0.00$\pm$0.00 & 0.00$\pm$0.01 \\
Australia   & 0.00$\pm$0.00 & 0.00$\pm$0.00 & 0.00$\pm$0.00 & 0.00$\pm$0.00 & 0.00$\pm$0.00 & 0.00$\pm$0.00 \\
New Zealand & 0.00$\pm$0.01 & 0.00$\pm$0.02 & 0.00$\pm$0.01 & 0.00$\pm$0.01 & 0.00$\pm$0.01 & 0.00$\pm$0.02 \\
\bottomrule
\end{tabular}
\caption{Stability of posterior donor weights under alternative prior specifications (BASC; West Germany). Each entry reports the posterior mean and standard deviation of \(w_j\). The sensitivity settings correspond to those in Table~\ref{tab:sens-hyper}. Countries selected for the donor set in each sensitivity setting are indicated in bold, based on the threshold \(\bar{\gamma}_j \geq 0.5\). Due to rounding, the weights may not sum to exactly one.}
\label{apdx:weights}
\end{table}

Across these sensitivity settings, the posterior donor weights remain broadly stable.
Switzerland and Japan are consistently selected as the dominant donor units, with nearly unchanged posterior weights across the alternative prior specifications. 
The inclusion of smaller-weight donors, such as Italy and Portugal, varies mildly across settings, suggesting that their posterior inclusion probabilities are close to the selection threshold \(\bar{\gamma}_j\ge 0.5\). Overall, these results suggest that the substantive conclusions of the West Germany analysis are not driven by a particular hyperparameter choice.

\section{Additional graphical results}
\subsection{Additional prediction results} \label{Apped:results1}
Figures \ref{fig:y_2}–\ref{fig:y_5} present additional graphical results for BASC and B-MV.

Figure \ref{fig:y_2} shows the results for the full model with $J=10$, 
illustrating a setting without donor sparsity. In this case, the two methods show nearly identical performance, and the differences in their credible bands are minimal. 

Figures \ref{fig:y_3}–\ref{fig:y_5} correspond to the case $J=30$ under different sparsity patterns.  
Figure \ref{fig:y_3} depicts a highly sparse scenario with $J_s=3$, where BASC exhibits superior performance— the credible bands track the variability of the true values more accurately, whereas B-MV performs noticeably worse.

In contrast, Figures \ref{fig:y_4} and \ref{fig:y_5} show settings with moderate or no sparsity, where BASC and B-MV yield broadly similar performance. 
As reported in Table \ref{tbl:y} of the main text, BASC remains marginally better overall.

\begin{figure}[htbp]
\centering
\includegraphics[width=0.75\textheight]{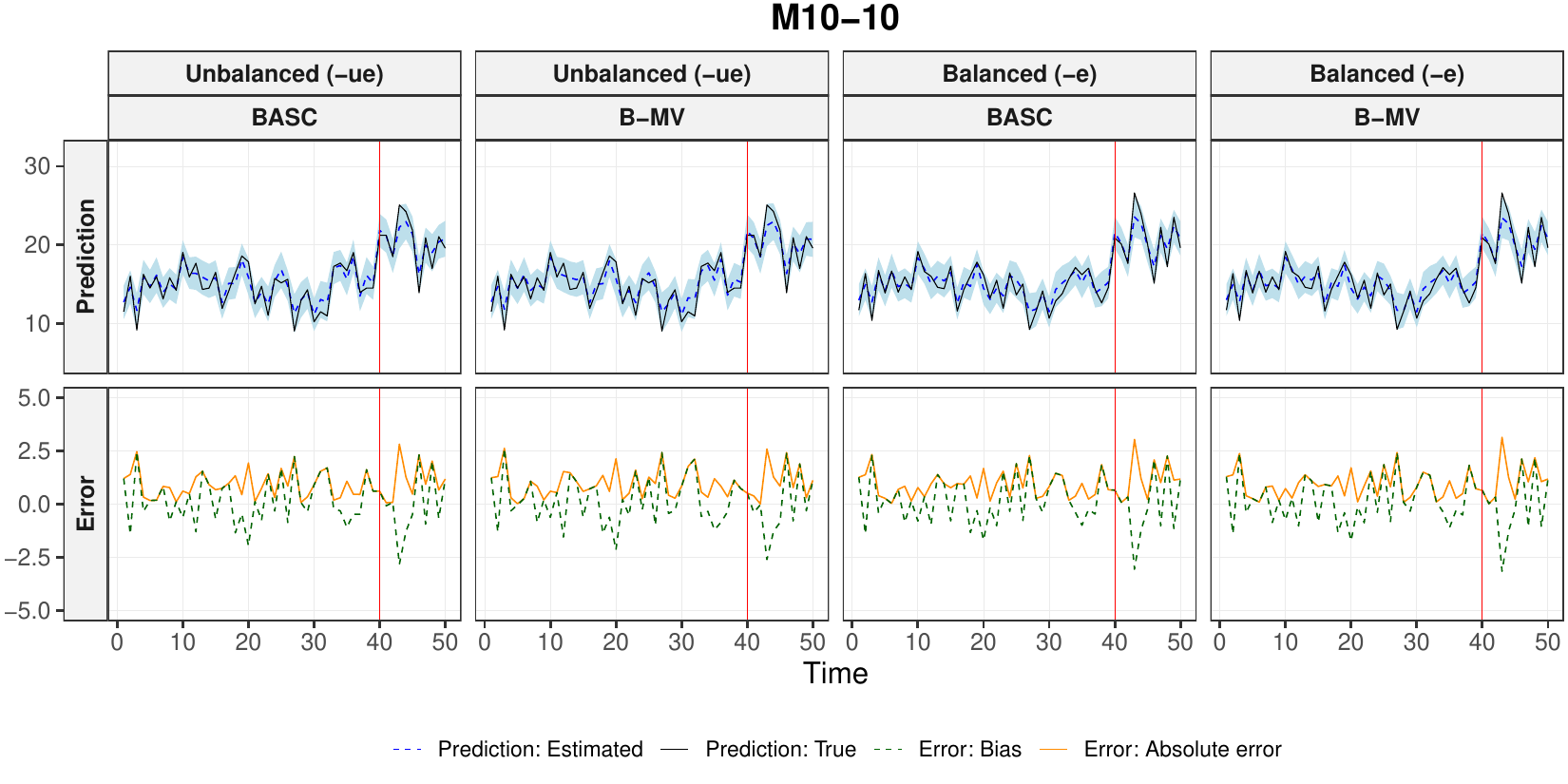}
\caption{M10-10- Model: Posterior predictions and error estimates for BASC and B-MV. The top row shows posterior predictive trajectories, and the bottom row shows the corresponding bias and absolute error. The shaded band represents the 95\% credible interval, and the red vertical line indicates the intervention time point.}
\label{fig:y_2}
\end{figure}

\begin{figure}[htbp]
\centering
\includegraphics[width=0.75\textheight]{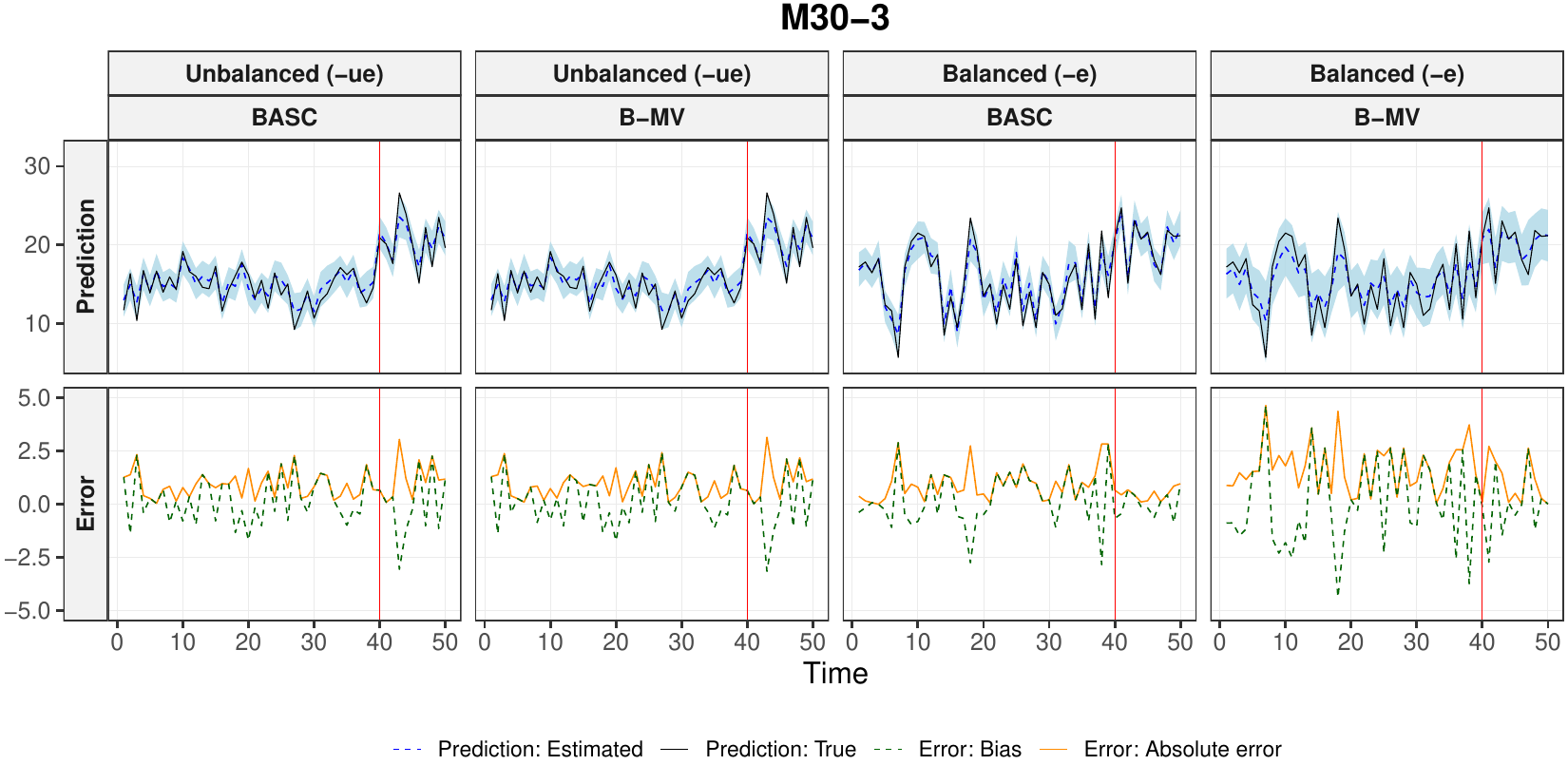}
\caption{M30-3- Model: Posterior predictions and error estimates for BASC and B-MV. The top row shows posterior predictive trajectories, and the bottom row shows the corresponding bias and absolute error. The shaded band represents the 95\% credible interval, and the red vertical line indicates the intervention time point.}
\label{fig:y_3}
\end{figure}

\begin{figure}[htbp]
\centering
\includegraphics[width=0.75\textheight]{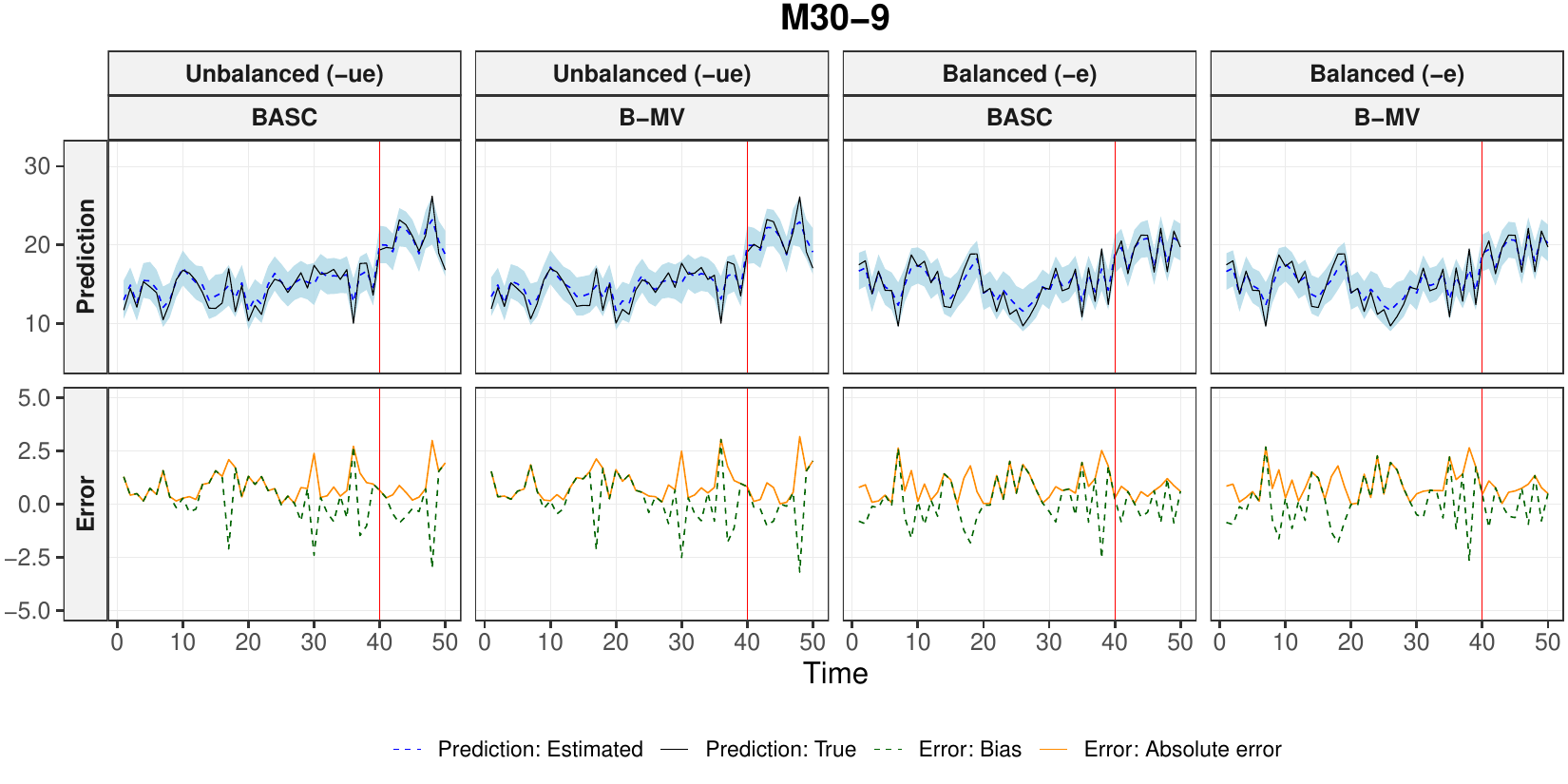}
\caption{M30-9- Model: Posterior predictions and error estimates for BASC and B-MV. The top row shows posterior predictive trajectories, and the bottom row shows the corresponding bias and absolute error. The shaded band represents the 95\% credible interval, and the red vertical line indicates the intervention time point.}
\label{fig:y_4}
\end{figure}

\begin{figure}[H]
\centering
\includegraphics[width=0.75\textheight]{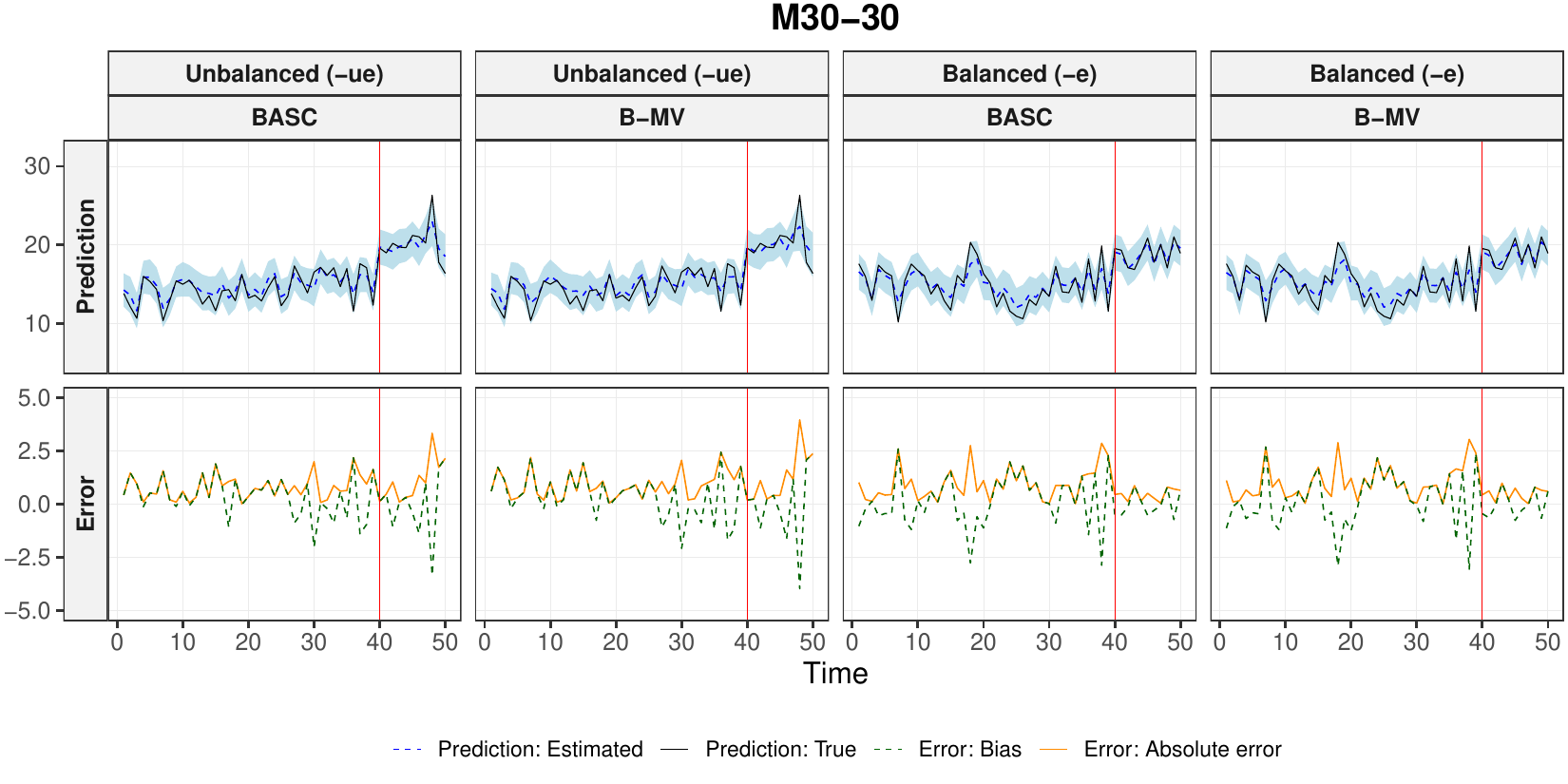}
\caption{M30-30- Model: Posterior predictions and error estimates for BASC and B-MV. The top row shows posterior predictive trajectories, and the bottom row shows the corresponding bias and absolute error. The shaded band represents the 95\% credible interval, and the red vertical line indicates the intervention time point.}
\label{fig:y_5}
\end{figure}

\subsection{Additional weight estimation results} \label{Apped:results3}
The $\mathbf{w}$ estimates for $J=30$ are visualized in Figure \ref{fig:w_s5to10}. Regardless of the simulation setting, B-MV utilizes all donor sets, whereas BASC reflects donor set selection. In the highly sparse scenarios (M30-3-ue and M30-3-e), BASC accurately selects the donor set, although its performance is somewhat less impressive in moderately sparse scenarios (M30-9-ue, M30-9-e) compared to the highly sparse cases.

\begin{figure}[H]
\centering
\includegraphics[width=0.7\textheight]{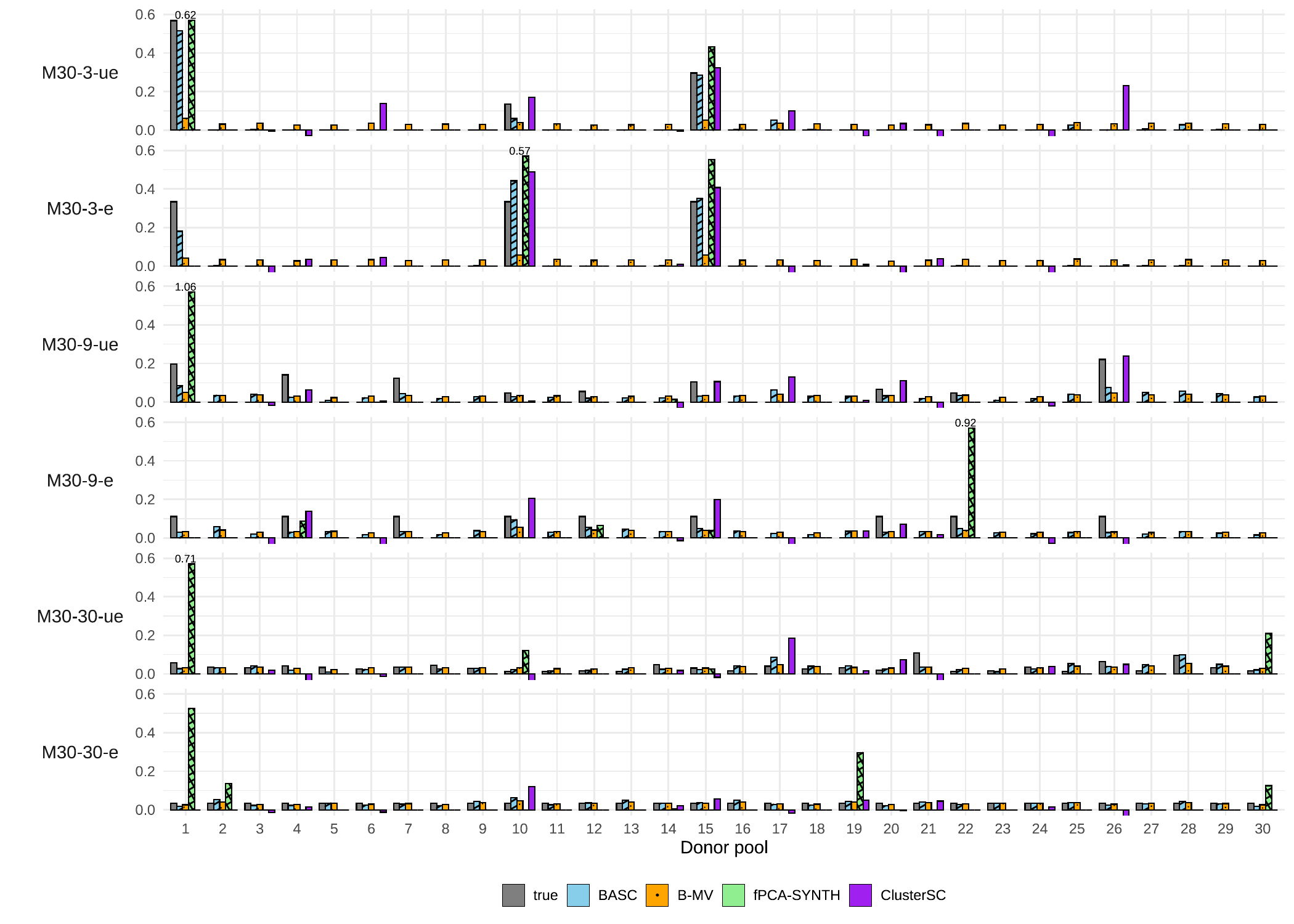}
\caption{Estimated donor weights for the M30- models under the independent outcome setting. 
Each panel corresponds to a different simulation setting and compares the true donor weights with estimates from BASC, B-MV, fPCA-SYNTH, and ClusterSC. 
Gray bars represent the true donor weights; values exceeding 0.57 are annotated above the corresponding bars.
Closer alignment with the gray bars indicates more accurate recovery of the underlying donor-weight structure.}
\label{fig:w_s5to10}
\end{figure}

\newpage

 
\bibliographystyle{apalike} 
\bibliography{reference}

\end{document}